\newcommand{\mathbold}[1]{{\boldsymbol {#1}}}
\newcommand{\mbf}[1]{{\mathbold {#1 }}}
\newcommand{\LEFTRIGHT}[3]{\left#1 {#3} \right#2 }
\newtheorem{lem}{Lemma}
\newtheorem*{lem*}{Lemma}
\newtheorem{thm}{Theorem}
\newtheorem*{thm*}{Theorem}
\newtheorem{prop}{Proposition}
\newtheorem{cor}{Corollary}
\theoremstyle{remark}
\newtheorem*{rem*}{Remark}
\DeclareMathOperator{\tr}{Tr}
\newcommand{\nor}[1]{\left\| #1 \right\|}
\newcommand{\md}[1]{\vert #1 \vert}
\newcommand{\dd}{\mathrm{d}}
\newcommand{\lio}{\mathrm{o}}
\newcommand{\bigo}{\mathrm{O}}
\newcommand{\scr}{s_{\mathrm{c}}}
\newcommand{\nx}[1][]{\md{\mbf x} \if\relax\detokenize{#1}\relax\else^{#1}\fi}
\newcommand{\ee}[1][]{\mathrm{e}\if\relax\detokenize{#1}\relax\else^{#1}\fi}
\newcommand{\ii}{\mathrm{i}}
\newcommand{\ketf}[3]{\ket{f_{#1 , \mbf{#2} , \mbf{#3}}}}
\newcommand{\braf}[3]{\bra{f_{#1 , \mbf{#2} , \mbf{#3}}}}
\newcommand{\prf}[3]{\pi_{#1, \mbf{#2} , \mbf{#3}}}
\newcommand{\ootp}[1]{\frac{1}{(2\pi)^{#1}}}
\newcommand{\ipu}[3][]{\ifthenelse{\isempty{#1}}{\iint \dd\mbf{#2}\,\dd\mbf{#3}}{\smashoperator{\iint\limits_{#1}\fi }\dd\mbf{#2}\,\dd\mbf{#3}}}
\newcommand{\lkin}{L^{\mathrm{kin}}_d}
\newcommand{\lpot}{L^{\mathrm{pot}}_d}
\newcommand{\lcl}{L^{\mathrm{cl}}_d}
\newcommand{\chrf}{\mathbf 1}
\newcommand{\depsilon}{\epsilon}
\DeclareMathOperator{\hess}{\mathrm{Hess}}
\newcommand{\ball}{\mathbb B}
\newcommand{\scases}[3]{\LEFTRIGHT\lbrace.{\!\!\begin{array}{lcl}
#1 & \text{if} & s<\scr \\
#2 & \text{if} & s = \scr \\
#3 & \text{if} & s>\scr 
\end{array} }}
\newcommand{\scrh}{\frac{\scr}{2}}
\newcommand{\oolt}{\frac{2}{2+d}}
\renewcommand{\log}{\ln}
\renewcommand{\star}{*}
\newcommand{\bra}[1]{\left\langle #1 \right\vert}
\newcommand{\ket}[1]{\left\vert #1 \right\rangle }
\newcommand{\brawv}[2][x]{\bra{\ee^{\ii \mbf{#2} \cdot \mbf{#1}}}}
\newcommand{\ketwv}[2][x]{\ket{\ee^{\ii \mbf{#2} \cdot \mbf{#1}}}}
\newcommand{\osc}[1]{\Omega^{\mathrm{sc}}_{#1 }}
\newcommand{\psc}[1]{\Phi^{\mathrm{sc}}_{#1 }}
\renewcommand{\epsilon}{\varepsilon}
\renewcommand{\mbf}[1]{{\mathbold {#1 }}}
\newcommand{\mbx}[1]{\mbf{#1}}
\newcommand{\subalign}[1]{
  \vcenter{
    \Let@ \restore@math@cr \default@tag
    \baselineskip\fontdimen10 \scriptfont\tw@
    \advance\baselineskip\fontdimen12 \scriptfont\tw@
    \lineskip\thr@@\fontdimen8 \scriptfont\thr@@
    \lineskiplimit\lineskip
    \ialign{\hfil$\m@th\scriptstyle##$&$\m@th\scriptstyle{}##$\crcr
      #1\crcr
    }
  }
}
\begin{document}

\title[Generalization of Weyl's Law for Relative Traces of Singular Potentials]{A Generalization of Weyl's Asymptotic Formula for the Relative Trace of Singular Potentials}
\author{Jakob Ullmann}
\address{Ludwig Maximilian University of Munich, Department of Mathematics, Theresienstrasse 39, D-80333 Munich, Germany}
\email{jakob.ullmann@gmail.com}

\begin{abstract}
By Weyl's asymptotic formula, for any potential $V$ whose negative part $V_-$ is an $L^{1+d/2}$-function,
\begin{align*}  \tr [-h^2 \Delta + V]_- &= L_d^{\mathrm{cl}} h^{-d} \int \dd\mbf x\,[V]_-^{1+\frac d 2} + \lio (h^{-d})_{h \to 0} ,  \end{align*}
with the semiclassical constant $L^{\mathrm{cl}}_d = 2^{-d} \pi^{-d/2} / \Gamma (2 + \frac d 2)$. In this paper, we show that, even if $[V_1]_-, [V_2]_- \notin L^{1+d/2}$, but the difference $[V_1]_-^{1+d/2}-[V_2]_-^{1+d/2}$ is integrable, then we still have the asymptotic formula 
\[ \tr [-h^2 \Delta + V_1 ]_- - \tr [-h^2 \Delta + V_2 ]_- = L^{\mathrm{cl}}_{d} h^{-d} \int \dd\mbf x\,([V_1]_-^{1+\frac d 2}-[V_2]_-^{1+\frac d 2}) + \lio (h^{-d})_{h\to 0} . \]
 This is a generalization of Weyl's formula in the case that $\tr [-h^2 \Delta + V_1]_-$ and $\tr [-h^2\Delta + V_2]_-$ are seperately not of order $\bigo (h^{-d})$.
\end{abstract}

\maketitle

\setcounter{tocdepth}{2}
\tableofcontents

\section{Introduction}

By the heuristical principle of the semiclassical approximation, which is the core of the Bohr--Sommerfeld quantization rules in `old quantum mechanics,' in the limit $h \downarrow 0$, every phase space cell of unit volume $(2 \pi )^d$ where the classical Hamiltonian function $H^{\mathrm{cl}} (\mbf x, \mbx p ) = \vert\mbf p\vert^2 + V(\mbf x)$, with $\mbf p = h \mbf k$, is negative will hold one negative energy eigenstate of the Schrödinger operator $H = -h^2 \Delta + V(\mbf x)$. Accordingly, the sum of the first moments of the negative eigenvalues $E_i$ would be
\begin{align*}
\sum_i \vert E_i \vert &\cong \smashoperator{\iint\limits_{\mathbb R^d \times \mathbb R^d}} \frac{\dd\mbf x\,\dd\mbf k}{(2 \pi )^{d}}\, [ h^2 \vert\mbf{k}\vert^2 + V(\mbf x) ]_- \\*
&= L^{\mathrm{cl}}_{d} h^{-d}\,\int \dd\mbf x\, [V(\mbf x)]_-^{1 + \frac d 2} ,
\end{align*}
where $L^{\mathrm{cl}}_{d}$ is the \emph{semiclassical constant},
\[ L^{\mathrm{cl}}_{d} = \frac{1}{(4 \pi)^{\frac d 2 } \Gamma (1 + d + \frac d 2 )} . \]
The sum $\sum \vert E_i \vert$ may be written as
\[ \tr [-h^2 \Delta + V ]_- , \]
where the negative part $[-h^2 \Delta + V ]_-$ is understood as the function $- H\,\Theta(- H)$ of the self-adjoint operator $H = -h^2 \Delta + V$, as defined by spectral calculus. In an analogous fashion, by the negative part $[V(\mbf x)]_-$ we always mean a non-negative number.

In fact, this can be proven rigorously as an asymptotic formula, known as the \emph{Weyl asymptotics} (see Weyl \cite{Weyl1911} for the original paper about the number of eigenvalues of the Dirichlet Laplacian, Lieb and Loss \cite{LiebLoss2001} for a coherent state proof of the version for the sum of negative eigenvalues of Schrödinger operators, and Reed and Simon \cite{ReedSimon1978} for a Dirichlet--Neumann bracketing proof of the version for the number of negative eigenvalues of Schrödinger operators).

\begin{thm*}[Weyl asymptotics]
Let $V_+ \in L^1_{\mathrm{loc}} (\mathbb R^d)$ and $V_- \in L^{1 + \frac d 2 }(\mathbb R^d)$, then in the limit $h \downarrow 0$,
\[ \tr [-h^2 \Delta + V]_- = L^{\mathrm{cl}}_{d} h^{-d}\,\int\dd\mbf x\,[V(\mbf x)]_-^{1 + \frac d 2} + \lio (h^{-d}).  \]
\end{thm*}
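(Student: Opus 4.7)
The plan is to follow a semiclassical coherent-state analysis in the spirit of Lieb--Loss, obtaining matching lower and upper bounds on $\tr[-h^2\Delta + V]_-$ for regular $V$, and then to extend the result to general $V_- \in L^{1 + d/2}$ and $V_+ \in L^1_{\mathrm{loc}}$ by a density argument based on the Lieb--Thirring inequality.

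Fix a real Schwartz function $g$ on $\mathbb R^d$ with $\nor{g}_{L^2} = 1$ and introduce the coherent states
\[ \fuf{h}{p}{q}(\mbf{x}) = h^{-d/4}\,g\br{\frac{\mbf{x} - \mbf{q}}{\sqrt h}}\,\ee^{\ii\,\mbf{p} \cdot \mbf{x} / h}, \]
which satisfy the resolution of identity $\ootp{d}\,h^{-d}\iint\ketf{h}{p}{q}\braf{h}{p}{q}\,\dd\mbf{p}\,\dd\mbf{q} = I$ and have the Wick symbol
\[ \braf{h}{p}{q}(-h^2\Delta + V)\ketf{h}{p}{q} = \md{\mbf{p}}^2 + (V \star \md{g_h}^2)(\mbf{q}) + h\nor{\nabla g}_{L^2}^2, \]
where $\md{g_h}^2(\mbf{x}) = h^{-d/2}\md{g(\mbf{x}/\sqrt h)}^2$ is an approximate identity as $h \downarrow 0$.

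For the lower bound I would test the variational principle $\tr[-h^2\Delta + V]_- = \sup_{0 \le \gamma \le I}(-\tr((-h^2\Delta + V)\gamma))$ against the coherent-state trial density matrix obtained by integrating $\ketf{h}{p}{q}\braf{h}{p}{q}/(2\pi h)^d$ over the set where the Wick symbol is negative; this $\gamma$ lies in $[0, I]$ by the resolution of identity, and after the substitution $\mbf{p} = h\mbf{k}$ the momentum integral produces $\lcl\,h^{-d}\int\sbr{-(V \star \md{g_h}^2)(\mbf{q}) - h\nor{\nabla g}_{L^2}^2}_+^{1 + d/2}\,\dd\mbf{q}$, which for $V$ bounded, continuous, and compactly supported converges to $\lcl\,h^{-d}\int[V]_-^{1 + d/2}\,\dd\mbf{q} + \lio(h^{-d})$ by dominated convergence. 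For the matching upper bound I would use Dirichlet--Neumann bracketing: partitioning $\mathbb R^d$ into cubes $\Omega_j$ of side $L$, the operator inequality $-h^2\Delta \ge \bigoplus_j(-h^2\Delta_{\Omega_j}^{\mathrm N})$ yields $\tr[-h^2\Delta + V]_- \le \sum_j \tr_{\Omega_j}[-h^2\Delta_{\Omega_j}^{\mathrm N} + \inf_{\Omega_j} V]_-$, the explicit Neumann spectrum $h^2(\pi/L)^2\md{\mbf{n}}^2$ for $\mbf{n} \in \mathbb N_0^d$ reduces the right-hand side to a Riemann sum over cubes, and letting $L \downarrow 0$ suitably with $h$ (e.g.\ $L = h^{1/2}$) recovers the same Weyl expression modulo $\lio(h^{-d})$.

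To pass from regular to general $V$, I would approximate $V_-$ in $L^{1 + d/2}$ by smooth compactly supported $\tilde V_-$, control $|\tr[-h^2\Delta + V]_- - \tr[-h^2\Delta + \tilde V]_-|$ uniformly in $h$ via the Lieb--Thirring inequality $\tr[-h^2\Delta + W]_- \le C_d\,h^{-d}\int[W]_-^{1 + d/2}\,\dd\mbf{x}$, and handle $V_+$ by monotone truncation at a level $M$ using the monotonicity $V \mapsto -\tr[-h^2\Delta + V]_-$. The main obstacle is the joint choice of scales: the regularisation of $V$, the cube side $L$, and $h$ must all be coupled so that every error contribution is simultaneously $\lio(h^{-d})$; the Lieb--Thirring inequality is precisely the tool decoupling the approximation error from the $h \downarrow 0$ asymptotic.
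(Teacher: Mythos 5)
The paper does not give its own proof of this statement; it is quoted as classical background, with pointers to Lieb--Loss (a coherent-state proof for the first moment) and Reed--Simon (a Dirichlet--Neumann bracketing proof for the zeroth moment). So there is no internal proof to compare against. Your proposal is a sound hybrid of exactly those two cited strategies — a coherent-state trial density for the lower bound, Neumann bracketing for the upper bound, and a Lieb--Thirring-based density step — and the outline is essentially correct.

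Two clarifications worth making. First, the lower bound actually requires no regularity or truncation at all: the trial density $\gamma$ gives $-\tr\sbr{(-h^2\Delta+V)\gamma} = \lcl h^{-d}\int\sbr{-(V\star\md{g_h}^2) - h\nor{\nabla g}_{L^2}^2}_+^{1+d/2}\,\dd\mbf q$, and for general $V_-\in L^{1+d/2}$, $V_+\in L^1_{\mathrm{loc}}$ one can pass to the limit directly by combining Jensen's inequality (which gives the $h$-uniform domination $\int(V_-\star\md{g_h}^2)^{1+d/2}\le\int V_-^{1+d/2}$) with Fatou's lemma and a.e.\ convergence of the mollification. Second, the remark about handling $V_+$ \emph{by monotone truncation} points in the wrong direction for the lower bound: truncating or dropping $V_+$ decreases $V$ pointwise and hence increases $\tr[-h^2\Delta+V]_-$, so it yields $\tr[-h^2\Delta+V]_-\le\tr[-h^2\Delta-V_-]_-$, which is exactly what the Neumann-bracketing upper bound should be applied to (bracket $-V_-$, not $V$). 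That monotonicity cannot produce a lower bound, so be sure to attach it only to the upper-bound half of the sandwich. With those adjustments, the convexity splitting $-h^2\Delta = (1-\delta)(-h^2\Delta)+\delta(-h^2\Delta)$ together with Lieb--Thirring on the small piece gives the uniform-in-$h$ continuity of $h^d\tr[-h^2\Delta-W]_-$ in $\nor{W}_{L^{1+d/2}}$ that your density step invokes, and the proof closes.
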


Closely related to the Weyl asymptotics is the famous \emph{Lieb--Thirring inequality} (see Lieb and Thirring \cite{LiebThirring1975} for the original paper, and Lieb and Seiringer \cite{LiebSeiringer2010} for a review).

\begin{thm*}[Lieb--Thirring inequality]
There exists a universal constant $L_{d} \geq L^{\mathrm{cl}}_{d}$, the \emph{Lieb--Thirring constant}, such that for all $V$ with $V^+ \in L^{1}_{\mathrm{loc}}(\mathbb R^d)$ and $V_- \in L^{1 + \frac d 2 }(\mathbb R^d)$,
\[ \tr [-h^2 \Delta + V]_- \leq L_{d} h^{-d}\,\int\dd\mbf x\,[V(\mbf x)]_-^{1 + \frac d 2} .  \]
\end{thm*}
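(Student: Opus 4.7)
My plan is to prove the Lieb--Thirring inequality by combining a scaling argument with the Birman--Schwinger principle and a Schatten-norm bound on the resulting kernel. First I would eliminate the semiclassical parameter by the unitary dilation $(U_h \psi)(\mbf x) = h^{-d/2}\psi(\mbf x/h)$, under which $-h^2\Delta + V(\mbf x)$ is unitarily equivalent to $-\Delta + V(h\mbf y)$ on $L^2(\mathbb R^d_{\mbf y})$; a subsequent change of variables in the right-hand side produces the factor $h^{-d}$, so it suffices to treat $h = 1$. Since $V_+ \geq 0$, the min--max principle gives $\tr[-\Delta + V]_- \leq \tr[-\Delta - V_-]_-$, so I may further assume $V \leq 0$ throughout.

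Writing the trace via the layer-cake representation
\[ \tr[-\Delta - V_-]_- = \int_0^\infty N(-E)\,\dd E , \]
where $N(-E)$ counts the eigenvalues of $-\Delta - V_-$ strictly below $-E$, the Birman--Schwinger principle identifies $N(-E)$ with the number of eigenvalues $\geq 1$ of the compact operator
\[ K_E := V_-^{1/2} (-\Delta + E)^{-1} V_-^{1/2} . \]
Chebyshev's inequality then yields $N(-E) \leq \tr K_E^p$ for any $p \geq 1$ such that $K_E$ lies in the Schatten class $S^p$. I would bound $\tr K_E^p \leq \nor{V_-^{1/2}(-\Delta + E)^{-1/2}}_{S^{2p}}^{2p}$ via the Kato--Seiler--Simon estimate
\[ \nor{f(\mbf x)\,g(-\ii\nabla)}_{S^q} \leq (2\pi)^{-d/q}\,\nor{f}_{L^q}\,\nor{g}_{L^q} \qquad (q \geq 2), \]
applied with $f = V_-^{1/2}$, $g(\mbf k) = (|\mbf k|^2 + E)^{-1/2}$, and $q = 2p$. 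The $L^{2p}$-norm of $g$, computed by rescaling $\mbf k = \sqrt E\,\mbf y$, is finite for $p > d/2$ and scales as $E^{(d - 2p)/(4p)}$, producing an estimate of the shape
\[ N(-E) \leq C_{d,p}\,E^{d/2 - p} \int \dd\mbf x\,[V(\mbf x)]_-^p . \]

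Inserting this into the layer-cake integral and optimizing $p$, supplemented by the trivial alternative $N(-E) \leq N(0)$ for small $E$, would yield a bound of the required form $L_d \int [V]_-^{1 + d/2}\,\dd\mbf x$ with a finite universal constant. That $L_d \geq L^{\mathrm{cl}}_d$ is automatic: for any fixed $V$ with $V_- \in L^{1+d/2}$, the Weyl asymptotic formula stated above forces $L_d h^{-d}\int [V]_-^{1+d/2} \geq \tr[-h^2\Delta + V]_-$, whose right-hand side is asymptotically $L^{\mathrm{cl}}_d h^{-d}\int [V]_-^{1+d/2}$ as $h\downarrow 0$; dividing by $h^{-d}\int[V]_-^{1+d/2}$ and sending $h\downarrow 0$ gives $L_d \geq L^{\mathrm{cl}}_d$.

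The main obstacle I foresee is the borderline integrability in $E$: at the natural choice $p = 1 + d/2$ the exponent $d/2 - p$ equals $-1$ and the layer-cake integral diverges at both ends, so the proof either requires splitting the $E$-range into two pieces with different values of $p$ (combined with a Cwikel--Lieb--Rozenblum bound on $N(0)$), or invokes the Aizenman--Lieb monotonicity principle to deduce the $\gamma = 1$ Lieb--Thirring inequality from higher-moment bounds $\tr[-\Delta + V]_-^\gamma \leq L_{d,\gamma} \int [V]_-^{\gamma + d/2}$ which can be obtained more cleanly for $\gamma > 3/2$. An alternative route avoiding Birman--Schwinger entirely is the coherent-state / Thomas--Fermi duality using the Lieb--Thirring kinetic-energy inequality $\tr(-\Delta\,\gamma) \geq K_d \int \rho_\gamma^{1 + 2/d}\,\dd\mbf x$ for density matrices $0 \leq \gamma \leq 1$, from which the sum of negative eigenvalues follows by a Legendre-transform argument with respect to $\rho_\gamma \mapsto V_-$; in either approach, the resulting constant $L_d$ is in general far from the conjectured sharp value.
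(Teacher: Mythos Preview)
The paper does not prove this theorem. The Lieb--Thirring inequality is quoted as a classical result, with a citation to the original paper of Lieb and Thirring and to the monograph of Lieb and Seiringer; it is used throughout as a black box (most notably in Lemma~\ref{lem:ltsing} and Lemma~\ref{lem:traceest}), but no proof is given or sketched anywhere in the text. So there is nothing to compare your proposal against.

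As for the proposal itself, the outline is the standard Birman--Schwinger route and is essentially sound as a sketch. You correctly identify the decisive obstruction: with the naive choice $p = 1 + d/2$ the bound $N(-E) \leq C_{d,p}\,E^{d/2-p}\int V_-^p$ produces the non-integrable weight $E^{-1}$ in the layer-cake integral, and some additional device is needed. Of the remedies you list, the cleanest for the case $\gamma = 1$ in all dimensions is probably the Aizenman--Lieb monotonicity argument (starting from a higher moment), or the duality with the kinetic inequality $\tr(-\Delta\,\gamma) \geq K_d \int \rho_\gamma^{1+2/d}$, which the paper also quotes without proof. Your remark that $L_d \geq L^{\mathrm{cl}}_d$ follows from the Weyl asymptotics is correct. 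One small caution: the splitting ``use CLR for small $E$'' only works for $d \geq 3$, since the CLR bound fails in $d = 1, 2$; in low dimensions one must instead use a low-moment Lieb--Thirring inequality (with exponent $\gamma > 0$ for $d=2$, $\gamma \geq 1/2$ for $d=1$) to control the small-$E$ region.
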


It is notoriously a major open problem in mathematical physics to prove the Lieb--Thirring inequality for $L_{d} = L^{\mathrm{cl}}_{d}$, which is conjectured to hold true for $d \geq 3$, known as the Lieb--Thirring conjecture.

Both the Weyl asymptotics and the Lieb--Thirring inequalities have generalizations for the sums of lower moments of eigenvalues; in fact, we shall use the Lieb--Thirring inequality for low moments in our proof, see below for its statement.

The semiclassical approximation principle suggests that for two Schrödinger operators $H_1 = -h^2 \Delta + V_1$ and $H_2 = -h^2 \Delta + V_2$, in the limit $h \downarrow 0$,
\begin{equation} \label{eq:conj}\tr [H_1]_- - \tr [H_2]_- = L^{\mathrm{cl}}_{d} h^{-d}\;\int\dd\mbf x\,\big( [V_1(\mbf x)]_-^{1 + \frac d 2 } - [V_2(\mbf x)]_-^{1 + \frac d 2 } \big) + \lio (h^{-d}) . \end{equation}
When $[V_1]_-$ and $[V_2]_-$ are both in $L^{1 + \frac d 2 }$, this follows from the Weyl asymptotics. However, there are important cases in which this condition is not fulfilled, but nevertheless $[V_1(\mbf x)]_-^{1 + \frac d 2 } - [V_2(\mbf x)]_-^{1 + \frac d 2 }$ is integrable. The semiclassical approximation principle suggests that, in those coses, $h^d \tr [-h^2 \Delta  + V_1]_-$ and $h^d \tr [-h^2 \Delta + V_2]_-$ tend to $+ \infty$ in the limit $h \downarrow 0$, but there difference has a finite limit.

This kind of result has an application in mathematical physics in the study of the ground state energy of large Coulomb systems, known as the \emph{Scott correction}. In this context, $\tr [-h^2 \Delta + V_1 ]_-$ is compared with $\tr [-h^2 \Delta + V_2 ]_-$, where $V_1$ is the Thomas--Fermi potential and $V_2 = - \nx[-1] + \mu$, with $\mu > 0$ a constant chemical potential. In the two-dimensional case, the Coulomb potential $- \nx[-1]$ is singular in the sense that it is not in $L^{2}_{\mathrm{loc}}$. See Nam, Portmann and Solovej \cite{Nam2012}, where the authors prove this result in $d = 2$ for potentials with Coulomb-like singularities.

In a similar spirit, Frank, Lewin, Lieb and Seiringer \cite{FrankSeiringer} proved a Lieb--Thirring-like inequality for the comparison of a potential $V-\mu$ with the constant potential $-\mu$ to bound the energy cost to make a hole in the Fermi sea.

In this paper, we give explicit conditions under which we can prove (\ref{eq:conj}) in two and three dimensions. Furthermore, the method generalizes to all $d \geq 2$; only the parameter adjustment is different. Our method generalizes the proof in \cite{Nam2012}.

\bigskip

\noindent\textbf{Acknowledgement.} This work emerged from the author's Master thesis in the graduate program \emph{Theoretical and Mathematical Physics} at LMU and TUM. The author would like to express his gratitude to his supervisor, Prof. Phan Thành Nam, for his patient guidance.

\section{Main Results}

We consider two Schrödinger operators
\begin{align*}
H_1 &= -h^2 \Delta + V_1 \\*
H_2 &= -h^2 \Delta + V_2 
\end{align*}
on the Hilbert space $\mathfrak{H} = L^2(\mathbb R^d )$, in particular we are interested in $d = 3$ and $d = 2$. The Schrödinger operators shall be defined as quadratic forms, where the potentials $V_1, V_2$ are assumed to be $C^1 (\mathbb R^d \setminus \lbrace 0 \rbrace )$-functions. Furthermore, we require them to fulfill the conditions
\begin{align}
\LEFTRIGHT\lbrace. {
  \begin{array}{rcl} 
    \vert V(\mbf x) \vert &\leq& C\text{ for }\nx \geq 1\text{ and }\displaystyle \sup_{\nx \geq L} [V(\mbf x)]_- \rightarrow 0\text{ as }L \to \infty \\ 
    \vert \nabla V (\mbf x) \vert & \leq & \LEFTRIGHT\lbrace. {\begin{array}{ll}
                                   C \nx[-s-1] : & \text{for } \nx \leq 1 \\
                                   C \nx[-S-1] : & \text{for } \nx \geq 1
                                \end{array} }
                                \\
    \vert V_1(\mbf x) - V_2(\mbf x) \vert & \leq & C \nx[-r]\quad \text{for}\; \nx \leq 1 ,
  \end{array}
} \label{eq:conditions}
\end{align}
where an equation involving $V$ means that it shall hold for both $V_1$ and $V_2$. The parameters $s, S, r$ will be specified later.

Note in particular that in the last equation we take the supremum only of the negative parts $\lbrace [V(\mbf x)]_- \rbrace_{\nx \geq L }$, not the function values. The function values $V(\mbf x)$ may, and in many practical applications will, have a positive limit inferior, most notably through the presence of a cutoff (i.e., a positive chemical potential); in these cases, the negative parts $V_-$ have compact support.

We shall assume the parameter $s$ in (\ref{eq:conditions}) to fulfill $1 \leq s < s_{\mathrm{max}} < 2$. The condition $s < 2$ guarantees that $H_1$ and $H_2$ are well-defined as semi-bounded self-adjoint operators with essential spectrum $\sigma_{\mathrm{ess}} (H_1 ), \sigma_{\mathrm{ess}} (H_2) \subset [0, \infty)$. If (\ref{eq:conditions}) is fulfilled with $s < 1$, we can still put $s = 1$ as a non-optimal choice. However, we only optimize our parameters for the case $s \geq 1$, as we are interested in singular potentials, not in obtaining optimal error bounds for regular potentials. 

However, $s < 2$ does not guarantee that $V_- \in L^{1+\frac d 2}(\mathbb R^d)$; that is implied only for
\[ s < \scr := \frac{2 d}{d + 2} , \]
i.e. $\scr = 1$ for $d=2$ and $\scr = \frac 6 5$ for $d = 3$. Our method allows for $s_{\mathrm{max}} > \scr$. Our theorems are thus non-trivial and extend Weyl's asymptotic law.

Our method works in all dimensions $d \geq 2$, but let us restrict the presentation to three and two dimensions for simplicity.

In this paper, we shall prove the following generalization of Weyl's law.

\begin{thm}[Relative Weyl law in three dimensions]
\label{thm:m3}
Let $d = 3$. Assume (\ref{eq:conditions}), $1 \leq s < \frac{62}{45} = 1.377\dots$, $S > \frac 6 5$ and
\begin{equation}  r < \min \LEFTRIGHT\lbrace\rbrace {s , \frac{6 (45 s-62)}{25 (7 s-10)} } . \label{eq:condr3} \end{equation}
Then, there exists an $\eta > 0$ such that, in the limit $h \downarrow 0$,
\[
\tr [- h^2 \Delta + V_1  ]_- - \tr [- h^2 \Delta + V_2  ]_- = \frac{1}{15 \pi^2 h^3} \,\int \dd\mbf x\;\big( [ V_1(\mbf x) ]_-^{\;\frac 5 2} - [ V_2 (\mbf x ) ]_-^{\;\frac 5 2 } \big) +  \bigo (h^{-3 + \eta }).
 \]
\end{thm}

\begin{rem*}
More precisely, we will show that the three-dimensional asymptotic holds for all $\eta < \min \lbrace \eta_{\mathrm{sc}}, \eta_{\mathrm{loc}}, \eta_{\mathrm{cutoff}}  \rbrace$, where
\begin{align*}
 \eta_{\mathrm{sc}} &= \LEFTRIGHT\lbrace. {  \begin{array}{ll}
    \displaystyle \infty : &  \text{if}\;1 \leq s \leq \frac 6 5   \\[.2em]
    \displaystyle \frac{175 r s-250 r-350 s^2+425 s+82}{5 (2-s) (10 s - 5 r + 1)} : & \text{if}\; s > \frac 6 5 .
 \end{array}  }  \\
 \eta_{\mathrm{loc}} &= \LEFTRIGHT\lbrace. {  \begin{array}{ll}
    \displaystyle 2 - \frac{8}{5 (2-r)} : &  \text{if}\;1 \leq s \leq \frac 6 5   \\[.2em]
    \displaystyle \frac{175 r s-250 r-270 s+372}{25 (2 - s) (2 - r)} : & \text{if}\; s > \frac 6 5 .
 \end{array}  }     \\
  \eta_{\mathrm{cutoff}} &= \LEFTRIGHT\lbrace. {  \begin{array}{ll}
    \displaystyle \frac 5 3 \frac{S - \frac 6 5}{S - \frac 2 3}: &  \text{if}\;\frac 6 5 < S \leq \frac {14} 9   \\[.2em]
    \displaystyle \infty : & \text{if}\; S > \frac{14} 9 .
 \end{array}  } 
\end{align*}
This condition implies that $r < \min \lbrace s, \frac 3 2 (2 - s) \rbrace$, which, as we will prove, ensures the well-definedness of the $\dd\mbf x$-integral. 
\end{rem*}

Furthermore, we prove a two-dimensional version.

\begin{thm}[Relative Weyl law in two dimensions]
\label{thm:m2}
Let $d = 2$. Assume (\ref{eq:conditions}), $1 \leq s < \frac{5}{4} = 1.25$, $S > 1$ and
\[  r < \min \LEFTRIGHT\lbrace\rbrace { \frac{5 - 4 s}{4 - 3 s} , \frac{-5s^2 + 8s - 2}{2 - s} } . \]
Then, there exists an $\eta > 0$ such that, in the limit $h \downarrow 0$,
\[
\tr [- h^2 \Delta + V_1  ]_- - \tr [- h^2 \Delta + V_2  ]_- = \frac{1}{8 \pi^2 h^2} \,\int \dd\mbf x\;\big( [ V_1(\mbf x) ]_-^{2} - [ V_2 (\mbf x ) ]_-^{2 } \big) +  \bigo (h^{-2 + \eta }).
\]
\end{thm}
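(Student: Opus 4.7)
The plan is to generalize the strategy of Nam--Portmann--Solovej \cite{Nam2012} to this broader class of singularities: separate a singular region near the origin from a regular region at a carefully chosen $h$-dependent scale, treat each with the semiclassical tool suited to its scale, and exploit the fact that $V_1 - V_2$ is less singular than either $V_i$ individually (i.e.\ $r < s$) so that subtracting recovers an $o(h^{-2})$ error in the relative trace even when neither $\tr [H_i]_-$ is of order $\bigo(h^{-2})$ on its own.

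First I fix an $h$-dependent length $\ell = h^\alpha$ with $\alpha \in (0,1)$ to be determined, together with smooth cutoffs $\phi_\ell, \chi_\ell$ satisfying $\phi_\ell^2 + \chi_\ell^2 = 1$, with $\phi_\ell$ supported in $\{\nx \leq 2 \ell\}$ and $\chi_\ell$ supported in $\{\nx \geq \ell\}$. The IMS localization formula,
\[ H_i = \phi_\ell H_i \phi_\ell + \chi_\ell H_i \chi_\ell - h^2 \bigl( |\nabla \phi_\ell|^2 + |\nabla \chi_\ell|^2 \bigr) , \]
decouples $\tr [H_i]_-$ into an inner and an outer contribution, up to a localization error of order $h^2 \ell^{-2}$. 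On the outer region $\{\nx \geq \ell\}$, integrating the gradient bound gives $[V_i]_- \lesssim \ell^{-s}$ near $\nx \sim \ell$, while the condition $S > 1 = \scr$ forces $[V_i]_- \in L^{2}$ there with norm an explicit negative power of $\ell$; the coherent-state proof of the classical Weyl asymptotics à la Lieb--Loss \cite{LiebLoss2001} then applies to each $H_i$ separately, contributing the expected semiclassical bulk $L^{\mathrm{cl}}_2 h^{-2} \int_{\nx \geq \ell} [V_i]_-^{\;2}$ plus a remainder tracked in powers of $h/\ell^{s}$ and the cutoff scale associated with $S$.

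The inner region is where the real work lies: each $\tr [\phi_\ell H_i \phi_\ell]_-$ is individually too large, so the difference must be estimated directly. I would use coherent states at the finer scale $h/\ell^{s}$ (at which kinetic and potential energies balance where $[V_i]_- \sim \ell^{-s}$), apply the Berezin--Lieb lower bound and a coherent-state trial-state upper bound to each $H_i$, and subtract. The semiclassical remainders for the two traces cancel only in their leading form, and to control the subleading discrepancy I would use $|V_1 - V_2| \leq C \nx[-r]$ with $r < s$, which after integration against a coherent-state density on scale $h/\ell^s$ produces a gain of the type $\ell^{s-r}$ over the naive estimate. Low-energy contributions and singular-core contributions that the coherent-state method does not directly reach would be absorbed using the low-moment Lieb--Thirring inequality alluded to in the introduction, keeping them below $h^{-2+\eta}$ for a suitable $\eta > 0$.

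The concluding step is to pick $\alpha$ together with the coherent-state bandwidth in the inner region so that the localization error $h^2 \ell^{-2}$, the outer semiclassical remainder, the inner difference remainder and the cutoff tail from $S > 1$ all beat $h^{-2}$ by one and the same positive power $\eta$. The principal obstacle is precisely this multi-way balance: the inner and outer regions pull $\ell$ in opposite directions, and the two arithmetic upper bounds on $r$ in the theorem --- $(5-4s)/(4-3s)$ and $(-5s^2 + 8s - 2)/(2-s)$ --- are the feasibility conditions for those two competing remainders to simultaneously admit a positive exponent, in direct analogy with the three $\eta_{\mathrm{sc}}, \eta_{\mathrm{loc}}, \eta_{\mathrm{cutoff}}$ listed in the remark after Theorem \ref{thm:m3}.
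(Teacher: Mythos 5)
Your decomposition into an inner singular region and an outer semiclassical region, with a cutoff scale $\ell = h^\alpha$ and IMS localization, matches the broad architecture of the paper, and you correctly identify the role of $|V_1 - V_2| \lesssim \nx[-r]$ and the interpretation of the two constraints on $r$ as feasibility conditions for competing remainders. But there is a genuine gap in the mechanism you propose for the inner region, and it is precisely where the paper's novelty lies.

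You propose to run a coherent-state analysis (Berezin--Lieb below, trial states above) on $\{\nx \leq 2\ell\}$ for each $H_i$ and then subtract, claiming the remainders ``cancel in their leading form'' and that what is left is absorbed by low-moment Lieb--Thirring. Neither step holds. The coherent-state remainders are controlled by quantities such as $\tau\,\|\nabla V_i\,\chrf_{\{\nx\leq 2\ell\}}\|_\infty$, and since $|\nabla V_i| \sim \nx^{-s-1}$ is unbounded near the origin, these remainders are infinite for any fixed convolution width $\tau$; they do not cancel upon subtraction because they depend on $\nabla V_1$ and $\nabla V_2$ separately. Likewise, the ``singular-core'' contribution to each $\tr[\phi_\ell H_i \phi_\ell]_-$ individually is of order $h^{-2s/(2-s)}$, which for $s > \scr = 1$ \emph{exceeds} $h^{-2}$ (this is the content of Lemma~\ref{lem:landaus}), so low-moment Lieb--Thirring cannot push these individual traces down to $\bigo(h^{-2+\eta})$; only the difference of the traces is small. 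The paper's actual argument in the quantum zone uses no coherent states at all: it takes $\gamma = \chrf_{(-\infty,0)}[\Phi^{\mathrm q} H_1 \Phi^{\mathrm q}]$ as a test density matrix for $\Phi^{\mathrm q} H_2 \Phi^{\mathrm q}$ and bounds the resulting error $\tr(\Phi^{\mathrm q}|V_1-V_2|\Phi^{\mathrm q}\gamma)$ by Hölder as $\nor{\nx[-r]\chrf_{\Omega^{\mathrm q}}}_{1+\frac d 2}\,\|\rho_{\Phi^{\mathrm q}\gamma\Phi^{\mathrm q}}\|_{1+\frac 2 d}$ (Lemma~\ref{lem:compone}). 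The key technical input, which your proposal has no counterpart for, is the $L^{1+2/d}$ bound on $\rho_{\Phi^{\mathrm q}\gamma\Phi^{\mathrm q}}$ for potentials with $V_- \notin L^{1+d/2}$ (Lemma~\ref{lem:traceest}(ii)); that bound rests on the scaling of the ground-state energy (Lemma~\ref{lem:gse}) and a Lieb--Thirring ladder argument (Lemma~\ref{lem:ltsing}). Without this, the quantum-zone error is not under control for $s > \scr$.

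A secondary issue: on the outer region $\{\nx \geq \ell\}$ a single coherent-state width cannot be optimal across scales, because $|\nabla V|$ varies from $\ell^{-s-1}$ near the inner edge down through $\bigo(1)$ and out to $\nx^{-S-1}$. The paper uses a dyadic partition $\lbrace\osc n\rbrace$ with a zone-dependent width $\tau_n = h^{\beta_n}$ (Corollary~\ref{cor:scerror}), and a second cutoff $\omega$ handled by Lieb--Thirring in the far tail (Section~\ref{sect:outerzones}), precisely to avoid having the inner-edge gradient force a globally small $\tau$. Your single-scale Lieb--Loss application, ``tracked in powers of $h/\ell^s$,'' is essentially the $s = \scr$ Nam--Portmann--Solovej setting; for $\scr < s < \frac 5 4$ the multi-scale refinement is not cosmetic but is what makes a positive $\eta$ achievable.
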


\begin{rem*}
In the two-dimensional case, the asymptotic holds for all $\eta < \min \lbrace \eta_{\mathrm{sc}}, \eta_{\mathrm{loc}}, \eta_{\mathrm{cutoff}}  \rbrace$, where
\begin{align*}
\eta_{\mathrm{sc}} &= -\frac{2 (s - 1)}{2 - s} + \frac{2 (1 - r) (2 s - 1)}{(2 - s)(5 s - 3 r - 1)}  \\*
\eta_{\mathrm{loc}} &= - \frac{2 (s - 1)}{2 - s} + \frac{2 (1 - r) (2 s - 3)}{(2 - s)(2 - r)} \\*
  \eta_{\mathrm{cutoff}} &= \LEFTRIGHT\lbrace. {  \begin{array}{ll}
    \displaystyle \frac 4 3 \frac{S - 1}{S - \frac 2 3}: &  \text{if}\; 1 < S \leq \frac {4} 3   \\
    \displaystyle \infty : & \text{if}\; S > \frac{4} 3 .
 \end{array}  } 
\end{align*}
Again, the condition on $r$ implies that $r < 2-s \leq s$, which ensures the well-definedness of the $\dd\mbf x$-integral.
\end{rem*}

It is worth pointing out that, in fact, we will prove that there exists a constant $K(C, S, s, r, d)$, independent of $V_1$ and $V_2$ themselves, such that, for all $h \in (0, 1]$
\[ \Big\vert  \tr [-h^2 \Delta + V_1 ]_- - \tr [- h^2 \Delta + V_2 ]_- - \lcl h^{-d} \int \dd\mbf x \big( [V_1(\mbf x)]_-^{1+\frac d 2} - [V_2 (\mbf x)]_-^{1 + \frac d 2} \big) \Big\vert \leq K h^{-d + \eta } .
\]
Therefore, the relative Weyl law as we formulated it holds true even if $V_1 , V_2$ have an $h$-dependence, as long as the same constants $s, S, r$ and, most importantly, $C$ can be chosen for all $h$. In particular, this allows for the replacement $V_i \to V_i + \mu (h), \mu (h) \geq 0$.

\section{Preliminaries}
\subsection{Operator-theoretic tools}
We repeat the statement of the Lieb--Thirring inequalities for the reader's convenience. We shall use them only for $\beta > 0$ (i.e. we will not use the CLR bound).

\begin{thm*}[Lieb--Thirring inequalities]
Let $d \geq 1$ and $\beta$ be such that
\[
\beta \in \LEFTRIGHT\lbrace.{ \begin{array}{l l} {}[\frac 1 2 , \infty ) : & \text{for } d = 1 \\ (0, \infty ) : & \text{for } d = 2 \\ {}[0, \infty ) : & \text{for } d \geq 3 \end{array} } .
\] Then, there exists a universal constant $L_{d, \beta} > 0$ such that for all $V_- \in L^{\beta + \frac d 2 } (\mathbb R^d)$,
\[ \tr [-h^2 \Delta + V]_-^\beta \leq L_{d,\beta} h^{-d}\,\int\dd\mbf x\,[V(\mbf x)]_-^{\beta + \frac d 2} .  \]
\end{thm*}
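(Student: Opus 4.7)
My plan is to prove the inequality in three stages: a scaling reduction to $h=1$, an Aizenman--Lieb monotonicity argument collapsing the full family of inequalities to a single base exponent $\beta_0$, and a direct proof at that base exponent.

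First, I would exploit the scaling symmetry. The substitution $\mbf x = h \mbf y$ gives the unitary equivalence $-h^2 \Delta_{\mbf x} + V(\mbf x) \cong -\Delta_{\mbf y} + V(h\mbf y)$ on $L^2(\mathbb R^d)$, so $\tr [-h^2 \Delta + V]_-^\beta = \tr [-\Delta + V(h \cdot)]_-^\beta$, and the reverse change of variables on the right-hand side of the target inequality produces the factor $h^{-d}$. This reduces the problem to $h = 1$.

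Next, I would set up the Aizenman--Lieb monotonicity. For $\beta > \beta_0 \geq 0$, a Beta-function computation yields the pointwise identity
\[ [t]_-^{\beta} = C_{\beta, \beta_0} \int_0^\infty \sigma^{\beta - \beta_0 - 1} [t + \sigma]_-^{\beta_0} \, \dd\sigma \qquad (t \in \mathbb R), \]
which lifts, by the spectral theorem applied to $-\Delta + V$, to
\[ \tr [-\Delta + V ]_-^{\beta} = C_{\beta, \beta_0} \int_0^\infty \sigma^{\beta - \beta_0 - 1} \tr [ -\Delta + V + \sigma ]_-^{\beta_0} \dd\sigma . \]
Assuming the Lieb--Thirring inequality at $\beta_0$, I would apply it pointwise inside this integral and then invoke the same identity once more (this time with $t = V(\mbf x)$ and both exponents shifted by $d/2$) to recover the inequality at $\beta$, with a constant obtained as an explicit ratio of Beta functions.

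Finally, it remains to prove the inequality at one convenient base exponent $\beta_0$. In dimensions $d \geq 3$ I would choose $\beta_0 = 0$, i.e.\ the Cwikel--Lieb--Rozenblum bound; its proof invokes the Birman--Schwinger principle, which identifies the number of eigenvalues of $-\Delta + V$ below $-E$ with the number of singular values of $K_E := V_-^{1/2}(-\Delta + E)^{-1} V_-^{1/2}$ exceeding $1$, and then bounds this count via Kato--Seiler--Simon Schatten-norm estimates of a smooth cutoff of $K_E$. In $d = 2$ and $d = 1$, one must instead take $\beta_0 > 0$ or $\beta_0 \geq \frac 1 2$ respectively, and adapt the Birman--Schwinger argument (or invoke the coherent-state argument of Lieb--Loss already cited). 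The main obstacle is this base case in low dimensions, where the borderline exponents are genuinely delicate; since, however, the paper uses the inequality only as a black box, it suffices to invoke the classical proofs as reviewed in \cite{LiebSeiringer2010}.
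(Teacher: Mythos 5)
The paper does not prove this theorem; it states the Lieb--Thirring inequalities as a known result and refers the reader to Lieb--Seiringer \cite{LiebSeiringer2010}, so there is no internal proof to compare against. Your sketch is the correct standard route: the scaling reduction to $h=1$ is exact (the dilation $(U\psi)(\mbf y) = h^{d/2}\psi(h\mbf y)$ unitarily conjugates $-h^2\Delta + V$ to $-\Delta + V(h\cdot)$ and produces the $h^{-d}$ on the right-hand side), the Aizenman--Lieb identity $[t]_-^{\beta} = B(\beta-\beta_0,\beta_0+1)^{-1}\int_0^\infty \sigma^{\beta-\beta_0-1}[t+\sigma]_-^{\beta_0}\,\dd\sigma$ is stated correctly and the crucial observation that applying it a second time with both exponents shifted by $d/2$ leaves the $\sigma$-power unchanged is exactly what makes the Tonelli exchange close the argument, and the base cases (CLR at $\beta_0=0$ for $d\geq 3$, $\beta_0=1/2$ for $d=1$, arbitrarily small $\beta_0>0$ for $d=2$) are the genuinely hard part that you are right to defer to the literature. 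One small caveat worth spelling out: in $d=2$ no single $\beta_0$ suffices, since the monotonicity only runs upward, so you need a direct Birman--Schwinger-type bound for every small positive exponent (or a family $\beta_0 \downarrow 0$); your phrasing glosses over this but the intent is clear and matches the standard treatment. Since the paper treats the theorem as a black box, your proposal is a sound expansion of that citation rather than a divergence from the paper's method.
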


Cf. \cite{LiebSeiringer2010} for a reference.

Recall that a (fermionic) density matrix is a trace-class operator $\gamma$ with $0 \leq \gamma \leq I$, $I$ being the identity operator. By a well-known fact about Hilbert--Schmidt operators on $L^2 (\mathbb R^d)$, such an operator can be written as an integral operator
\[ (\gamma \psi ) (\mbf x) = \int \dd\mbf y\,\gamma(\mbf x, \mbf y)\,\psi (\mbf y) \]
with a kernel $\gamma (\cdot , \cdot ) \in L^2 (\mathbb R^d \times \mathbb R^d)$. Its one-particle density $\rho_\gamma (\mbf x )$ is formally its diagonal part $\gamma (\mbf x , \mbf x)$. Unless $\gamma (\cdot , \cdot )$ is continuous, however, the expression $\gamma (\mbf x, \mbf x)$ is not well-defined, as the diagonal $\lbrace (\mbf x, \mbf x): \mbf x \in \mathbb R^d \rbrace \subset \mathbb R^d \times \mathbb R^d $ is a Lebesgue nullset. Rigorously, spectral theory tells us that there is a representation
\[ \gamma = \sum_{i = 1}^{\infty} \lambda_i \ket{\psi_i}\bra{\psi_i} \]
with $\psi_i \in L^2(\mathbb R^d), \| \psi_i \|_2 = 1$ being its normalized eigenfunctions and $0 \leq \lambda_i \leq 1$ its eigenvalues, and
\[ \tr \gamma = \sum_{i=1}^\infty \lambda_i . \]
Its one-particle density is then defined as the function
\[ \rho_\gamma (\mbf x) = \sum_{i=1}^\infty \lambda_i\,\vert \psi_i (\mbf x) \vert^2 , \]
and it holds rigorously that
\[ \tr \gamma = \int \dd\mbf x\,\rho_\gamma (\mbf x) . \]
The Lieb--Thirring inequalities for $\beta = 1$ are equivalent to the following inequalities, known as the kinetic Lieb--Thirring inequalities.

\begin{thm*}[Kinetic Lieb--Thirring inequalities]
Let $d \geq 2$. There exists a universal constant $K_d$ such that, for any density matrix $\gamma$,
\[ \tr (-\Delta \gamma ) \geq K_d \int \dd\mbf x\,\rho_{\gamma}(\mbf x)^{1+\frac 2 d} . \]
\end{thm*}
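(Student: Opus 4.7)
The plan is to derive the kinetic Lieb--Thirring inequality from the standard Lieb--Thirring inequality at $\beta = 1$ (stated just above) by a duality argument, which is the classical route.

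First, I would establish the elementary variational inequality that for any self-adjoint operator $A$ on $L^2(\mathbb R^d)$ bounded below and any density matrix $\gamma$ (so $0 \leq \gamma \leq I$),
\[ \tr (A \gamma) \geq - \tr [A]_- . \]
This follows from the spectral decomposition: write $A = \int \lambda \, \dd E_\lambda$, let $\gamma$ have spectral representation $\sum_i \lambda_i \ket{\psi_i}\bra{\psi_i}$, and note that $\bra{\psi_i} A \ket{\psi_i} \geq -\bra{\psi_i} [A]_- \ket{\psi_i}$ pointwise; summing with weights $0 \leq \lambda_i \leq 1$ and using that $[A]_-$ is non-negative bounds $\sum_i \lambda_i \bra{\psi_i} [A]_- \ket{\psi_i} \leq \tr [A]_-$.

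Next, I would apply this with $A = -\Delta + V$ for an arbitrary potential $V$ (with $V_- \in L^{1+d/2}$, to be chosen), yielding
\[ \tr(-\Delta\,\gamma) + \int \dd\mbf x\,V(\mbf x)\,\rho_\gamma(\mbf x) \geq -\tr[-\Delta + V]_- \geq -L_{d,1} \int \dd\mbf x\,[V(\mbf x)]_-^{1+\frac d 2} , \]
where the second inequality is the Lieb--Thirring inequality for $\beta = 1$ at $h = 1$. Rearranging,
\[ \tr(-\Delta\,\gamma) \geq -\int \dd\mbf x\,V(\mbf x)\,\rho_\gamma(\mbf x) - L_{d,1}\int\dd\mbf x\,[V(\mbf x)]_-^{1 + \frac d 2 } . \]

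The final step is to optimize over $V$. The natural choice is $V(\mbf x) = -c\,\rho_\gamma(\mbf x)^{2/d}$ with a constant $c > 0$ to be tuned. Substituting, both terms on the right-hand side become multiples of $\int \rho_\gamma^{1 + 2/d}\,\dd\mbf x$, giving
\[ \tr(-\Delta\,\gamma) \geq \br{c - L_{d,1}\,c^{1 + \frac d 2}} \int \dd\mbf x\,\rho_\gamma(\mbf x)^{1 + \frac 2 d} . \]
Maximizing the prefactor over $c > 0$ yields an explicit universal constant $K_d > 0$, namely $K_d = \frac{2}{d+2}\br{\frac{2}{(d+2) L_{d,1}}}^{2/d}$, which proves the inequality. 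The only subtlety, and the main thing to check carefully, is integrability: if $\int \rho_\gamma^{1 + 2/d} = \infty$ the statement is trivial; otherwise the chosen $V$ lies in $L^{1+d/2}$, so the right-hand side above is finite and the application of the Lieb--Thirring inequality is justified. A small density argument (e.g.\ truncating $\rho_\gamma^{2/d}$ to $L^\infty \cap L^{1+d/2}$ and passing to the limit by monotone convergence) handles any remaining measurability issue.
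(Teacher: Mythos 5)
Your proposal is correct and is the standard way to pass from the Lieb--Thirring inequality at $\beta = 1$ to its kinetic form via a Legendre-type duality. The paper itself offers no proof of this statement: it merely records that the two inequalities are equivalent and refers the reader to \cite{LiebSeiringer2010}, so there is no paper argument to compare against; yours is precisely the argument those references give. Two small remarks. First, the optimal constant has a slip: maximizing $c \mapsto c - L_{d,1}\,c^{1+d/2}$ over $c>0$ gives $c^* = \br{\tfrac{2}{(d+2)L_{d,1}}}^{2/d}$ and $f(c^*) = c^*\br{1-\tfrac{2}{d+2}} = \tfrac{d}{d+2}\,c^*$, so
\[
K_d \;=\; \frac{d}{d+2}\,\br{\frac{2}{(d+2)\,L_{d,1}}}^{2/d},
\]
with prefactor $\tfrac{d}{d+2}$ rather than $\tfrac{2}{d+2}$ (the two coincide only at $d=2$, which may be the source of the slip); this does not affect the existence of a positive universal $K_d$. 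Second, when $\int\rho_\gamma^{1+2/d}=\infty$ the statement is not ``trivial'' --- one must still show $\tr(-\Delta\gamma)=\infty$ --- but your truncation remark covers it: taking $V_{M,R}=-c\,\min(\rho_\gamma^{2/d},M)\chf{B_R}$, using the pointwise bound $\min(\rho_\gamma^{2/d},M)^{1+d/2}\le \min(\rho_\gamma^{2/d},M)\,\rho_\gamma$, and letting $M,R\to\infty$ via monotone convergence yields the inequality simultaneously in the finite and infinite cases, so the argument is sound.
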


Furthermore, we need the IMS (I.M. Sigal, Ismagilov, Morgan, Morgan--Simon, cf.~\cite{IMS}) localization formula,

\begin{thm*}[IMS formula]
Suppose $(\Phi_j )_{1\leq j \leq N} \subset C^2 (\mathbb R^d)$ are functions with
\[ \sum_{j=1}^N \Phi_j(\mbf x)^2 = 1 . \]
Then, in the sense of quadratic forms,
\[ -\Delta = \sum_{j=1}^N \big( \Phi_j (-\Delta ) \Phi_j - \vert \nabla \Phi_j \vert^2 \big) . \]
\end{thm*}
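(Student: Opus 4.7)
The plan is to establish the identity pointwise on smooth, compactly supported test functions via the Leibniz rule, and then extend it to the natural form domain $H^1(\mathbb R^d)$ of $-\Delta$ by a density argument.

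Specifically, for $\psi \in C_c^\infty(\mathbb R^d)$ I would first compute
\[ -\Delta (\Phi_j \psi) = -\Phi_j \Delta \psi - 2 (\nabla \Phi_j) \cdot (\nabla \psi) - (\Delta \Phi_j) \psi . \]
Multiplying by $\Phi_j$ and summing over $j$ produces three terms. The first, $-\sum_j \Phi_j^2 \Delta \psi$, collapses to $-\Delta \psi$ by the partition-of-unity hypothesis $\sum_j \Phi_j^2 = 1$. The second, $-2\sum_j \Phi_j (\nabla \Phi_j) \cdot (\nabla \psi)$, vanishes since $2 \sum_j \Phi_j \nabla \Phi_j = \nabla\big(\sum_j \Phi_j^2\big) = \nabla (1) = 0$. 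For the third, I would use the pointwise product-rule identity $\Phi_j \Delta \Phi_j = \tfrac{1}{2}\Delta(\Phi_j^2) - \vert\nabla \Phi_j\vert^2$; summing and again invoking $\sum_j \Phi_j^2 = 1$ gives $\sum_j \Phi_j \Delta \Phi_j = -\sum_j \vert\nabla\Phi_j\vert^2$. Assembling the three terms yields the pointwise identity
\[ -\Delta \psi = \sum_{j=1}^N \big(\Phi_j (-\Delta)(\Phi_j \psi) - \vert\nabla\Phi_j\vert^2\,\psi\big) \]
for every $\psi \in C_c^\infty(\mathbb R^d)$.

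To turn this into a quadratic-form identity, I pair with $\overline{\psi}$ and integrate, obtaining
\[ \langle \psi, (-\Delta)\psi \rangle = \sum_{j=1}^N \langle \Phi_j \psi, (-\Delta)(\Phi_j \psi) \rangle - \sum_{j=1}^N \int \vert\nabla\Phi_j\vert^2\,\vert\psi\vert^2\,\dd\mbf x , \]
initially valid for $\psi \in C_c^\infty(\mathbb R^d)$. Under the implicit assumption that the $\Phi_j$ and $\nabla \Phi_j$ are bounded---which will hold for the explicit partitions of unity used later in the paper---multiplication by $\Phi_j$ is a bounded endomorphism of $H^1(\mathbb R^d)$, so both sides are continuous in $\psi$ with respect to the $H^1$ norm. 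Since $C_c^\infty(\mathbb R^d)$ is dense in $H^1(\mathbb R^d)$, the identity extends by continuity to the full form domain of $-\Delta$, yielding the claim.

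There is no serious obstacle here: the proof is entirely algebraic once the two cancellations $\sum_j \Phi_j \nabla \Phi_j = 0$ and $\sum_j \Phi_j \Delta \Phi_j = -\sum_j \vert\nabla\Phi_j\vert^2$ are identified, both direct consequences of differentiating $\sum_j \Phi_j^2 = 1$ once or twice. The only point deserving attention is the form-domain interpretation, which requires bounded $\Phi_j$ and $\nabla \Phi_j$ to ensure that the localized vectors $\Phi_j \psi$ lie in $H^1$ and that the error $\sum_j \vert\nabla\Phi_j\vert^2\,\vert\psi\vert^2$ is integrable.
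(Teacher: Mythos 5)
The paper does not actually prove the IMS formula: it states it as a known result with a citation to Simon's paper and only sketches the passage to the \emph{generalized} IMS formula by reduction to this one on the form core $C_{\mathrm c}^\infty(\mathbb R^d)$. So there is no internal proof to compare against; what you have produced is the standard textbook derivation, and it is correct. Your two cancellation identities $\sum_j \Phi_j\nabla\Phi_j=\tfrac12\nabla\bigl(\sum_j\Phi_j^2\bigr)=0$ and $\sum_j\Phi_j\Delta\Phi_j=\tfrac12\Delta\bigl(\sum_j\Phi_j^2\bigr)-\sum_j|\nabla\Phi_j|^2=-\sum_j|\nabla\Phi_j|^2$ are exactly what makes the operator identity hold pointwise on $C_{\mathrm c}^\infty$, and pairing with $\overline\psi$ gives the form identity there. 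Two minor remarks. First, the boundedness of $\Phi_j$ is automatic from $\sum_j\Phi_j^2=1$, so the only genuine extra assumption in your density step is boundedness of $\nabla\Phi_j$; it is good that you flag this, since the theorem as stated only asks $\Phi_j\in C^2$. Second, for the paper's purposes the density step is in fact unnecessary: the paper explicitly treats equality on the form core $C_{\mathrm c}^\infty(\mathbb R^d)$ as sufficient for ``equality in the sense of quadratic forms,'' which is the standard convention when one side involves an a priori more singular multiplication operator. Your proof establishes exactly that, so it is complete for the intended use.
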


In fact, we shall use a slightly generalized version of it.

\begin{thm*}[Generalized IMS formula]
Suppose $(\Phi_j )_{1\leq j < \infty} \subset C^2 (\mathbb R^d)$ are functions with
\[ \sum_{j=1}^\infty \Phi_j(\mbf x)^2 = 1 , \]
and suppose that for every compact set $K \Subset \mathbb R^d$ only a finite number of $\Phi_j$ are non-vanishing on $K$. Then, in the sense of quadratic forms,
\[ -\Delta = \sum_{j=1}^\infty \Phi_j (-\Delta ) \Phi_j - \sum_{j=1}^\infty \vert \nabla \Phi_j \vert^2 . \]
\end{thm*}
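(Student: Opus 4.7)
The plan is to reduce the claimed quadratic form identity to an almost-everywhere pointwise identity that the local finiteness hypothesis turns, around each point, into a finite-sum computation of the standard Leibniz type. First I would fix $\mbf{x}_0 \in \mathbb{R}^d$ and choose a precompact neighborhood $U$ of $\mbf{x}_0$ on which only finitely many $\Phi_j$ are non-vanishing; denote this finite index set $J(U)$. Since on $U$ the constraint $\sum_{j \in J(U)} \Phi_j^2 \equiv 1$ is a finite sum of $C^2$-functions, I may differentiate term-wise to obtain $\sum_{j \in J(U)} \Phi_j \nabla \Phi_j = 0$ on $U$.

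Next I would substitute the Leibniz rule $\nabla(\Phi_j \psi) = \psi\,\nabla \Phi_j + \Phi_j \nabla \psi$ into $\md{\nabla(\Phi_j \psi)}^2$, expand the square, and sum finitely over $j \in J(U)$. The cross term vanishes by $\sum_j \Phi_j \nabla \Phi_j = 0$, while the diagonal $\sum_j \Phi_j^2 \md{\nabla \psi}^2$ collapses to $\md{\nabla \psi}^2$. This produces the pointwise identity
$$\sum_j \md{\nabla(\Phi_j \psi)}^2 = \md{\nabla \psi}^2 + \sum_j \md{\nabla \Phi_j}^2 \psi^2$$
valid on $U$, hence, letting $U$ vary, almost everywhere on $\mathbb{R}^d$.

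Finally I would integrate against $\dd \mbf{x}$. Because every term is non-negative, Tonelli's theorem justifies the exchange of the sums with the integral on both sides, giving
$$\sum_j \int \md{\nabla(\Phi_j \psi)}^2 \dd \mbf{x} = \int \md{\nabla \psi}^2 \dd \mbf{x} + \sum_j \int \md{\nabla \Phi_j}^2 \psi^2 \dd \mbf{x}$$
for any $\psi$ in the form domain; that $\Phi_j \psi \in H^1$ and that the right-hand sums are well-defined is immediate from the $C^2$-regularity of each $\Phi_j$ together with local finiteness. Rewriting this as the quadratic form identity $\langle \psi, -\Delta \psi \rangle = \sum_j \langle \Phi_j \psi, -\Delta(\Phi_j \psi)\rangle - \sum_j \langle \psi, \md{\nabla \Phi_j}^2 \psi\rangle$ yields the generalized IMS formula.

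I do not expect any genuine analytic obstacle: the argument is essentially the standard finite IMS computation executed pointwise, with the local finiteness hypothesis reducing every infinite sum to a finite one in a neighborhood of each point, and the non-negativity of every resulting term making the passage from the pointwise identity to the integrated one automatic via Tonelli. The only place where care is warranted is the term-wise differentiation of $\sum_j \Phi_j^2 \equiv 1$, which is legitimate precisely because, on any given neighborhood $U$, this is a finite sum of $C^2$-functions.
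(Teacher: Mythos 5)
Your argument follows the paper's route: localize via the finiteness hypothesis, run the standard finite IMS computation, and integrate, which is exactly the paper's one-line remark that the claim reduces to the ordinary IMS formula on a form core. The one point where you overclaim is the assertion that, \emph{for arbitrary $\psi$ in the form domain $H^1$}, the inclusions $\Phi_j\psi\in H^1$ and the convergence of the right-hand sums are ``immediate from $C^2$-regularity and local finiteness.'' Neither hypothesis gives a global bound on $\sum_j\md{\nabla\Phi_j}^2$, so for a non-compactly-supported $\psi\in H^1$ the quantity $\sum_j\int\md{\nabla\Phi_j}^2\psi^2\,\dd\mbf x$ may be $+\infty$; your Tonelli step then yields only the unhelpful equality $\infty=\infty$, and the form identity $\langle\psi,-\Delta\psi\rangle=\sum_j\langle\Phi_j\psi,-\Delta(\Phi_j\psi)\rangle-\sum_j\langle\psi,\md{\nabla\Phi_j}^2\psi\rangle$ becomes an ill-defined $\infty-\infty$. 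The clean fix is exactly what the paper does: prove the identity only for $\psi\in C_{\mathrm c}^\infty(\mathbb R^d)$, where local finiteness reduces both sums to finitely many nonzero terms so no convergence question arises, and then note that $C_{\mathrm c}^\infty$ is a form core for $-\Delta$, so the identity is determined there. Restricting your otherwise-correct computation to this core removes the gap.
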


For the proof, it suffices to prove equality on the form core $C_{\mathrm{c}}^\infty (\mathbb R^d)$. But for these functions, the claim follows from the IMS formula by virtue of the local finiteness assumption.

\subsection{Coherent states}
Let $\mbf u, \mbf p \in \mathbb R^d$, $\tau > 0$, and let $g \in C_{\mathrm c}^1 (\mathbb R^d)$ be a non-negative radial function with $\nor{g}_{2} = 1$, and denote
\begin{align*}
g_\tau (\mbf x) &:= \tau^{-\frac  d 2} g(\mbf x / \tau ) \\
f_{\tau, \mbf u, \mbf p} (\mbf x) &:= \ee[-\ii \mbf p \cdot \mbf x ] g_\tau (\mbf x - \mbf u) \\
\pi_{\tau, \mbf u, \mbf p} &:= \ket{f_{\tau, \mbf u, \mbf p}}\bra{f_{\tau, \mbf u, \mbf p}} .
\end{align*}
The overcomplete family $\lbrace \pi_{\tau, \mbf u, \mbf p} \rbrace_{\mbf u, \mbf p}$ is called coherent state family. The scaling factor $\tau$ is a free parameter whose optimal value will have to be determined.

We remind of the following identities about coherent states (see Lieb and Loss \cite{LiebLoss2001}):
\begin{align*}
I &= \frac{1}{(2 \pi)^d} \iint \dd\mbf p\,\dd\mbf u\;\prf \tau u p \\
\tr (- h^2 \Delta \gamma ) &= \frac{1}{(2 \pi)^d} \iint \dd\mbf p\,\dd\mbf u\;h^2\vert\mbf{p}\vert^2\,\tr (\prf \tau u p \gamma ) - h^2 \tau^{-2} \nor{\nabla g}_{2}^2 \tr \gamma \\
\tr [(V \star g_\tau^2 ) \gamma ] &= \ootp d \ipu p u\;V(\mbf u)\,\tr (\prf \tau u p \gamma ) ,
\end{align*}
for $\gamma \in \mathfrak S^1$ a density matrix and $V_- \in L^1 (\mathbb R^d )$. The integral in the first identity is understood in the weak sense.

With the notations
\begin{align*}
L^{\text{pot}}_d &:= \ootp d \omega_d \\
L^{\text{kin}}_d &:= \ootp d \frac{\omega_d}{1+\frac 2 d} \\
L^{\text{cl}}_d &:= L_{\text{pot}} - L_{\text{kin}} = \ootp d \frac{\omega_d}{1+\frac d 2} ,
\end{align*}
$\omega_d = \pi^{d/2} / \Gamma (1 + d/2)$ being the volume of the $d$-dimensional unit ball, and with
\[ \mathfrak M_{\mbf u} := \lbrace \mbf p : h^2 \vert\mbf{p}\vert^2 + V(\mbf u)<0 \rbrace , \]
the following identities, which are useful in the evaluation of integrals arising from the use of coherent states, hold:
\begin{align*}
\ootp d \smashoperator{\int_{\mathfrak M_{\mbf u}}}\dd\mbf p\,h^2 \vert\mbf{p}\vert^2 &= L^{\text{kin}}_d h^{-d} [V(\mbf u)]_-^{1+\frac d 2} \\
\ootp d \smashoperator{\int_{\mathfrak M_{\mbf u}}}\dd\mbf p\,V(\mbf u) &= - L^{\text{pot}}_d h^{-d} [V(\mbf u)]_-^{1+\frac d 2} \\
\ootp d \smashoperator{\int_{\mathfrak M_{\mbf u}}}\dd\mbf p\,[h^2 \vert\mbf{p}\vert^2 + V(\mbf u) ] &= - L^{\text{cl}}_d h^{-d} [V(\mbf u)]_-^{1+\frac d 2} .
\end{align*}

\subsection{Outline of the proof}
We briefly explain the proof strategy. For $\eta < \eta^* := \min \lbrace \eta_{\mathrm{sc}} , \eta_{\mathrm{loc}}, \eta_{\mathrm{cutoff}} \rbrace$ (as defined in the remarks following Theorems~\ref{thm:m3} and \ref{thm:m2}), put
\[ \epsilon \propto \eta^* - \eta > 0, \]
where the constant of proportionality is not of interest here. In the first step, we construct a partition of unity $\lbrace \Phi^{\mathrm{q}} \rbrace \cup \lbrace \psc n \rbrace_{n = -\infty}^N$ for the domains $\Omega^{\mathrm{q}} = \lbrace \nx \lesssim h^{\alpha} \rbrace$ (quantum zone), $\osc n = \lbrace h^{n \epsilon } \lesssim \nx \lesssim h^{(n-1) \epsilon }  \rbrace$ (semiclassical zones). By an application of the (generalized) IMS formula, we obtain
\[  \tr [-h^2 \Delta + V ]_- \sim \tr [\Phi^{\mathrm{q}} (-h^2 \Delta + V ) \Phi^{\mathrm{q}}]_- + \sum_{N \geq n > - \infty } \tr [\psc n (-h^2 \Delta + V ) \psc n]_- .  \]
The error in this approximation, which we shall refer to as the localization error, depends on the choice of the exponent $\alpha$.

In the second step, we compare the quantum terms:
\begin{multline*}  \big\vert \tr [\Phi^{\mathrm{q}} (-h^2 \Delta + V_2 ) \Phi^{\mathrm{q}}]_- - \tr [\Phi^{\mathrm{q}} (-h^2 \Delta + V_1 ) \Phi^{\mathrm{q}}]_- \big\vert \leq \\ \leq
 \| (V_1 - V_2)\,\chrf_{\Omega^{\mathrm{q}}} \|_{1+\frac d 2} \max_{i = 1, 2} \| \rho_{\Phi^{\mathrm{q}} \gamma_i \Phi^{\mathrm{q}}} \|_{1+\frac 2 d} ,  \end{multline*}
where $\gamma_i = \chrf_{(-\infty, 0)}( \Phi^{\mathrm{q}} (-h^2 \Delta + V_i ) \Phi^{\mathrm{q}} ) $. For $s < \scr $, the $L^{1 + \frac 2 d}$-norm of the reduced one-particle density can be bounded with the kinetic Lieb--Thirring inequality. For $s \geq \scr $, we employ a similar technique involving the low-moment Lieb--Thirring inequality: It is well known that Lieb--Thirring inequality for the $\beta$-th moments implies the Lieb--Thirring inequality for the $\beta'$-th moments, for $\beta' > \beta$. The lower-moment Lieb--Thirring inequality will then yield a finite bound, albeit of higher order, even for potentials that are not covered by the higher-moment Lieb--Thirring inequality. Control over the ground state energy is then crucial.

In the third step, we employ a coherent states technique in the semiclassical zones to find
\[  \tr [\psc n (-h^2 \Delta + V ) \psc n ]_- \sim \smashoperator{\iint\limits_{\osc n \times \mathbb R^d}} \frac{\dd\mbf u\,\dd\mbf p}{(2 \pi h)^d}\;\psc n(\mbf u)\,[h^2 \vert\mbf{p}\vert^2 + V(\mbf u) ]_-\,\psc n(\mbf u). \]
The error made in this approximation involves the goodness of the approximation of $V_-$ by $V_- \star g_{\tau_n}^2 $ (for an appropriate convolution kernel $g$), where we put $\tau_n := h^{\beta_n}$.
We introduce a cutoff exponent $\omega$. For the outer semiclassical zones with $n \epsilon < - \omega$, we obtain a more accurate analysis by not paying for the coherent state approximation, but just bounding the $\tr [\psc n (-h^2 \Delta + V ) \psc n ]_-$ by means of the Lieb--Thirring inequality.

Finally, we optimize the parameters $\alpha$ and $\beta_n$ and choose a cutoff exponent $\omega$. We shall choose $\omega$ such that for zones $n \epsilon < -\omega$, bounding the contribution is cheaper than paying for the coherent state analysis. We remark at this point that other treatments of the cutoff are possible than our method, which uses the Lieb--Thirring inequality. For example, one could employ a comparison technique similar to the one we are using in the quantum zone\footnote{This is possible because we chose an infinite partition, where all partition functions have support of finite measure.}.

\section{Proofs}

In the remainder we shall use $C$ for any universal positive constant, i.e. any positive constant that depends on $C, S, s, r, d$. Different lines may correspond to different values of $C$.

At this point, we remind the reader of the definition
\[ \scr := \frac{2 d}{d + 2} . \]

We start with the following observation.

\begin{lem}
The conditions (\ref{eq:conditions}) imply that
\[
 [V_1(\mbf x)]_-, [V_2(\mbf x)]_- \leq \tilde V(\mbf x) := \LEFTRIGHT\lbrace. { \begin{array}{ll} C'\,\nx[- s] & \text{for}\;\nx \leq 1 \\
 C'\,\nx[- S] & \text{for}\;\nx \geq 1 . \end{array} } \]
 \end{lem}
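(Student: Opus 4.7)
The plan is to recover a pointwise bound on $[V]_-$ from the gradient estimate in (\ref{eq:conditions}) by integrating $\nabla V$ along a radial ray, using the known behavior of $V$ at either $|\mbf x|=1$ or $|\mbf x|\to\infty$ as an anchor. I treat the inner and outer regions separately.

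First I would handle the region $\nx \geq 1$. Here the key input is that $\sup_{\nx \geq L}[V(\mbf x)]_- \to 0$ as $L\to\infty$. For $\mbf x$ with $\nx = r_0 \geq 1$, parametrize the outward radial ray by $\mbf y(t) = t\,\mbf x/\nx$ for $t\geq r_0$, and use the gradient bound for $\nx \geq 1$ to estimate
\[ \md{V(\mbf x)-V(\mbf y(t))} \leq \int_{r_0}^{t} C\,\sigma^{-S-1}\,\dd\sigma \leq \frac{C}{S}\,r_0^{-S}. \]
From $[V(\mbf x)]_- \leq \md{V(\mbf x)-V(\mbf y(t))} + [V(\mbf y(t))]_-$, letting $t\to\infty$ and using the hypothesis that $[V(\mbf y(t))]_-\to 0$ along the ray yields $[V(\mbf x)]_- \leq (C/S)\,\nx[-S]$, as required.

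Next I would handle $\nx \leq 1$. For $\mbf x$ with $\nx = r_0 \leq 1$, use the radial path from $\mbf x$ to the unit sphere. The inner gradient bound gives, for $s\geq 1$ (in fact any $s>0$),
\[ \md{V(\mbf x)-V(\mbf x/\nx)} \leq \int_{r_0}^{1} C\,\sigma^{-s-1}\,\dd\sigma \leq \frac{C}{s}\,r_0^{-s}. \]
Since $\md{V(\mbf x/\nx)}\leq C$ by the first line of (\ref{eq:conditions}), this yields $\md{V(\mbf x)} \leq C + (C/s)\,\nx[-s]$, and because $\nx[-s]\geq 1$ on this region we can absorb the additive constant into the coefficient, so $[V(\mbf x)]_- \leq \md{V(\mbf x)} \leq C'\,\nx[-s]$.

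Combining the two estimates with $C'$ chosen large enough to cover both cases proves the lemma. No obstacle is anticipated — the only subtlety is that in the exterior one must bound $[V]_-$ rather than $\md{V}$, because the first condition in (\ref{eq:conditions}) controls only the negative part at infinity (the potential itself may converge to a positive constant, as mentioned after (\ref{eq:conditions})); this is precisely what allows the argument via $[V(\mbf x)]_-\leq \md{V(\mbf x)-V(\mbf y(t))} + [V(\mbf y(t))]_-$ to succeed.
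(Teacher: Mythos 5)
Your proposal is correct and follows essentially the same strategy as the paper's proof: integrate the gradient along a radial ray, anchoring at spatial infinity (using $\sup_{\nx\geq L}[V]_-\to 0$) for $\nx\geq 1$ and at the unit sphere (using continuity/boundedness there) for $\nx\leq 1$. The only cosmetic difference is that the paper bounds $V(\mbf x)$ from below directly and then takes a $\liminf$, whereas you phrase the exterior estimate via the inequality $[V(\mbf x)]_- \leq \md{V(\mbf x)-V(\mbf y(t))} + [V(\mbf y(t))]_-$; both are equivalent and handle the positive-limit-at-infinity subtlety correctly.
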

\begin{proof}
In order to see this for $\nx \geq 1$, we integrate the gradient from spatial infinity, i.e. we take some $\lambda > 1$ and write $V(\mbf x)$ as
\begin{align*} V(\mbf x) &= V(\lambda \mbf x) + \int_1^\lambda \dd\lambda'\, \mbf x \cdot \nabla V(\lambda' \mbf x) \\
 &\geq V(\lambda \mbf x) - \nx \int_1^\lambda \dd\lambda'\,\vert \nabla V(\lambda' \mbf x) \vert  \\
 &\geq \liminf_{\lambda \to \infty} V(\lambda \mbf x) - \nx \int_1^\infty \dd\lambda'\, (\nx \lambda' )^{-S - 1} \\
 &\geq - \nx[-S] \int_1^\infty \dd\lambda'\,(\lambda')^{-S - 1} ,
  \end{align*}
where we used the gradient decay condition and
\[ \liminf_{\lambda \to \infty} V(\lambda \mbf x) \geq - \lim_{L\to \infty} \sup \lbrace [V(\mbf x)]_- \rbrace_{\nx \geq L } = 0 . \]
For $\nx > 1$, we write $\mbf x = \nx\,\hat{\mbf x}$ and
\begin{align*} V(\mbf x) &= V(\hat{\mbf x}) - \int_{\nx }^1 \dd\lambda\, \hat{\mbf x} \cdot \nabla V(\lambda \hat{\mbf x}) \\
 &\geq \min_{\vert \mbf x' \vert = 1} V(\mbf x') - \int_{\nx }^1 \dd\lambda\,\vert \nabla V(\lambda \hat{\mbf x }) \vert  \\
 &\geq \text{const} - C \int_{\nx }^1 \dd\lambda\,\lambda^{-s-1} \\
 &\geq - C'\,\nx[-s] . \qedhere
  \end{align*}
\end{proof}

\subsection{A Lieb--Thirring inequality for singular potentials}

We need the following bound on the trace $\tr [-h^2 \Delta - \md{\mbf x}^{-s} + \mu  ]_- $.

\begin{lem}
\label{lem:landaus}
Let $d \geq 2, 0<s<2, \mu>0$. Then, as $h \downarrow 0$,
\[ \tr \big[\!- h^2 \Delta - \nx[-s] + \mu \big]_- = \scases{\bigo (h^{-d})}{\bigo (h^{-d} \md{\log h}  )}{\bigo (h^{-\frac{2s}{2-s}})} . \]
\end{lem}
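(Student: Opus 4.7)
The exponent $2s/(2-s)$ is the natural quantum length/energy scale: balancing the kinetic energy $h^2/L^2$ against the potential magnitude $L^{-s}$ gives $L \sim h^{2/(2-s)}$ as the length scale and $\sim h^{-2s/(2-s)}$ as the ground-state depth. My plan is to split $H_h := -h^2\Delta - \nx[-s] + \mu$ at the radius $R_h := c_d\,h^{2/(2-s)}$. Write $H_h = A+B$ with
\[ A := -\tfrac{h^2}{2}\Delta - \nx[-s]\chrf_{\{\nx \le R_h\}}, \qquad B := -\tfrac{h^2}{2}\Delta - \nx[-s]\chrf_{\{\nx > R_h\}} + \mu. \]
Using the dual characterisation $\tr[H]_- = \sup_{0\le\gamma\le I}\tr(-H\gamma)$ one immediately obtains $\tr[H_h]_- \le \tr[A]_- + \tr[B]_-$, so the problem decouples into an inner and an outer estimate.

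For $A$, I would invoke Hardy's inequality $-\Delta \ge (d-2)^2/(4\nx[2])$ (valid in $d\ge 3$) and pick $c_d = \bigl((d-2)^2/8\bigr)^{1/(2-s)}$, which guarantees $\tfrac{(d-2)^2 h^2}{8\nx[2]} \ge \nx[-s]$ on $\{\nx \le R_h\}$. Consequently $A\ge 0$ as a quadratic form, so $\tr[A]_- = 0$. For $B$, $[B]_-$ is supported in the compact shell $R_h \le \nx \le \mu^{-1/s}$, and the Lieb--Thirring inequality yields
\[ \tr[B]_- \le C h^{-d}\!\int_{R_h \le \nx \le \mu^{-1/s}}\!\!\nx[-s(1+d/2)]\,\dd\mbx x \le C h^{-d}\!\int_{R_h}^{\mu^{-1/s}}\!\!r^{d-1-s(1+d/2)}\,\dd r. \]
The three regimes of the lemma correspond exactly to the sign of the radial exponent $d-1-s(1+d/2)$: strictly $>-1$ for $s<\scr$ (the integral is $\bigo(1)$, total $\bigo(h^{-d})$); equal to $-1$ for $s=\scr$ (the integral equals $\log(\mu^{-1/s}/R_h) = \bigo(\md{\log h})$, total $\bigo(h^{-d}\md{\log h})$); and strictly $<-1$ for $s>\scr$ (the integral is dominated by the lower endpoint and behaves like $R_h^{d-s(1+d/2)}$, with the algebra $-d+\tfrac{2}{2-s}(d-s(1+d/2)) = -\tfrac{2s}{2-s}$ reproducing the claimed bound).

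The main obstacle is the case $d=2$, where the Hardy constant $(d-2)^2/4$ vanishes and the inner piece is no longer automatically nonnegative. One way to salvage the argument is to apply Lieb--Thirring at a subcritical moment $0<\beta<d/s-d/2$ (admissible for $\beta>0$ in $d=2$), for which $\int[V]_-^{\beta+d/2}\,\dd\mbx x$ remains finite in all three regimes, and then upgrade to the first moment via
\[ \tr[H_h]_- \le \md{\inf\sigma(H_h)}^{1-\beta}\,\tr[H_h]_-^{\beta} \]
together with the variational bound $\md{\inf\sigma(H_h)}\lesssim h^{-2s/(2-s)}$ obtained from a Gaussian trial function of width $h^{2/(2-s)}$. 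Letting $\beta\uparrow d/s-d/2$ recovers the asymptotics from the $d\ge 3$ argument in each regime, at worst up to an arbitrarily small loss in the exponent that can be absorbed into the $\bigo$.
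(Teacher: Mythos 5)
Your $d\ge 3$ argument is correct and is a genuinely different route from the paper's. The paper does not split the operator at a radius; it first establishes the ground-state bound $-h^2\Delta-\nx[-s]\ge -E\,h^{-2s/(2-s)}$ by scaling (Lemma~\ref{lem:gse}), then in Lemma~\ref{lem:ltsing} writes $\tr[-\Delta+V]_-=\int_0^E N_{<-\lambda/2}(-\Delta+V+\tfrac\lambda 2)\,\dd\lambda$, bounds the counting function via a subcritical-moment Lieb--Thirring trace, and splits the resulting $\dd\mbf x$-integral according to whether $V(\mbf x)\ge -E/2$ or $V(\mbf x)<-E/2$; the shallow region gives the $\bigo(h^{-d})$ Weyl term and the deep region gives $\bigo(h^{-2s/(2-s)})$, uniformly for $d\ge 2$. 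Your Hardy-plus-Lieb--Thirring split at $R_h\sim h^{2/(2-s)}$ produces the same decoupling and is arguably more elementary, but only works where the Hardy constant is positive.

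For $d=2$ there are two genuine gaps. First, $\md{\inf\sigma(H_h)}\lesssim h^{-2s/(2-s)}$ is a \emph{lower} bound on the operator, i.e.\ an \emph{upper} bound on $\md{\inf\sigma(H_h)}$; a trial function only ever gives $\inf\sigma(H_h)\le\langle\psi,H_h\psi\rangle$, i.e.\ a lower bound on $\md{\inf\sigma}$ --- the wrong direction. You need the scaling argument of Lemma~\ref{lem:gse} (or a form-boundedness estimate), not a variational one. Second, and more seriously, the interpolation $\tr[H_h]_-\le\md{\inf\sigma(H_h)}^{1-\beta}\tr[H_h]_-^\beta$ together with the $\beta$-moment Lieb--Thirring inequality gives, for any fixed $\beta<d/s-d/2$, a bound $\tr[H_h]_-\le C_\beta\,h^{-d}\,h^{-\frac{2s}{2-s}(1-\beta)}$, whose exponent reaches $-2s/(2-s)$ only in the limit $\beta\uparrow d/s-d/2$; for each admissible $\beta$ the power of $h$ is strictly worse. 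Moreover $C_\beta$ contains $\int[V]_-^{\beta+d/2}\,\dd\mbf x$, which diverges as $\beta\uparrow d/s-d/2$ at rate $\sim((d/s-d/2)-\beta)^{-1}$, so optimising $\beta$ as a function of $h$ still loses at least a factor $\md{\log h}$. The step ``an arbitrarily small loss in the exponent can be absorbed into the $\bigo$'' is not valid: $\bigo(h^{-a-\delta})$ for every $\delta>0$ does not yield $\bigo(h^{-a})$, and in the borderline case $s=\scr$ it certainly does not yield $\bigo(h^{-d}\md{\log h})$. (For $s<\scr$ there is no gap since you may simply take $\beta=1$.) To recover the sharp bound in $d=2$ you need a spatial split by potential depth, as in the paper's Lemma~\ref{lem:ltsing}, rather than a global interpolation inequality.
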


The essential ingredient is the knowledge of the ground state energy.

\begin{lem}
\label{lem:gse}
For $d \geq 1$ and $0<s<2$, there exists a constant $E>0$ such that
\[ - h^2 \Delta - \nx[-s] \geq - h^{ -\frac {2 s}{2-s}}\,E. \]
\end{lem}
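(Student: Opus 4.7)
The plan is a scaling argument that reduces the claim to the $h=1$ case, plus a standard form-boundedness estimate to handle that base case.

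First, I would establish that the operator $A := -\Delta - \nx[-s]$ (i.e.\ the $h=1$ version) is semibounded from below on $L^2(\mathbb{R}^d)$. The key fact is that for $0<s<2$ and $d \geq 2$, the potential $\nx[-s]$ is infinitesimally form-bounded with respect to $-\Delta$: splitting $\nx[-s] = \nx[-s] \chrf_{\nx \leq 1} + \nx[-s] \chrf_{\nx > 1}$, the tail is bounded by $1$, and the singular part lies in $L^p(\mathbb{R}^d)$ for every $p < d/s$. Since $s < 2$, one may choose such $p$ with $p > d/2$; Hölder combined with the Sobolev embedding $H^1 \hookrightarrow L^{2p/(p-1)}$ then yields, for any $\epsilon > 0$, a constant $C_\epsilon > 0$ such that
\[ \int \nx[-s]\,\md{\psi}^2\,\dd\mbf x \leq \epsilon \nor{\nabla \psi}_2^2 + C_\epsilon \nor{\psi}_2^2. \]
Taking $\epsilon < 1$ produces a constant $E_0 > 0$ with $A \geq -E_0$ as a quadratic form (and hence as a self-adjoint operator, via the Friedrichs extension).

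Second, I would perform a scaling argument using the dilation $(U_\lambda \psi)(\mbf x) := \lambda^{d/2} \psi(\lambda \mbf x)$, which is unitary on $L^2(\mathbb{R}^d)$. A direct change of variables gives
\[ U_\lambda^* (-\Delta) U_\lambda = \lambda^2 (-\Delta), \qquad U_\lambda^* \nx[-s] U_\lambda = \lambda^s \nx[-s]. \]
Choosing $\lambda = h^{-2/(2-s)}$ so that $h^2 \lambda^2 = \lambda^s = h^{-2s/(2-s)}$, we obtain
\[ U_\lambda^* \big(-h^2 \Delta - \nx[-s]\big) U_\lambda = h^{-\frac{2s}{2-s}} \big(-\Delta - \nx[-s]\big) \geq -h^{-\frac{2s}{2-s}} E_0, \]
where the last inequality uses Step 1. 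Since $U_\lambda$ is unitary, conjugation preserves the lower bound, yielding the claim with $E := E_0$.

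The main technical point is really Step 1, the infinitesimal form-boundedness of $\nx[-s]$ relative to $-\Delta$; however this is standard once one notes the integrability condition $sp < d$ is compatible with $p > d/2$ precisely when $s < 2$. I would note in passing that for $d=2$ the Sobolev exponent is any finite $q$, and for $d \geq 3$ one has $q = 2d/(d-2)$, so the argument goes through uniformly; the condition $s < 2$ is exactly what is needed so that after scaling by $\lambda = h^{-2/(2-s)}$ both kinetic and potential terms transform homogeneously with the same power of $h$.
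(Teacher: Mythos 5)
Your proof is correct and follows essentially the same strategy as the paper: establish semi-boundedness of the $h=1$ operator and then transfer it to general $h$ by a dilation scaling with exponent $-2/(2-s)$. You supply more detail than the paper does on the semi-boundedness step (the paper simply invokes that $V_-\in L^{d/2}$, which actually needs the refinement $L^{d/2}_{\mathrm{loc}}+L^\infty$ that your Hölder--Sobolev splitting makes explicit), and your scaling exponent $\lambda=h^{-2/(2-s)}$ has the correct sign, whereas the paper writes $\lambda=h^{2/(2-s)}$, an evident typo.
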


\begin{proof}
Since the potential fulfills $V_- \in L^{\frac d 2} (\mathbb R^d )$, the Schrödinger operator is semi-bounded, i.e.
\[ -h^2 \Delta - \vert \mbf x \vert^{-s} \geq E_h . \]
By scaling $\mbf y := \lambda\,\mbf x, \Delta_{\mbf y} = \lambda^{-2}\Delta_{\mbf x}, \vert \mbf y \vert^{-s} = \lambda^{-s}\vert \mbf x \vert^{-s}$, we obtain
\[ - \Delta_{\mbf y} - \vert \mbf y \vert^{-s} = \lambda^{-s} \left(- \lambda^{s-2}\,\Delta_{\mbf x} - \vert \mbf x \vert^{-s} \right) \]
and thus, by putting $\lambda = h^{2/(2-s)}$,
\begin{align*}
 - h^2 \Delta_{\mbf x} - \vert \mbf x \vert^{-s} &\geq h^{-\frac{2 s}{2-s}}\,E_1 . \qedhere
 \end{align*}
\end{proof}

\begin{lem}
\label{lem:ltsing}
Let $d\geq 2$, $V_+ \in L^{1}_{\mathrm{loc}}$, $V_- \in L^{\epsilon + \frac d 2}$ for some $0<\epsilon<1$, and let $E>0$ such that
\[ -\Delta + V \geq -E . \]
Then, there exist constants $A, B>0$ (depending only on $d$ and $\epsilon$) such that
\[ \tr [-\Delta + V]_- \leq A \smashoperator{\int\limits_{V(\mbf x) \geq - \frac E 2}} \dd \mbf x\,[V(\mbf x)]_-^{1 + \frac d 2} + B\,E^{1-\epsilon} \smashoperator{\int\limits_{V(\mbf x)< - \frac E 2}} \dd \mbf x\,[V(\mbf x)]_-^{\epsilon + \frac d 2} .  \]
\end{lem}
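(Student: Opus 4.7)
My plan is to run an Aizenman--Lieb-style iteration that bootstraps the hypothesized Lieb--Thirring inequality with $\beta=\epsilon$ up to a bound on the first moment $\tr[-\Delta+V]_-$, using the ground state assumption $-\Delta+V\geq -E$ to truncate the range of integration. Setting $N(\mu):=\#\{i:E_i<-\mu\}$, the layer-cake representation gives
\[ \tr[-\Delta+V]_- = \int_0^\infty N(\mu)\,d\mu = \int_0^E N(\mu)\,d\mu,\]
since the hypothesis forces $N(\mu)=0$ for $\mu\geq E$.

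The first real step is an inequality for $N(\mu)$. For any $0<\lambda<\mu$, each eigenvalue $E_i<-\mu$ of $-\Delta+V$ contributes $|E_i+\lambda|>\mu-\lambda$ to the low-moment trace of the shifted operator, so applying the Lieb--Thirring inequality with exponent $\epsilon$ to $-\Delta+V+\lambda$ yields
\[ (\mu-\lambda)^\epsilon N(\mu) \leq \tr[-\Delta+V+\lambda]_-^\epsilon \leq L_{d,\epsilon}\int [V(\mbf x)+\lambda]_-^{\epsilon+d/2}\dd\mbf x. \]
I would then take the balanced choice $\lambda=\mu/2$ to obtain
\[ N(\mu) \leq 2^\epsilon\mu^{-\epsilon}L_{d,\epsilon}\int [V(\mbf x)+\mu/2]_-^{\epsilon+d/2}\dd\mbf x.\]

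Substituting this into the layer-cake formula and swapping the $\mu$- and $\mbf x$-integrals by Fubini (the integrand is nonnegative), it remains to estimate, for fixed $\mbf x$,
\[ I(\mbf x) := \int_0^E \mu^{-\epsilon}[V(\mbf x)+\mu/2]_-^{\epsilon+d/2}\,d\mu, \]
whose integrand is supported on $\mu<2[V(\mbf x)]_-$. Splitting by whether the natural upper limit $2[V]_-$ fits inside $[0,E]$ or not---equivalently by whether $[V]_-\leq E/2$ or $>E/2$---is then automatic. On the regular region $\{V\geq -E/2\}$ the substitution $u=\mu/(2[V]_-)$ evaluates $I$ to $2^{1-\epsilon}[V]_-^{1+d/2}B(1-\epsilon,1+\epsilon+d/2)$, giving the first term of the target bound with $A=2L_{d,\epsilon}B(1-\epsilon,1+\epsilon+d/2)$. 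On the singular region $\{V<-E/2\}$ the crude pointwise bound $([V]_--\mu/2)^{\epsilon+d/2}\leq[V]_-^{\epsilon+d/2}$ combined with $\int_0^E\mu^{-\epsilon}\,d\mu=E^{1-\epsilon}/(1-\epsilon)$ yields the second term with $B=2^\epsilon L_{d,\epsilon}/(1-\epsilon)$.

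I expect the argument to go through without serious difficulty. The layer-cake identity and Fubini are routine because the Lieb--Thirring inequality with $\beta=\epsilon$ already guarantees that $[-\Delta+V]_-$ is trace class, and the case analysis is elementary. The only genuinely delicate choice is the shift $\lambda=\mu/2$: it balances the prefactor $(\mu-\lambda)^{-\epsilon}$ against the integrand, and, crucially, it is exactly the tuning that makes the two cases of the resulting $\mu$-integral line up with the two regions $[V]_-\lessgtr E/2$ appearing in the statement.
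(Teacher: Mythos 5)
Your proof is correct and follows essentially the same strategy as the paper's: a layer-cake representation of the trace over $\mu \in [0,E]$, a low-moment Lieb--Thirring bound applied to the shifted operator $-\Delta + V + \mu/2$, Tonelli, and the split at $[V]_- = E/2$ dictated by whether the truncation in $\mu$ comes from the layer-cake bound $E$ or from the natural support cutoff $2[V]_-$. The only cosmetic difference is that the paper runs the argument with the Lieb--Thirring moment $\beta = \epsilon/2$ (and a slightly more refined estimate in the singular region that keeps the $(1-\nu)$-factor and produces a Beta function there too), whereas you use the moment $\epsilon$ directly and drop the $(1-\nu)$-factor; both give bounds of the stated form with admissible constants.
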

\begin{proof}
We put $\epsilon := 2 \beta$.

By the Lieb--Thirring inequality,
\begin{align*}
\tr [-\Delta + V]_- &= \int_0^E \dd\lambda\,N_{< - \frac \lambda 2} (-\Delta + V + \tfrac \lambda 2 ) \\
&\leq 2^\beta \int_0^E \frac{\dd \lambda}{\lambda^\beta} \tr [ -\Delta + V + \tfrac \lambda 2 ]_-^\beta \\
&\leq 2^\beta L_{d,\beta} \int_0^E \frac{\dd \lambda}{\lambda^\beta} \smashoperator{\int\limits_{V(\mbf x) \leq - \frac \lambda 2}} \dd\,\mbf x\,\md{[V(\mbf x)]_- - \tfrac \lambda 2}^{\beta + \frac d 2} \\
&= 2^\beta L_{d,\beta} \Big(     \smashoperator{\int\limits_{V(\mbf x) \geq - \frac E 2}} \dd \mbf x \smashoperator{\int\limits_0^{2 \md{V(\mbf x)}}} \frac{\dd \lambda}{\lambda^\beta } \md{[V(\mbf x)]_- - \tfrac \lambda 2 }^{\beta + \frac d 2 }\; + \\
&\hphantom{2 L_{d,\beta} \Big(}
        +\;       \smashoperator{\int\limits_{V(\mbf x)<- \frac E 2}} \dd \mbf x \int_0^{E} \frac{\dd \lambda}{\lambda^\beta } \md{[V(\mbf x)]_- - \tfrac \lambda 2 }^{\beta + \frac d 2 }  \Big) \\
&= 2 L_{d,\beta} \Big( \smashoperator{\int\limits_{V(\mbf x) \geq - \frac E 2}} \dd \mbf x\,[V(\mbf x)]_-^{1+\frac d 2}\;\; \int_0^{1} \frac{\dd \nu}{\nu^\beta } (1 - \nu )^{\beta + \frac d 2 } 
        \;+ \\
        &\hphantom{2 L_{d,\beta} \Big(} +\;      \smashoperator{\int\limits_{V(\mbf x)<- \frac E 2}} \dd \mbf x\,[V(\mbf x)]_-^{1+\frac d 2}\;\;\smashoperator{\int_0^{E/2\md{V}}} \frac{\dd \nu}{\nu^\beta } (1-\nu )^{\beta + \frac d 2 }  \Big) ,
\end{align*}
where we substituted $\lambda =: 2 \md{V(\mbf x)} \nu$ in the last step. We now treat the first (the classical) and the second (the quantum) $\dd\mbf x$-integral separately.

In the classical integral, the $\dd\nu$-integral has $\mbf x$-independent boundaries. The $\dd\nu$-integral is a finite constant; it equals $\mathrm{B} (1 - \beta , 1 + \beta + \tfrac d 2)$, where $\mathrm B (x, y) = \Gamma(x) \Gamma(y) / \Gamma (x+y)$ is the Euler Beta function. Therefore, the classical part may be expressed as
\[
2\mathrm{B} (1 - \beta , 1 + \beta + \tfrac d 2) L_{d,\beta} \,\smashoperator{\int\limits_{V(\mbf x) \geq - \frac E 2}} \dd \mbf x\,[V(\mbf x)]_-^{1 + \frac d 2} .
\]

In order to bound the quantum integral, we use that, for $0<\nu<E / (2 \md{V(\mbf x)} )$,
\[ \frac{1}{\nu^\beta} \leq \left( \frac{E}{2 \md{V(\mbf x)}}\right)^{1-2\beta} \frac{1}{\nu^{1 - \beta}} . \]
After untertaking this modification, we let the $\dd \nu$-integral run to $1$ and thereby obtain the upper bound
\[
2^{2\beta} \mathrm{B} (\beta , 1 + \beta + \tfrac d 2) L_{d,\beta}\,E^{1-2\beta } \,\smashoperator{\int\limits_{V(\mbf x) \geq - \frac E 2}} \dd \mbf x\,[V(\mbf x)]_-^{2 \beta + \frac d 2} . \qedhere
\]
\end{proof}

We now apply this bound to the Schrödinger operator ($\mu$ being a positive constant)
\[ H = - h^2 \Delta + (-\nx[-s] + \mu ) . \]
\begin{proof}[Proof of Lemma~\ref{lem:landaus}]
In order for the quantum integral to be well-defined, it is necessary that we put
\[ \epsilon<d \tfrac{2-s}{2s}. \]
Our analysis in Lemma~\ref{lem:gse} provides us with the ground state energy
\[ - \big( C\,h^{- \frac{2 s}{2-s}} - \mu \big) . \]
Putting
\[ h^2 E = 2 \big( C\,h^{- \frac{2 s}{2-s}} - \mu \big) , \]
we have, for $\mbf x$ in the domain of the quantum integral,
\[ -\nx[-s] + \mu <-\tfrac 12 h^2 E , \]
i.e.
\[ \md{\mbf x}<C h^{\frac{2}{2-s}} . \]

The classical integral (multiplied with $h^2$) is
\begin{align*}
h^2 \smashoperator{\int\limits_{\substack{\md{\mbf x} \geq C h^{\frac{2}{2-s}} \\ \nx \leq \mu^{-\frac 1 s} }}} \dd \mbf x\,h^{-(2+d)}\md{-\nx[-s] + \mu }^{1 + \frac d 2} &\leq h^2 \smashoperator{\int\limits_{\substack{\md{\mbf x} \geq C h^{\frac{2}{2-s}} \\ \nx \leq \mu^{-\frac 1 s} }}} \dd \mbf x\,h^{-(2+d)}\,\nx[-s \left(1 + \frac d 2 \right)]  \\
&= \scases{\bigo (h^{-d})}{\bigo (h^{-d} \md{\log h})}{\bigo (h^{-\frac{2s}{2-s}})} .
\end{align*}
On the other hand, the quantum integral is
\begin{align*}
h^2\,E^{1-\epsilon} \smashoperator{\int\limits_{\md{\mbf x} \leq C h^{\frac{2}{2-s}}}} \dd \mbf x\,h^{-(d + 2 \epsilon )}\md{-\nx[-s] + \mu }^{\epsilon + \frac d 2} &= \bigo ( h^{2 -\frac{4}{2-s} - d 
+ \epsilon ( \frac{4}{2-s} - 2 ) + d - \frac{2 s}{2 - s} \epsilon } ) \\[-2em]
&= \bigo (h^{- \frac{2 s}{2 - s}}) . \qedhere
\end{align*}
\end{proof}

\subsection{$L^p$ bounds for the one-body densities}

\begin{lem}
\label{lem:traceest}
Let $\Omega \subset \mathbb R^d$ have finite Lebesgue measure, $0 \leq \Phi (\mbf x) \leq 1$ and $V(\mbf x)$ be real-valued functions supported on $\Omega$, $\gamma \in \mathfrak L(L^2(\mathbb R^d ))$ with\footnote{Note that $\gamma$ is not required to be trace class, i.e. a density operator.} $0 \leq \gamma \leq I$, and assume
\[ \tr [(-h^2 \Delta + V) \Phi \gamma \Phi ] \leq 0 . \]
\begin{enumerate}[(i)]
\item  Let $V \in L^{1+\frac d 2}(\Omega )$. Then, $\Phi \gamma \Phi$ is trace class and there exists a constant $C = C(d)$ (i.e. independent of $h$) such that 
\begin{align*} \|\rho_{\Phi \gamma \Phi}\|_{{1+\frac 2 d}} &\leq C h^{-d} \nor{V_- }_{{{1+\frac d 2}}}^{\frac d 2}  \\
 \tr (\Phi \gamma \Phi ) &\leq C h^{-d} \nor{V_- }_{{1+\frac d 2}}^{\frac d 2} \md{\Omega}^{\oolt} . \end{align*}
In particular, if $V_- \in L^\infty (\Omega )$,
\begin{align*}
 \| \rho_{\Phi \gamma \Phi}\|_{{1+\frac 2 d}} &\leq C h^{-d} \nor{V_- }_{{\infty}}^{\frac d 2}\,\md{\Omega}^{\frac{d}{2+d}}  \\
 \tr (\Phi \gamma \Phi ) &\leq C h^{-d} \nor{V_- }_{\infty}^{\frac d 2}\,\md{\Omega} .
 \end{align*}
\item  Let
\[ [ V(\mbf x) ]_- \leq C_0 \nx[-s]   \] for some $0<s<2$ and $C_0>0$. Then, $\Phi \gamma \Phi$ is trace class and there exists a constant $C = C(C_0, s, d)$ such that for all $h < 1$
\begin{align*} \|\rho_{\Phi \gamma \Phi }\|_{{1+\frac 2 d}} &\leq	\scases{ C h^{-d} ( \md{\Omega} + 1)^{\scrh}   }{ C h^{-d} ( \md{\Omega} + \vert \log h \vert )^{\scrh}   }{ C h^{-d} \big(\md{\Omega} + h^{-\frac{4 (s-\scr)}{(2-s)(2-\scr )}} \big)^{\scrh}   } .  
\end{align*}
\end{enumerate}
\end{lem}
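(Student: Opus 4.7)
For part (i), the plan is to rearrange the hypothesis as $h^2\tr[-\Delta(\Phi\gamma\Phi)]\le\int V_-\rho_{\Phi\gamma\Phi}\,\dd\mbf x$ and combine with the kinetic Lieb--Thirring inequality $\tr[-\Delta(\Phi\gamma\Phi)]\ge K_d\int\rho_{\Phi\gamma\Phi}^{1+2/d}\,\dd\mbf x$. Because $0\le\Phi\gamma\Phi\le\Phi^2\le I$, a spectral-cutoff approximation of $\gamma$ lets one treat $\Phi\gamma\Phi$ as an honest density matrix, so the kinetic inequality applies. H\"older with the conjugate pair $(1+d/2,1+2/d)$ on the right-hand integral then yields $K_d h^2\nor{\rho_{\Phi\gamma\Phi}}_{1+2/d}^{1+2/d}\le\nor{V_-}_{1+d/2}\nor{\rho_{\Phi\gamma\Phi}}_{1+2/d}$, and cancelling one factor of the norm produces the first bound of (i); the trace bound follows by H\"older on $\Omega$, namely $\tr(\Phi\gamma\Phi)=\int\rho_{\Phi\gamma\Phi}\le\md{\Omega}^{2/(d+2)}\nor{\rho_{\Phi\gamma\Phi}}_{1+2/d}$, and the $L^\infty$ variants use $\nor{V_-}_{1+d/2}\le\nor{V_-}_\infty\md{\Omega}^{2/(d+2)}$.

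For (ii) with $s<\scr$, I would reduce directly to (i): the assumption $[V(\mbf x)]_-\le C_0\md{\mbf x}^{-s}$ together with $\md{\Omega}<\infty$, upon splitting $\int V_-^{1+d/2}$ across $\md{\mbf x}\le 1$ and $\md{\mbf x}>1$ and using $s(1+d/2)<d$, gives $\nor{V_-}_{1+d/2}^{1+d/2}\le C(1+\md{\Omega})$. At the critical exponent $s=\scr$ the integral near the origin is logarithmically divergent, and one truncates $V_-$ at the ground-state scale $h^{-2s/(2-s)}$ coming from Lemma~\ref{lem:gse}; a computation mirroring the $s=\scr$ case in the proof of Lemma~\ref{lem:landaus} then yields a truncated $L^{1+d/2}$-norm of order $C(\md{\Omega}+\md{\log h})^{1/(1+d/2)}$, and plugging into (i) reproduces the claim with the $\md{\log h}$ factor.

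For the technically interesting case $s>\scr$, where $V_-\notin L^{1+d/2}$, I would follow the template of Lemma~\ref{lem:ltsing}: truncate at $E=Ch^{-2s/(2-s)}$ and write $V_-=\min(V_-,E)+(V_--E)_+$. A direct computation using $[V]_-\le C_0\md{\mbf x}^{-s}$ gives $\nor{\min(V_-,E)}_{1+d/2}^{1+d/2}\le C(E^{(1+d/2)-d/s}+\md{\Omega})=C(h^{-(d+2)(s-\scr)/(2-s)}+\md{\Omega})$, which is precisely the effective volume inside the claimed bound after using the identity $(d+2)(s-\scr)/(2-s)=4(s-\scr)/((2-s)(2-\scr))$ (from $2-\scr=4/(d+2)$). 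The truncated part's contribution to $\int V_-\rho_{\Phi\gamma\Phi}$ is handled by H\"older exactly as in (i); the singular remainder $\int(V_--E)_+\rho_{\Phi\gamma\Phi}$, supported on the quantum ball $\{\md{\mbf x}\lesssim h^{2/(2-s)}\}$, is then bounded via the operator inequality $C_0\md{\mbf x}^{-s}\le\tfrac12(-h^2\Delta)+Ch^{-2s/(2-s)}$ obtained by scaling in Lemma~\ref{lem:gse}, with the kinetic piece absorbed into the left-hand side.

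The main obstacle is keeping this absorption sharp enough to avoid polluting the final bound by a term of the form $h^{-ds/(2-s)}\md{\Omega}^{d/(d+2)}$: a naive Hardy estimate combined with $\tr(\Phi\gamma\Phi)\le\md{\Omega}^{2/(d+2)}\nor{\rho_{\Phi\gamma\Phi}}_{1+2/d}$ introduces exactly such a term, degrading the $h$-scaling from the claimed $h^{-4d/((d+2)(2-s))}$ to $h^{-2d/(2-s)}$. Achieving the sharp bound requires a more refined variational analysis, in the spirit of Lemma~\ref{lem:ltsing}, that exploits the smallness of the quantum ball so that the singular zone effectively contributes only its own volume $\lesssim h^{2d/(2-s)}$ to the error rather than the full $\md{\Omega}$.
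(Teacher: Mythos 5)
Your argument for part~(i) matches the paper: rearrange the hypothesis to bound $h^2\tr[-\Delta(\Phi\gamma\Phi)]$ by $\int V_-\rho_{\Phi\gamma\Phi}$, apply the kinetic Lieb--Thirring inequality and H\"older, and cancel a power of $\|\rho_{\Phi\gamma\Phi}\|_{1+2/d}$; the trace-class justification and the two $L^\infty$ corollaries are handled as in the paper. For part~(ii) with $s<\scr$ your direct reduction via $\|V_-\|_{1+d/2}^{1+d/2}\leq C(1+|\Omega|)$ is also fine and equivalent to what the paper does.

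The gap is in $s\geq\scr$, and you have honestly identified it yourself: after splitting off the singular piece, you try to control $\int(V_--E)_+\rho_{\Phi\gamma\Phi}$ by an operator inequality $C_0|\mathbf x|^{-s}\leq\tfrac12(-h^2\Delta)+Ch^{-2s/(2-s)}$ followed by H\"older and $\tr(\Phi\gamma\Phi)\leq|\Omega|^{2/(d+2)}\|\rho_{\Phi\gamma\Phi}\|_{1+2/d}$, and this loses the sharp $h$-power. The missing idea is \emph{not} a more refined version of that absorption; it is to abandon testing the singular piece against $\rho_{\Phi\gamma\Phi}$ altogether. The paper instead writes
\[
-\tr\bigl[(-\tfrac12 h^2\Delta - V_-)\Phi\gamma\Phi\bigr] \leq -\tr\bigl[(-\tfrac14 h^2\Delta - C_0|\mathbf x|^{-s}+1)\Phi\gamma\Phi\bigr] - \tr\bigl[(-\tfrac14 h^2\Delta - 1)\Phi\gamma\Phi\bigr],
\]
and then uses the plain variational bound $-\tr[H\,\Phi\gamma\Phi]\leq\tr[H]_-$ (valid since $0\leq\Phi\gamma\Phi\leq I$) on the first term. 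This replaces the quantity that caused your trouble by $\tr[-\tfrac14 h^2\Delta - C_0|\mathbf x|^{-s}+1]_-$, a number that is $O(h^{-2s/(2-s)})$ \emph{uniformly in} $\Omega$ and $\gamma$. That estimate is the content of Lemma~\ref{lem:landaus}, which in turn rests on Lemma~\ref{lem:ltsing}: one writes $\tr[H]_-$ as $\int_0^E N_{<-\lambda/2}(H+\lambda/2)\,\dd\lambda$, bounds the eigenvalue count by a low-moment Lieb--Thirring expression $\lambda^{-\beta}\tr[H+\lambda/2]_-^\beta$ with $2\beta=\depsilon<1$, and integrates; the $\lambda$-integral is cut off at the ground-state energy scale on the set where $V$ is very negative, which is what produces the $h^{-2s/(2-s)}$ rather than a divergent quantity. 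In short, the sharp bound comes from bounding the singular contribution by the total negative-spectrum sum of a reference Schr\"odinger operator (computed once via the low-moment Lieb--Thirring inequality), not by pairing the singular potential against the density of $\Phi\gamma\Phi$.

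At $s=\scr$ the same mechanism applies and the $|\log h|$ emerges from Lemma~\ref{lem:landaus} in the critical case; your truncation heuristic there points in the right direction but again does not close without the Lemma~\ref{lem:ltsing}-type layer-cake argument.
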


\begin{proof}
\begin{enumerate}[(i)]
\item For $a>0$,
\[  0 \geq \tr \left[ ( -\tfrac 12 h^2 \Delta + a ) \Phi \gamma \Phi  \right] + \tr \left[ ( -\tfrac 12 h^2 \Delta - (V_- + a)\,\chrf_{\Omega} ) \Phi \gamma \Phi  \right] . \]
Putting $a = 1$ and using $-\Delta \geq 0$, we infer that $\tr (\Phi \gamma \Phi ) < \infty$, i.e. $\Phi \gamma \Phi$ is trace class, because the first term on the right hand side is non-negative and the second is bounded below by the Lieb--Thirring inequality.

Moreover, putting $a = 0$,
\begin{align*}
\tfrac 12 h^2 \tr [-\Delta\,(\Phi\gamma\Phi ) ] &\leq - \tr \left[ ( - \tfrac 12 h^2 \Delta - V_- ) \Phi \gamma \Phi \right] \\
&\leq C h^{-d}\,\int_\Omega \dd\mbf x\,[V(\mbf x)]_-^{1+\frac d 2} .
\end{align*}
The bound on $\| \rho_{\Phi \gamma \Phi} \|_{1+2/d}$ follows by the kinetic Lieb--Thirring inequality, which is applicable because $\Phi \gamma \Phi$ is trace class. The bound on the trace $\tr (\Phi \gamma \Phi )$ then follows by Hölder's inequality.

\item The argument is overall similar to case (i). However, in the second step the Lieb--Thirring inequality does not lead to a finite bound, because $V_-$ is not in $L^{1+\frac d 2}$. Instead, we use the bound from Lemma~\ref{lem:landaus}:
\begin{align*}
\tfrac 12 h^2 \tr [-\Delta\,(\Phi\gamma\Phi) ] &\leq - \tr \left[ ( - \tfrac 12 h^2 \Delta - V_- ) \Phi \gamma \Phi \right] \\
&\leq - \tr \left[ ( - \tfrac 14 h^2 \Delta - C_0 \md{\mbf x}^{-s} + 1 ) \Phi\gamma \Phi \right] - \\
&\hphantom{\leq\;} - \tr \left[ ( - \tfrac 14 h^2 \Delta - 1 ) \Phi \gamma \Phi \right] \\
&\leq \scases{C h^{-d} (1+ \md{\Omega})}{C h^{-d} (\md{\log h}+ \md{\Omega})}{C h^{-\frac{2s}{2-s}} + h^{-d} \md{\Omega}} ,
\end{align*}
where in the last step we used Lemma~\ref{lem:landaus} on the first and the Lieb--Thirring inequality on the second term (multiplying a characteristic function $\chrf_\Omega$ to the potential). \qedhere \end{enumerate}
\end{proof}

\subsection{Localization}
We assume a positive constant $\alpha > 0$ and a (small) positive constant $\depsilon > 0$ to be fixed, and $\depsilon$ be such that $\alpha \in \depsilon \mathbb N$. We define
\begin{align*}
N &:= \alpha / \depsilon \\
\theta_n &:= n \depsilon \qquad (n \in \mathbb Z, N \geq n > - \infty ) .
\end{align*} 

Let $\varphi \in C_{\mathrm{c}}^2 (\mathbb R^d )$ be a radial function $0 \leq \varphi \leq 1$ with $\varphi (\mbf x) = 1$ for $\md{\mbf x} \leq 1$ and $\varphi (\mbf x) = 0$ for $\md{\mbf x} \geq 2$.

We define the following localization functions:
\begin{align*}
\Phi^{\mathrm{q}} (\mbf x) &:= \varphi ( h^{-\alpha} \mbf x) \\
\Phi^{\mathrm{sc}}_{n} (\mbf x) &:= \varphi (h^{-\theta_{n-1}} \mbf x ) \prod_{N \geq j \geq n} \sqrt{1 - \varphi^2 ( h^{- \theta_j} \mbf x)} \quad \text{for}\;N \geq n>-\infty \\
&= \varphi (h^{-\theta_{n-1}} \mbf x ) \sqrt{1 - \varphi^2 (h^{-\theta_n} \mbf x)} \quad \text{(for $h$ small enough)} 
\end{align*}
We denote  $\Omega^{\mathrm q} := \operatorname{supp} \Phi^{\mathrm{q}}$, and so forth. 

These functions fulfill the conditions of the IMS localization formula, i.e.
\[ (\Phi^{\mathrm{q}}  )^2 + \sum_{N \geq n>-\infty} (\Phi^{\mathrm{sc}}_{n} )^2 = 1 . \]

To streamline our notation, we use the shortcut $\sum_n \dots \Phi_n \dots$ to denote summation over all localization functions (quantum and semiclassical).

Furthermore,
\begin{align*}
\Phi^{\mathrm q} (\mbf x) &= \LEFTRIGHT\lbrace. { \begin{array}{ll} 0 & \text{for}\;\nx \geq 2 h^\alpha \\
          1 & \text{for}\;\nx \leq h^\alpha \end{array} } \\
\Phi^{\mathrm{sc}}_{n} (\mbf x) &= \LEFTRIGHT\lbrace. { \begin{array}{ll} 0 & \text{for}\;\nx \leq h^{\theta_n}\;\text{or}\;\nx \geq 2 h^{\theta_{n-1}} \\ 1 & \text{for}\; 2 h^{\theta_n} \leq \nx \leq h^{\theta_{n-1}} \end{array} }  \\
\end{align*}
Finally, we observe that
\begin{align*}
\md{\nabla \Phi^{\mathrm{q}} (\mbf x) }^2 &\leq \nor{\nabla \varphi}_\infty^2 h^{- 2 \alpha}\,\chrf_{\nx \leq 2 h^\alpha} \\
\md{\nabla \Phi^{\mathrm{sc}}_{n} (\mbf x) }^2 &\leq 2 \nor{\nabla \varphi}_{\infty}^2 h^{- 2 \theta_n}\,\chrf_{h^{\theta_n} \leq \nx \leq 2 h^{\theta_{n-1}}} , 
\end{align*}
and hence
\[ \sum_j \md{\nabla \Phi_j (\mbf x)}^2 \leq C h^{- 2 \theta_n} \quad \text{for}\;\mbf x \in \Omega^{\mathrm{sc}}_{n} . \]
Similarly, for all multiindices $\mbf \beta \in \mathbb N_0^d$,
\[
\md{\partial^{\mbf \beta } \Phi^{\mathrm{sc}}_{n} (\mbf x) } \leq 2^{\vert \mbf \beta \vert}h^{- \vert \mbf \beta \vert\,\theta_n}\, \Big(\!\max_{\vert \mbf \beta' \vert \leq \vert \mbf \beta \vert}\| \partial^{\mbf \beta'} \varphi \|_{\infty}\!\Big) \,\chrf_{h^{\theta_n} \leq \nx \leq 2 h^{\theta_{n-1}}} .
\]

\begin{lem}[Localization error]
\label{lem:localization}
Let $V(\mbf x)$ be like specified above. Then there is a constant $A>0$ (depending only on $d$) such that, for $\depsilon$ sufficiently small,
\begin{align*}
\tr [-h^2 \Delta + V ]_- &= \tr [\Phi^{\mathrm{q}} (-h^2 \Delta + V ) \Phi^{\mathrm{q}}]_- + \\*
&\qquad + \sum_{N \geq n>-\infty} \tr [\Phi^{\mathrm{sc}}_{n} (-h^2 \Delta + V ) \Phi^{\mathrm{sc}}_{n}]_-\;+ \\*
&\qquad + \scases{ 
      \bigo (h^{-d + 2(1-\alpha) + \scr \alpha }) }
   {  \bigo \big(h^{-d + 2(1-\alpha) + \scr \alpha - A \depsilon } \big)  }
   {  \bigo \big(h^{-d + 2 (1-\alpha) + \scr \alpha - \frac{4 (s- \scr )}{(2-s)(2-\scr )} } \big)  } .
\end{align*}
\end{lem}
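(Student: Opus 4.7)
The plan is to apply the generalized IMS formula, reducing the localization error to a single ``commutator'' term $\tr[W \gamma^*]$, and then to bound that term zone-by-zone using Lemma~\ref{lem:traceest}(ii).

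Applying the generalized IMS formula to our (infinite but locally finite) partition of unity,
\[
H := -h^2 \Delta + V = \sum_j \Phi_j H \Phi_j - W, \qquad W := h^2 \sum_j |\nabla \Phi_j|^2 \geq 0,
\]
where $j$ ranges over $\{\mathrm{q}\} \cup \{\mathrm{sc}, n\}_{n \leq N}$. Testing the variational identity $\tr[H]_- = -\inf_{0 \leq \gamma \leq I}\tr[H\gamma]$ with the trial $\gamma := \sum_j \Phi_j \chrf_{\Phi_j H \Phi_j < 0} \Phi_j$, admissible because $\sum_j \Phi_j^2 = I$, gives the lower bound $\tr[H]_- \geq \sum_j \tr[\Phi_j H \Phi_j]_-$. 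For the matching upper bound I test with the true minimizer $\gamma^* := \chrf_{H < 0}$: combining $\tr[H\gamma^*] = \sum_j \tr[\Phi_j H \Phi_j \gamma^*] - \tr[W \gamma^*]$ with the inequality $\tr[A\gamma^*] \geq -\tr[A]_-$ (valid for any self-adjoint $A$ and $0\leq \gamma^* \leq I$) yields
\[
\tr[H]_- \leq \sum_j \tr[\Phi_j H \Phi_j]_- + \tr[W \gamma^*],
\]
so that the full localization error is bounded by $\tr[W \gamma^*]$.

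To bound $\tr[W \gamma^*] = \int W(\mbf x)\, \rho_{\gamma^*}(\mbf x)\,\dd\mbf x$, I use the pointwise estimates on $|\nabla \Phi_j|^2$ recalled just before the lemma: $W \leq C h^{2-2\alpha}\chrf_{\Omega^{\mathrm q}} + C \sum_n h^{2-2\theta_n}\chrf_{\Lambda_n}$, where $\Lambda_n := \{h^{\theta_n} \leq \nx \leq 2 h^{\theta_n}\}$ is the transition annulus on which only $\Phi^{\mathrm{sc}}_n$ and $\Phi^{\mathrm{sc}}_{n+1}$ are non-zero and $(\Phi^{\mathrm{sc}}_n)^2 + (\Phi^{\mathrm{sc}}_{n+1})^2 \equiv 1$. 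Consequently $\int_{\Lambda_n}\rho_{\gamma^*} \leq \tr[\Phi^{\mathrm{sc}}_n \gamma^* \Phi^{\mathrm{sc}}_n] + \tr[\Phi^{\mathrm{sc}}_{n+1}\gamma^* \Phi^{\mathrm{sc}}_{n+1}]$, and by Hölder's inequality each term is bounded by $|\Omega^{\mathrm{sc}}_j|^{d/(d+2)} \|\rho_{\Phi_j \gamma^* \Phi_j}\|_{1+2/d}$. The local density norms on the right are estimated by Lemma~\ref{lem:traceest}(ii), with the pointwise potential bound $[V]_- \leq C \nx[-s]$ from the lemma preceding the proof; the three regimes $s<\scr$, $s=\scr$, $s>\scr$ in that lemma map directly onto the three regimes in the claimed localization error. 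Summing the contributions, the dominant term arises from the innermost semi-classical zone $n=N$ (where $\theta_n = \alpha$), producing the leading exponent $h^{-d + 2(1-\alpha) + \scr \alpha}$. Outer zones ($\theta_n < 0$) have vanishingly small $W \sim h^{2+2|\theta_n|}$ which compensates for the growing volume $|\Lambda_n|$ and the decaying potential (ensured by the $S$-condition in (\ref{eq:conditions})), so the infinite sum over $n$ converges. In the borderline case $s=\scr$, the logarithmic correction in Lemma~\ref{lem:traceest}(ii) combined with the geometric series in $h^\epsilon$ produces the $h^{-A\epsilon}$ factor.

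\emph{Main obstacle.} Applying Lemma~\ref{lem:traceest}(ii) to $\Phi_j \gamma^* \Phi_j$ formally requires $\tr[H \Phi_j \gamma^* \Phi_j] \leq 0$, which is not an immediate consequence of $\tr[H\gamma^*] \leq 0$ because $\Phi_j$ does not commute with $H$. The cleanest work-around I foresee is to bound $\int_{\Lambda_n}\rho_{\gamma^*}$ by comparing with the local minimizer $\chrf_{\Phi_j H \Phi_j < 0}$ (for which the hypothesis of Lemma~\ref{lem:traceest} is automatic). A secondary point is that in the case $s > \scr$ the extra factor $h^{-4(s-\scr)/((2-s)(2-\scr))}$ inherited from the ground-state-energy Lemma~\ref{lem:gse} must be reconciled with the choice of $\alpha$ so that it enters additively rather than dominating the leading $\scr\alpha$ exponent.
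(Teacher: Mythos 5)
Your decomposition of the localization error as $\tr[W\gamma^*]$ via the global minimizer $\gamma^* = \chrf_{H<0}$ is valid, but this is a genuinely different route from the paper, and it introduces exactly the obstacle you flag: to bound $\int_{\Lambda_n}\rho_{\gamma^*}$ via Lemma~\ref{lem:traceest}(ii) you need the sign condition $\tr[(-h^2\Delta+V)\Phi_j\gamma^*\Phi_j]\leq 0$ for the \emph{localized} state $\Phi_j\gamma^*\Phi_j$, and this does not follow from $\tr[H\gamma^*]\leq 0$. The proposed workaround — ``comparing with the local minimizer $\chrf_{\Phi_j H\Phi_j<0}$'' — does not close the gap: the local minimizer minimizes $\tr[\Phi_j H\Phi_j\,\cdot\,]$, but that gives no pointwise or $L^p$ comparison of $\rho_{\Phi_j\gamma^*\Phi_j}$ with $\rho_{\Phi_j\gamma_j\Phi_j}$; the global minimizer may well concentrate mass precisely where the local Hamiltonian is positive. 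This is not a technicality that can be waved off — it is the crux of the proof.

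The paper avoids the global minimizer entirely. After the IMS identity and the subadditivity $-\tr[\sum_j B_j]_- \geq -\sum_j\tr[B_j]_-$, one has $\tr[H]_- \leq \sum_j\tr[\Phi_j(H - C_j)\Phi_j]_-$ with $C_j := \sup_{\Omega_j}W$ (i.e.\ $C h^{2(1-\alpha)}$ for the quantum zone, $C h^{2(1-\theta_n)}$ for the $n$-th semiclassical zone). One then introduces the \emph{shifted local} minimizers $\gamma_j := \chrf_{(-\infty,0)}[\Phi_j(H-C_j)\Phi_j]$; for these, the hypothesis $\tr[\Phi_j(H-C_j)\Phi_j\gamma_j]\leq 0$ holds by construction, so Lemma~\ref{lem:traceest}(ii) applies directly (with potential $V-C_j$, whose contribution one checks is subdominant), and the error $\tr[\Phi_j(H-C_j)\Phi_j]_- - \tr[\Phi_j H\Phi_j]_- \leq C_j\tr(\Phi_j\gamma_j\Phi_j)$ is then bounded zone by zone. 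If you want to salvage your route you should switch to these shifted local minimizers; as written, the step bounding $\tr[W\gamma^*]$ is not justified. A secondary point: the paper's remark after the proof states that for $s\neq\scr$ the dominant contribution is from the \emph{quantum} zone, not the innermost semiclassical zone as you claim; the two coincide only in the borderline case $s=\scr$.
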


\begin{proof}
We start with the upper bound. Choosing
\[ \gamma := \sum_{j} \Phi_j \chrf_{(-\infty, 0)}[\Phi_j (-h^2 \Delta + V ) \Phi_j ] \Phi_j \leq \sum_{j} \Phi_j^2 = 1 , \]
we infer
\begin{align*}
-\tr [-h^2 \Delta + V ]_- &\leq \tr [(-h^2 \Delta + V ) \gamma ] = -\sum_{j} \tr [\Phi_j (-h^2 \Delta + V ) \Phi_j ]_- .
\end{align*}
This proves the upper bound.

For the lower bound, we find with the IMS formula and the min-max principle that
\begin{align*}
-\tr [-h^2 \Delta + V ]_- &\geq -\tr \Big[ \sum_{j} \Phi_j (-h^2 \Delta + V - \sum_k \md{\nabla \Phi_k}^2 ) \Phi_j \Big]_-  \\
&\geq -\sum_{j} \tr [ \Phi_j (-h^2 \Delta + V - \sum_{k} \md{\nabla \Phi_k}^2 ) \Phi_j ]_- \\
&\geq -\tr [ \Phi^{\mathrm q} (-h^2 \Delta + V - C h^{2 (1 - \alpha )} ) \Phi^{\mathrm{q}} ]_- - \\
&\qquad -  \sum_{N \geq n>-\infty} \tr [ \Phi^{\mathrm{sc}}_{n} (-h^2 \Delta + V - C h^{2(1 - \theta_n )} ) \Phi^{\mathrm{sc}}_{n} ]_- .
\end{align*}

We put
\begin{align*} \gamma^{\mathrm{q}} &:= \chrf_{(-\infty, 0 )} [ \Phi^{\mathrm{q}}  (-h^2 \Delta + V - C h^{2(1-\alpha)}) \Phi^{\mathrm{q}} ]    \\
\gamma^{\mathrm{sc}}_n &:= \chrf_{(-\infty, 0 )} [ \psc n  (-h^2 \Delta + V - C h^{2(1-\theta_n )}) \psc n ] .
 \end{align*}
For the quantum zone,
\begin{multline*}
-\tr [\Phi^{\mathrm{q}} (-h^2 \Delta + V - C h^{2 (1 - \alpha)} ) \Phi^{\mathrm{q}}]_- = -\tr [\Phi^{\mathrm{q}} (-h^2 \Delta + V - C h^{2 (1 - \alpha)} ) \Phi^{\mathrm{q}} \gamma^{\mathrm q}] \\
\geq -\tr [\Phi^{\mathrm{q}} (-h^2 \Delta + V ) \Phi^{\mathrm{q}} \gamma^{\mathrm q}] - C h^{2 (1 - \alpha )} \nor{\rho_{\Phi^{\mathrm{q}} \gamma^{\mathrm{q}} \Phi^{\mathrm{q} } } }_{1 + \frac 2 d} \md{\Omega^{\mathrm{q}}}^{\frac{2}{2+d}} .
\end{multline*}
Similarly, for the semiclassical zones,
\begin{multline*}
-\tr [\psc n (-h^2 \Delta + V - C h^{2 (1 - \theta_n )} ) \psc n]_- = -\tr [\psc n (-h^2 \Delta + V  - C h^{2 (1 - \theta_n )} ) \psc n \gamma^{\mathrm{sc}}_n ] \\
\geq -\tr [\psc n (-h^2 \Delta + V ) \psc n \gamma^{\mathrm{sc}}_n ] - C h^{2 (1 - \theta_n )} \tr (\psc n \gamma^{\mathrm{sc}}_n \psc n ) .
\end{multline*}
In the quantum zone $\Omega^{\mathrm{q}}$ we obtain
\begin{align*}
h^{2(1 - \alpha)} \|\rho_{\Phi^{\mathrm{q}} \gamma^{\mathrm{q}} \Phi^{\mathrm{q} } } \|_{1 + \frac 2 d} \md{\Omega^{\mathrm{q}}}^{\frac{2}{2+d}} &= \scases{ 
      \bigo (h^{-d + 2(1-\alpha) + \scr \alpha }) }
   {  \bigo \big(h^{-d + 2(1-\alpha) + \scr \alpha } \md{\log h}^{\frac \scr 2}\big)  }
   {  \bigo \big(h^{-d + 2 (1-\alpha) + \scr \alpha - \frac{4 (s- \scr )}{(2-s)(2-\scr )} } \big)  }
\end{align*}
Note that the bound on $\|\rho_{\Phi^{\mathrm{q}} \gamma^{\mathrm{q}} \Phi^{\mathrm{q} } } \|_{1 + \frac 2 d}$ involves
\[ h^{-d} \big\| [ V - C h^{2 (1-\alpha )} ]_-\,\chrf_{\Omega^{\mathrm q}} \big\|_{1+\frac d 2}^{\frac d 2} .\]
By elementary analysis, this term is of order
\[ \bigo \big( h^{-d} \nor{V_-\,\chrf_{\Omega^{\mathrm{q}}}}_{1+\frac d 2}^{\frac d 2} + h^{-d} \| h^{2 (1 - \alpha )}\,\chrf_{\Omega^{\mathrm{q}}} \|_{1+\frac d 2}^{\frac d 2} \big) = \bigo \big( h^{-d} \| \tilde V_-\,\chrf_{\Omega^{\mathrm{q}}} \|_{1+\frac d 2}^{\frac d 2} \big) . \]
The error terms on the right hand side above were obtained by evaluation of this term.

The same applies to the analysis of the semiclassical zones: The contribution from $h^{2 (1 - \theta_n )}$ is of lower order, because, for all $n$,
\[ h^{2 (1 - \theta_{n} )}<\min_{\Omega^{\mathrm{sc}}_{n}} \vert \tilde V \vert \propto h^{-S \theta_{n-1} } , \]
provided $h$ and $\depsilon$ are sufficiently small, since we only consider $S \leq \scr<2$.

For the inner (i.e. $n\geq 1$) semiclassical zones $\Omega^{\mathrm{sc}}_n $ we obtain, with a constant $C > 0$ independent of the zone index $n$,
\begin{multline*}
h^{2(1 - \theta_n )}\;\tr (\psc n \gamma^{\mathrm{sc}}_n \psc n ) \leq C\,h^{-d + 2 (1 - \theta_n )}\;\big( \max_{\Omega^{\mathrm{sc}}_{n}} \vert \tilde V \vert \big)^{\frac d 2}\;\vert \osc n \vert \leq  \\
\leq C h^{-d + 2 (1 - \theta_n ) - \frac d 2 s \theta_n + \theta_{n-1} d }  \leq C  h^{-d + 2 + \theta_n (-2 + \frac d 2 (2 - s)) - A \depsilon } ,
\end{multline*}
where $A > 0$ is a constant. The same applies for the outer semiclassical zones with $S$ instead of $s$.

The condition $S, s \geq 2 (1 - \frac 2 d )$ ensures that the coefficient $- 2 + \frac d 2 (2 - s)$ (or $- 2 + \frac d 2 (2 - S)$, respectively) is non-positive, i.e. from all semiclassical zones, the innermost zone, verging on the quantum zone, produces the most critical localization error.

The total semiclassical localization error is finite because the error terms form a geometric series,
\[ \sum_{-n_0 \geq n > - \infty} C  h^{-d + 2 + \theta_n (-2 + \frac d 2 (2 - S)) - A \depsilon }  = \bigo ( h^{-d + 2 + \theta_{-n_0} (-2 + \frac d 2 (2 - S)) - A \depsilon } ) . \]

In conclusion, for $S , s \geq 2 (1 - \frac 2 d)$ there is a constant $A>0$ such that the total localization error is\footnote{For $s \neq \scr$ this is the error in the quantum zone, for $s = \scr$ it is the error in the innermost semiclassical zone.}
\[ \scases{ 
      \bigo (h^{-d + 2(1-\alpha) + \scr \alpha }) }
   {  \bigo (h^{-d + 2(1-\alpha) + \scr \alpha - A \depsilon } )  }
   {  \bigo \big( h^{-d + 2 (1-\alpha) + \scr \alpha - \frac{4 (s- \scr )}{(2-s)(2-\scr )} } \big)  } . \qedhere \]
\end{proof}

\subsection{Comparison in the quantum zone}

\begin{prop}
\label{prop:diffint}
Let $0 \leq \beta \leq 1$ and denote
\[ W_\beta (\mbf x) := [V_1(\mbf x)]_-^{\beta + \frac d 2} - [V_2 (\mbf x)]_-^{\beta + \frac d 2} . \]
There exists a constant $C_\beta > 0$ such that, for $\nx \leq 1$,
\[  \vert W_\beta (\mbf x) \vert \leq C_\beta\,\nx[s(\frac d 2 + \beta - 1 ) - r] .  \] 
In particular, $W_\beta$ is integrable on $\lbrace \nx \leq 1 \rbrace$.
\end{prop}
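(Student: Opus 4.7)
The plan is to combine two pointwise bounds---the size estimate $[V_i(\mbf x)]_- \leq \tilde V(\mbf x) \leq C \nx[-s]$ furnished by the preceding lemma, and the closeness estimate $|V_1(\mbf x) - V_2(\mbf x)| \leq C \nx[-r]$ from the third line of~(\ref{eq:conditions})---through a mean-value estimate applied to the power function $t \mapsto t^{\beta + d/2}$.

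First, I would observe that since $d \geq 2$ and $\beta \in [0,1]$, the exponent $p := \beta + \tfrac{d}{2}$ satisfies $p \geq 1$, so $t \mapsto t^p$ is $C^1$ on $[0,\infty)$ with non-decreasing derivative $p\,t^{p-1}$. This yields the elementary inequality
\[ |a^p - b^p| \leq p\,\max(a, b)^{p - 1}\,|a - b| \qquad \text{for all } a, b \geq 0. \]
Applying this with $a = [V_1(\mbf x)]_-$ and $b = [V_2(\mbf x)]_-$ reduces the task to estimating the two factors $\max([V_1]_-, [V_2]_-)^{p-1}$ and $\bigl|[V_1]_- - [V_2]_-\bigr|$ separately.

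Next I would bound each factor. The $1$-Lipschitz property of $t \mapsto [t]_-$ immediately gives $\bigl|[V_1]_- - [V_2]_-\bigr| \leq |V_1 - V_2| \leq C \nx[-r]$ on $\{\nx \leq 1\}$, while the preceding lemma yields $\max([V_1]_-, [V_2]_-) \leq \tilde V(\mbf x) \leq C \nx[-s]$. Since $p - 1 = \beta + \tfrac{d}{2} - 1 \geq 0$, raising the latter to the $(p-1)$-th power and multiplying produces the claimed pointwise bound
\[ |W_\beta (\mbf x)| \leq C\,\nx[-s(\frac{d}{2} + \beta - 1) - r], \]
where the sign of the $s$-term reflects that $\max([V_1]_-,[V_2]_-)^{p-1}$ grows like a negative power of $\nx$ near the origin; this is the inequality that is actually needed for the integrability assertion.

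For that final claim it suffices to verify that the exponent exceeds $-d$, i.e.\ $s(\tfrac{d}{2} + \beta - 1) + r < d$. Since $\beta \in [0,1]$, the most restrictive case is $\beta = 1$, which reduces to $r < \tfrac{d(2-s)}{2}$---precisely the ancillary bound on $r$ extracted in the remarks after Theorems~\ref{thm:m3} and~\ref{thm:m2}. I do not anticipate any genuine obstacle: the whole argument is essentially a one-line mean-value estimate, and the only mild care required is in passing through the non-smoothness of $t \mapsto [t]_-$ via its Lipschitz continuity before invoking the power-function inequality.
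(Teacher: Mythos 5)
Your proof is correct and follows essentially the same strategy as the paper's: both reduce the estimate on $W_\beta$ to a one-variable calculus bound on $t \mapsto t^{\beta + d/2}$ combined with the two pointwise inputs $[V_i]_- \leq C\nx[-s]$ and $|V_1 - V_2| \leq C\nx[-r]$. The paper writes the difference as an integral $\int_{[V_2]_-}^{[V_1]_-} v^{\frac d 2 + \beta - 1}\,\dd v$ and bounds the integrand and range; you package the identical estimate as the mean-value inequality $|a^p - b^p| \le p\,\max(a,b)^{p-1}|a-b|$ and then bound the two factors. These are the same argument. One point worth flagging: the exponent you obtain, $-s(\tfrac d 2 + \beta - 1) - r$, has the opposite sign on the $s$-term from what is displayed in the statement of the proposition (and in the paper's own concluding line). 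Your sign is the correct one --- the estimate must degrade near the origin, and the paper's displayed exponent appears to carry a typo --- and, as you note, it is the version with the negative sign that is actually needed (and suffices, via the condition $r < \tfrac d 2(2-s)$) to deduce integrability.
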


\begin{proof}
We may assume without restriction that $0 \geq V_2 (\mbf x) \geq V_1(\mbf x)$. Then, by the  non-negativity and monotonicity of $\nx[\frac d 2 + \beta - 1]$,
\begin{align*}
\vert W_\beta (\mbf x) \vert &= C {\int_{[V_2(\mbf x)]_-}^{[V_1 (\mbf x)]_-}}\dd v\,v^{\frac d 2 + \beta - 1} \\
&\leq C {\int_0^{C \nx[-r]}} \dd v\,\big( [V_2(\mbf x)]_- + v \big)^{\frac d 2 + \beta - 1} \\
&\leq C {\int_0^{C \nx[-r]}} \dd v\,\big( C \nx[-s] + v \big)^{\frac d 2 + \beta - 1} \\
&\leq C \nx[s (\frac d 2 + \beta - 1) - r] . \qedhere
\end{align*}
\end{proof}

The following estimate follows by integrating the above estimate on $W_1$ over $\Omega^{\mathrm q}$.

\begin{cor}[Integral error]
\label{cor:quantum-int}
The error from the evaluation of the integral is
\[
h^{-d}\,\int \dd\mbf x\; \big( [ V_1(\mbf x) ]_-^{1 + \frac d 2} - [ V_2 (\mbf x ) ]_-^{1 + \frac d 2 } \big)\,(\Phi^{\mathrm{q}})^2 (\mbf x) = \bigo (h^{-d + \alpha (\frac d 2 (2 - s) - r)}) .
\]
\end{cor}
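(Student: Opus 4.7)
The plan is to reduce the claim to the pointwise estimate in Proposition~\ref{prop:diffint} and then perform a straightforward radial integration over the support of $(\Phi^{\mathrm q})^2$. By construction of the quantum cutoff, $(\Phi^{\mathrm q})^2$ vanishes outside the ball $\lbrace \nx \leq 2 h^\alpha \rbrace$, and for $h$ small enough this ball sits inside $\lbrace \nx \leq 1 \rbrace$, which is exactly the region where Proposition~\ref{prop:diffint} applies. Specializing that proposition to $\beta = 1$ gives a pointwise bound $|W_1(\mbf x)| \leq C \nx[-\frac d 2 s - r]$ throughout the quantum zone.

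Next, I would pass to absolute values, use $0 \leq (\Phi^{\mathrm q})^2 \leq 1$, and switch to spherical coordinates. This reduces the estimate to a one-dimensional radial integral $\int_0^{2 h^\alpha} \rho^{d - 1 - \frac d 2 s - r}\,\mathrm d\rho$, up to an $h$-independent angular prefactor. For this integral to be finite at the origin one needs $d - \tfrac d 2 s - r > 0$, i.e.\ $r < \tfrac d 2 (2 - s)$. This is precisely the integrability condition listed in the remarks following Theorems~\ref{thm:m3} and \ref{thm:m2}, and it is also what makes the integral on the right-hand side of the relative Weyl law well defined. Under this condition the radial integral evaluates to a constant times $h^{\alpha(d - \frac d 2 s - r)} = h^{\alpha(\frac d 2 (2-s) - r)}$.

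Multiplying by the overall factor $h^{-d}$ then yields the claimed bound $\bigo (h^{-d + \alpha(\frac d 2 (2-s) - r)})$. There is no real obstacle here: once Proposition~\ref{prop:diffint} is in hand, the corollary is essentially the elementary computation $\int_{\nx \leq R} \nx[-a]\,\dd \mbf x \asymp R^{d - a}$ for $a < d$. The only point to verify carefully is that the resulting exponent $\tfrac d 2 (2-s) - r$ is genuinely positive (ensured by the upper bounds on $r$ imposed in the theorems), so that this contribution is truly a small correction to the leading $h^{-d}$ scale and can be absorbed into the overall error budget of the main theorems.
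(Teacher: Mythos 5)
Your proof is correct and takes the same route the paper intends: the corollary is indeed just the radial integration of the pointwise bound from Proposition~\ref{prop:diffint} over the support of $(\Phi^{\mathrm q})^2$, which the paper dispatches in one sentence. One small observation: you write the specialization of Proposition~\ref{prop:diffint} at $\beta=1$ as $\vert W_1(\mbf x)\vert \leq C\,\nx[-\frac d 2 s - r]$, whereas the proposition as typeset in the paper reads $\nx[s(\frac d 2 + \beta - 1) - r]$, i.e.\ with the opposite sign on the $s$-term; your sign is the correct one (it is what the proof of the proposition actually yields, and it is the only sign consistent with the exponent $\alpha(\frac d 2 (2-s) - r)$ appearing in the corollary), so the proposition's displayed exponent contains a sign typo that you have silently corrected.
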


The following error term we shall refer to as the \emph{quantum error}.
\begin{lem}[Quantum error]
\label{lem:compone}
In the quantum zone,
\begin{multline*} \tr [\Phi^{\mathrm{q}} (-h^2 \Delta  + V_2  ) \Phi^{\mathrm{q}}]_- - \tr [\Phi^{\mathrm{q}} (-h^2 \Delta  + V_1 ) \Phi^{\mathrm{q}}]_- \\
 = \scases{\bigo ( h^{-d + \alpha (\scr - r )})  }{  \bigo ( h^{-d + \alpha (\scr - r )} \md{\log h}^{\scrh})  }{ \bigo ( h^{-d + \alpha (\scr - r ) - \frac{2\scr (s-\scr )}{(2-s)(2-\scr )}})  } . \end{multline*}
\end{lem}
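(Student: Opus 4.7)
The plan is to use the min--max (variational) principle to reduce the difference of the two quantum traces to an integral of $V_1 - V_2$ against a one-body density, and then bound that integral via Hölder's inequality combined with Lemma~\ref{lem:traceest}(ii).

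Set $\gamma_i := \chrf_{(-\infty,0)}[\Phi^{\mathrm{q}}(-h^2 \Delta + V_i)\Phi^{\mathrm{q}}]$; then spectral calculus gives $-\tr[\Phi^{\mathrm{q}}(-h^2\Delta + V_i)\Phi^{\mathrm{q}}]_- = \tr[\Phi^{\mathrm{q}}(-h^2\Delta + V_i)\Phi^{\mathrm{q}}\gamma_i]$, whereas using $\gamma_i$ as a trial state yields the one-sided bound $-\tr[\Phi^{\mathrm{q}}(-h^2\Delta + V_j)\Phi^{\mathrm{q}}]_- \leq \tr[\Phi^{\mathrm{q}}(-h^2\Delta + V_j)\Phi^{\mathrm{q}}\gamma_i]$ for $j \neq i$. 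Subtracting these two relations makes the kinetic term cancel, and combining the choices $(i,j) = (1,2)$ and $(2,1)$ produces
\[
\md{\tr[\Phi^{\mathrm{q}}(-h^2\Delta+V_2)\Phi^{\mathrm{q}}]_- - \tr[\Phi^{\mathrm{q}}(-h^2\Delta+V_1)\Phi^{\mathrm{q}}]_-} \leq \max_{i=1,2}\int \md{V_1-V_2}\,\rho_{\Phi^{\mathrm{q}}\gamma_i\Phi^{\mathrm{q}}}\,\dd\mbf x.
\]

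Hölder's inequality with conjugate exponents $1+d/2$ and $1+2/d$ now splits this right-hand side into two factors. For the first, the pointwise bound $\md{V_1-V_2}(\mbf x) \leq C\nx[-r]$ from (\ref{eq:conditions}) together with $\Omega^{\mathrm{q}} \subseteq \lbrace \nx \leq 2h^{\alpha}\rbrace$ gives by direct integration
\[
\nor{(V_1-V_2)\chrf_{\Omega^{\mathrm{q}}}}_{1+d/2}^{1+d/2} \leq C \int_{\nx \leq 2h^{\alpha}} \nx[-r(1+d/2)]\,\dd\mbf x \leq C h^{\alpha(d-r(1+d/2))},
\]
i.e.\ $\nor{(V_1-V_2)\chrf_{\Omega^{\mathrm{q}}}}_{1+d/2} \leq C h^{\alpha(\scr-r)}$. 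For the second factor, Lemma~\ref{lem:traceest}(ii) applies with $V = V_i$ (whose negative part satisfies $[V_i]_- \leq C\nx[-s]$ on $\Omega^{\mathrm{q}}$ by the first lemma of this section) and $\gamma = \gamma_i$ (whose very definition guarantees the premise $\tr[(-h^2\Delta + V_i)\Phi^{\mathrm{q}}\gamma_i\Phi^{\mathrm{q}}] \leq 0$); noting $\md{\Omega^{\mathrm{q}}} \lesssim h^{\alpha d} \leq 1$ for $h \leq 1$, the three cases of that lemma yield $\nor{\rho_{\Phi^{\mathrm{q}}\gamma_i\Phi^{\mathrm{q}}}}_{1+2/d}$ of order $h^{-d}$, $h^{-d}\md{\log h}^{\scr/2}$, and $h^{-d - 2\scr(s-\scr)/((2-s)(2-\scr))}$ respectively. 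Multiplying the two factors produces exactly the advertised three-case error.

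The sole delicate point is the integrability condition $r < \scr$ needed for the singular $\dd\mbf x$-integral above; without it the whole scheme collapses. This follows at once from the hypothesis $r < s$ when $s \leq \scr$, and from the sharper bound $r < \tfrac{d}{2}(2-s) \leq \scr$ implicit in the hypotheses of Theorems~\ref{thm:m3} and~\ref{thm:m2} (cf.\ the respective remarks) when $s > \scr$. Apart from this bookkeeping, the argument is the standard variational comparison used in the quantum-zone analysis of~\cite{Nam2012}, here adapted to the more singular potentials admitted in our setting.
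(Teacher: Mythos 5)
Your proof is correct and follows essentially the same variational/Hölder route as the paper: both use $\gamma_i=\chrf_{(-\infty,0)}[\Phi^{\mathrm q}(-h^2\Delta+V_i)\Phi^{\mathrm q}]$ as a trial density matrix, cancel the kinetic term, bound $\int\md{V_1-V_2}\rho_{\Phi^{\mathrm q}\gamma_i\Phi^{\mathrm q}}$ by Hölder, compute $\nor{\nx[-r]\chrf_{\Omega^{\mathrm q}}}_{1+d/2}=Ch^{\alpha(\scr-r)}$, and invoke Lemma~\ref{lem:traceest}(ii) for the density norm. The only cosmetic difference is that the paper notes the two potentials are interchangeable and proves a single inequality, whereas you write out both directions explicitly; your closing remark on $r<\scr$ makes explicit an integrability point the paper leaves implicit, which is a small but welcome addition.
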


\begin{proof}
Since $V_1$ and $V_2$ are interchangeable, it is enough to prove the upper bound. Let
\[ \gamma := \chrf_{(-\infty, 0)} [\Phi^{\mathrm{q}} (-h^2 \Delta + V_1  ) \Phi^{\mathrm{q}} ] . \]
Then,
\begin{align*}
 &-\tr [\Phi^{\mathrm{q}} (-h^2 \Delta + V_1  ) \Phi^{\mathrm{q}}]_- = \tr [\Phi^{\mathrm{q}} (-h^2 \Delta  + V_1 ) \Phi^{\mathrm{q}} \gamma ] \\
 &\qquad \geq \tr [\Phi^{\mathrm{q}} (-h^2 \Delta  + V_2  ) \Phi^{\mathrm{q}} \gamma ] - C \tr (\Phi^{\mathrm{q}} \nx[-r] \Phi^{\mathrm{q}} \gamma ) \\
 &\qquad \geq -\tr [\Phi^{\mathrm{q}} (-h^2 \Delta  + V_2 ) \Phi^{\mathrm{q}} ]_- - C \tr (\Phi^{\mathrm{q}} \nx[-r] \Phi^{\mathrm{q}} \gamma )  ,
\end{align*}
and
\begin{align*}
\tr (\Phi^{\mathrm{q}} \nx[-a] \Phi^{\mathrm{q}} \gamma) &= \int_{\Omega^{\mathrm{q}}} \dd\mbf x \nx[-r] \rho_{\Phi^{\mathrm{q}} \gamma \Phi^{\mathrm{q}}} (\mbf x) \\
&\leq \|\rho_{\Phi^{\mathrm{q}} \gamma \Phi^{\mathrm{q}}}\|_{1+\frac 2 d} \nor{\nx[-r]\,\chrf_{\Omega^{\mathrm{q}}}}_{1+\frac d 2} .
\end{align*}

Computing
\begin{align*}
\nor{\nx[-r]\,\chrf_{\Omega^{\mathrm{q}}}}_{1+\frac d 2} &= C \big( h^{\alpha (-r (1+\frac d 2)+d)} \big)^{\oolt} \\ &= C h^{\alpha (\scr - r )} \end{align*}
 and using the bound from Lem\/ma~\ref{lem:traceest}(ii)
\[ \|\rho_{\Phi^{\mathrm{q}} \gamma \Phi^{\mathrm{q}}}\|_{1+ \frac 2 d} \leq \scases{  \bigo (h^{-d})  }{  \bigo (h^{-d} \md{\log h}^{\scrh} )  }{  \bigo ( h^{-d - \frac{2  \scr  (s-\scr)}{(2-s)(2-\scr )} } )   }   , \] the claim follows. 
\end{proof}

\begin{rem*}If $s \leq \scr$, the integral error is always less critical than the quantum error. If $s > \scr$, the integral error is less critical than the quantum error whenever
\[ \alpha \left( \scr - \frac d 2 (s - 2) \right) \leq \frac {2 \scr}{2 - \scr } \frac{s - \scr }{2 - s} , \]
which is equivalent to
\[ \alpha \leq \frac{2}{2- s} . \]
\end{rem*}

\subsection{Semiclassical analysis}

\subsubsection{Convolution of $C^1$ and $C^2$ functions with radial kernels}
\begin{lem}[Approximation of $C^1$ functions\footnote{As usual, $f \in C^m (\overline{\Omega + \ball_\tau })$ means that $f \in C^m ((\overline{\Omega + \ball_\tau } )^\circ)$ and its partial derivatives $\partial^{\mbf \alpha} f $ up to order $\vert \mbf \alpha \vert \leq m$ have continuous extensions to $\overline{\Omega + \ball_\tau }$.}]
\label{lem:nconvest}
Let $\Omega \subset \mathbb R^d$ and $f \in C^1 ( \overline{\Omega + \ball_\tau })$. Let $0 \leq g \in C_{\mathrm c}^\infty (\mathbb R^d )$ be a radial function with support in the closed unit ball, normalized to $\nor{g}_{2} = 1$, and let $g_\tau (\mbf x) := \tau^{-d/2} g (\mbf x / \tau )$ for $\tau>0$. Then, there exists a constant $C>0$ such that
\begin{align*} \| ( f_- \star g_\tau^2 - f_- )\,\chrf_\Omega \|_{\infty} &\leq C \tau \nor{\nabla f\;\chrf_{\Omega+\ball_\tau}}_{\infty } . 
\end{align*}
\end{lem}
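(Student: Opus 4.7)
The approach is a standard convolution-regularity estimate, using three facts: that $g_\tau^2$ is a nonnegative, radially symmetric probability density supported in $\ball_\tau$, that the negative-part map $x \mapsto x_-$ is $1$-Lipschitz, and that $f$ is $C^1$ on the $\tau$-neighborhood of $\Omega$.

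First, I would verify that $g_\tau^2$ is a probability density on $\ball_\tau$: by the change of variables $\mbf y = \mbf x / \tau$, we have $\int g_\tau^2 = \tau^{-d} \int g(\mbf x / \tau)^2 \dd\mbf x = \nor{g}_2^2 = 1$, and by assumption $\operatorname{supp} g \subset \overline{\ball_1}$, so $\operatorname{supp} g_\tau^2 \subset \overline{\ball_\tau}$. This allows me to write, for every $\mbf x \in \Omega$,
\[
(f_- \star g_\tau^2 )(\mbf x) - f_-(\mbf x) = \int_{\ball_\tau} \bigl( f_-(\mbf x - \mbf y) - f_-(\mbf x) \bigr)\,g_\tau^2 (\mbf y)\,\dd\mbf y .
\]

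Next, I would exploit that for any real numbers $a, b$, $\md{a_- - b_-} \leq \md{a - b}$, so pointwise
\[
\md{f_-(\mbf x - \mbf y) - f_-(\mbf x)} \leq \md{f(\mbf x - \mbf y) - f(\mbf x)} .
\]
Since $\mbf x \in \Omega$ and $\md{\mbf y} \leq \tau$, the entire segment $\lbrace \mbf x - t \mbf y : t \in [0,1] \rbrace$ lies in $\overline{\Omega + \ball_\tau}$, where $f$ is $C^1$. By the fundamental theorem of calculus,
\[
\md{f(\mbf x - \mbf y) - f(\mbf x)} \leq \int_0^1 \md{\nabla f(\mbf x - t \mbf y) \cdot \mbf y} \dd t \leq \md{\mbf y} \, \nor{\nabla f\,\chrf_{\Omega + \ball_\tau}}_\infty .
\]

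Finally, combining these estimates with $\md{\mbf y} \leq \tau$ on the support of $g_\tau^2$ and using $\int g_\tau^2 = 1$, I obtain
\[
\md{(f_- \star g_\tau^2 )(\mbf x) - f_-(\mbf x)} \leq \nor{\nabla f\,\chrf_{\Omega + \ball_\tau}}_\infty \int_{\ball_\tau} \md{\mbf y}\,g_\tau^2 (\mbf y)\,\dd\mbf y \leq \tau\,\nor{\nabla f\,\chrf_{\Omega + \ball_\tau}}_\infty ,
\]
uniformly in $\mbf x \in \Omega$, which is the desired bound with $C = 1$. There is no real obstacle here: the only point that requires mild care is using the $1$-Lipschitz property of the negative part to avoid differentiating $f_-$ directly (which need not be $C^1$ across the zero set of $f$); the radial symmetry of $g$ is not needed in this first-order estimate and will presumably be used only in the $C^2$ analogue to kill the linear term in the Taylor expansion and obtain a quadratic $\tau^2$ gain.
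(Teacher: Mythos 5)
Your proof is correct and slightly cleaner than the one in the paper, though the basic skeleton (write the difference as an integral against the probability density $g_\tau^2$, then apply the fundamental theorem of calculus along the segment from $\mathbf x$ to $\mathbf x - \mathbf y$) is the same. The difference lies in how the two handle the non-smoothness of $f_-$. The paper first does the case $f\leq 0$ directly, and then, for general $f$, argues that the scalar function $\lambda \mapsto -f_-(\lambda\mathbf y + (1-\lambda)\mathbf x)$ is absolutely continuous and has an explicit derivative away from $\partial\mathfrak N$ (where $\mathfrak N$ is the preimage of $\mathbb R_{<0}$ under $f$ restricted to the segment), invoking that $\partial\mathfrak N$ is a nullset. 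You short-circuit this by using that $x \mapsto x_-$ is $1$-Lipschitz, so that $\md{f_-(\mathbf x-\mathbf y) - f_-(\mathbf x)} \leq \md{f(\mathbf x-\mathbf y) - f(\mathbf x)}$, and then apply the fundamental theorem of calculus to the genuinely $C^1$ function $f$. Your route avoids the absolute-continuity discussion entirely (and, incidentally, the paper's claim that $\partial\mathfrak N$ is a Lebesgue nullset is not actually needed and would not hold for arbitrary $C^1$ functions; the estimate goes through regardless because only an a.e.\ bound on the derivative is required). Both approaches give the same conclusion, and your observation that the radial symmetry of $g$ is not used here but only in the $C^2$ analogue (Lemma~\ref{lem:convest}) to kill the linear Taylor term is exactly right.
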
 
\begin{lem}[Approximation of $C^2$ functions]
\label{lem:convest}
Let $\Omega \subset \mathbb R^d$ and $f \in C^2 ( \overline{\Omega + \ball_\tau })$. Let $0 \leq g \in C_{\mathrm c}^\infty (\mathbb R^d )$ be a radial function with support in the closed unit ball, normalized to $\nor{g}_{2} = 1$, and let $g_\tau (\mbf x) := \tau^{-d/2} g (\mbf x / \tau )$ for $\tau>0$. Then, there exists a constant $C>0$ such that
\begin{align*} \| ( f \star g_\tau^2 - f )\,\chrf_\Omega \|_{\infty}  &\leq C \tau^2 \nor{\hess f\;\chrf_{\Omega+\ball_\tau}}_{\infty} . 
\end{align*}
\end{lem}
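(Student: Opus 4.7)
The plan rests on two elementary properties of the mollifier $g_\tau^2$. First, $\int g_\tau^2 = \nor{g_\tau}_2^2 = \nor{g}_2^2 = 1$, so $g_\tau^2$ is a probability density. Second, $g_\tau^2$ inherits radial symmetry from $g$ and is supported in the closed ball $\ball_\tau$; in particular its first moment vanishes, $\int \mbf y\,g_\tau^2(\mbf y)\,\dd\mbf y = 0$. These two properties are exactly what one needs to extract a second-order error from a Taylor expansion.

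Concretely, I would fix $\mbf x \in \Omega$ and write
\[
(f \star g_\tau^2)(\mbf x) - f(\mbf x) = \int_{\ball_\tau} g_\tau^2(\mbf y)\,\bigl[f(\mbf x - \mbf y) - f(\mbf x)\bigr]\,\dd\mbf y .
\]
The hypothesis $f \in C^2(\overline{\Omega + \ball_\tau})$ guarantees that for any $\mbf y \in \ball_\tau$ the whole segment $\{\mbf x - t \mbf y : 0 \leq t \leq 1\}$ lies in $\overline{\Omega + \ball_\tau}$, so Taylor's theorem with integral remainder applies:
\[
f(\mbf x - \mbf y) - f(\mbf x) = -\mbf y \cdot \nabla f(\mbf x) + \int_0^1 (1 - t)\,\mbf y^{\mathsf T}\,\hess f(\mbf x - t\mbf y)\,\mbf y\,\dd t .
\]

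Substituting back, the linear term integrates to zero by the vanishing first moment of $g_\tau^2$, leaving only the Hessian term. Estimating this term pointwise,
\[
\bigl| (f \star g_\tau^2)(\mbf x) - f(\mbf x) \bigr| \leq \nor{\hess f\,\chrf_{\Omega + \ball_\tau}}_\infty \int_{\ball_\tau} g_\tau^2(\mbf y)\,\md{\mbf y}^2\,\dd\mbf y \int_0^1 (1 - t)\,\dd t ,
\]
and using $\md{\mbf y} \leq \tau$ on the support of $g_\tau^2$ together with $\int g_\tau^2 = 1$, we obtain the bound $\tfrac{1}{2}\tau^2 \nor{\hess f\,\chrf_{\Omega + \ball_\tau}}_\infty$. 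Taking the supremum over $\mbf x \in \Omega$ yields the claim.

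There is really no serious obstacle here; the whole argument is a second-order Taylor expansion combined with a parity cancellation. The only point requiring a modicum of care is bookkeeping the domains, namely checking that the segment of integration in the Taylor remainder stays inside $\overline{\Omega + \ball_\tau}$ where the Hessian bound is valid, which is automatic since $\mbf x \in \Omega$ and $\md{\mbf y} \leq \tau$. The analogous $C^1$ statement in Lemma~\ref{lem:nconvest} admits the same template: a first-order expansion of $f_-$ (which is only Lipschitz, but its gradient bound on $\Omega + \ball_\tau$ still gives $|f_-(\mbf x - \mbf y) - f_-(\mbf x)| \leq \tau \nor{\nabla f\,\chrf_{\Omega+\ball_\tau}}_\infty$), with no need to exploit the vanishing first moment.
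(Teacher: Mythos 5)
Your proof is correct and follows essentially the same route as the paper's: a second-order Taylor expansion with integral remainder, followed by the observation that the radial symmetry of $g_\tau^2$ kills the first-order term, leaving a bound of order $\tau^2 \nor{\hess f\,\chrf_{\Omega+\ball_\tau}}_\infty$. (Your remainder weight $(1-t)$ is in fact the standard one; the paper writes $\lambda$ there, a harmless slip since both integrate to $1/2$.)
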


\begin{proof}[Proof of Lemma~\ref{lem:nconvest}]
Let us first assume that $f \leq 0$. Then, by the fundamental theorem of calculus,
\begin{align*} f(\mbf y) - f(\mbf x) = \int_0^1 \dd \lambda\,(\mbf y - \mbf x) \cdot \nabla f(\lambda \mbf y + (1-\lambda ) \mbf x ) . \end{align*}
Therefore,
\[
(f\star g_\tau^2 - f)(\mbf x) = \int_0^1 \dd \lambda\,\int \dd\mbf y\,g_\tau^2(\mbf y - \mbf x)\,(\mbf y - \mbf x) \cdot \nabla f (\lambda \mbf y + (1-\lambda ) \mbf x).
\]
Thus,
\[
\vert (f\star g_\tau^2 - f)(\mbf x) \vert \leq \nor{\nabla f\,\chrf_{\ball_\tau(\mbf x)}}_{\infty } \int \dd\mbf y g_\tau^2(\mbf y - \mbf x)\,\md{\mbf y - \mbf x}  .
\]
The integral evalues to $ C \tau$ for a constant $C$, hence the claim follows.

If $f \leq 0$ is not fulfilled, $f_-$ is not a differentiable function. However, the function
\begin{align*}
\varphi:\;[0,1] &\longrightarrow \mathbb R \\
\lambda &\longmapsto - f_- (\mbf \gamma(\lambda)),
\end{align*}
where $\mbf \gamma (\lambda ) := \lambda \mbf y + (1-\lambda) \mbf x,$
is still absolutely continuous, and with the notation
\begin{align*}
\mathfrak N := (f \circ \mbf \gamma )^{-1} (\mathbb R_{< 0}) ,
\end{align*} its derivative is given at all $\lambda \in [0,1] \setminus \partial \mathfrak N$ by
\begin{align*}
\varphi'(\lambda ) = \LEFTRIGHT\lbrace. { \begin{array}{ll} (\mbf y - \mbf x) \cdot \nabla f(\mbf \gamma (\lambda ) ) :& \lambda \in \mathfrak N \\
0:& \lambda \notin \mathfrak N \end{array} } ,
\end{align*}
and since $\partial \mathfrak N$ is a Lebesgue nullset, the proof still applies.
\end{proof}

\begin{proof}[Proof of Lemma~\ref{lem:convest}]
By the Taylor formula,
\begin{align*} f(\mbf y) - f(\mbf x) &= (\mbf y - \mbf x) \cdot \nabla f(\mbf x ) + \\ &\qquad + \int_0^1 \dd \lambda\,\lambda\,(\mbf y - \mbf x) \cdot \left( \hess_{\lambda \mbf y + (1-\lambda ) \mbf x} f \right) (\mbf y - \mbf x) . \end{align*}
Therefore,
\begin{multline*}
(f\star g_\tau^2 - f)(\mbf x) = \int \dd\mbf y\,g_\tau^2(\mbf y - \mbf x)\,(\mbf y-\mbf x)\cdot \nabla f(\mbf x) + \\
 + \int_0^1 \dd \lambda\,\lambda\int \dd\mbf y g_\tau^2(\mbf y - \mbf x)\,(\mbf y - \mbf x) \cdot \left(\hess_{\lambda \mbf y + (1-\lambda ) \mbf x} f \right) (\mbf y - \mbf x) .
\end{multline*}
But due to the radial symmetry of $g$, the first-order term vanishes identically. Thus,
\[
\vert (f\star g_\tau^2 - f)(\mbf x) \vert \leq \frac 1 2 \nor{\hess f\,\chrf_{\ball_\tau (\mbf x)}}_{\infty} \int \dd\mbf y\,g_\tau^2(\mbf y - \mbf x)\,\md{\mbf y - \mbf x}^2  .
\]
The integral evalues to $ C \tau^2$ for a constant $C$, hence the claim follows.
\end{proof}

We now apply this lemma to the context $\Omega = \osc n$. We put
\[ \tau_n := h^{\beta_n } . \]
Since we will encounter integrals over $\osc n + \ball_{\tau_n}$, we shall require a priori that the effect of $\tau_n$ be negligible, i.e.
\[ \beta_n> \theta_n . \]
For the sake of simplicity of notation, we define
\begin{align*}
s_n := \LEFTRIGHT\lbrace. { \begin{array}{ll} s :& \text{if}\;n \geq 1 \\
 S :& \text{if}\;n \leq 0 \end{array}} . 
\end{align*}

\begin{cor} \label{cor:cone}
There is a constant $A>0$ such that in the semiclassical zones
\begin{align*}\nor{\big( V_- \star g_{\tau_n}^2 - V_- \big)\,\chrf_{\osc n}}_{\infty} &= \bigo (h^{\beta_n -\theta_n (1 + s_n ) - A \depsilon }) .
\end{align*}
The same bounds apply for $\| ( V \star g_{\tau_n}^2 - V ) \,\chrf_{\osc n} \|_{\infty}$.
\end{cor}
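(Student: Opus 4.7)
The corollary is a direct application of Lemma~\ref{lem:nconvest} (and the trivial analog for $f$ instead of $f_-$, which follows from the same fundamental-theorem-of-calculus argument), once one controls the sup-norm of $\nabla V$ on the enlarged zone $\osc n + \ball_{\tau_n}$ using the gradient hypothesis in (\ref{eq:conditions}). So the proof reduces to an elementary geometric bound on $\osc n + \ball_{\tau_n}$ plus plugging in the pointwise bound on $|\nabla V|$.

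\textbf{Step 1: Locate the enlarged zone.} From the construction in the previous subsection, $\osc n \subset \{h^{\theta_n} \leq \nx \leq 2h^{\theta_{n-1}}\}$. Since $\tau_n = h^{\beta_n}$ with $\beta_n > \theta_n$, for $h$ small enough we have $h^{\beta_n} \leq \tfrac{1}{2} h^{\theta_n}$, so
\[ \osc n + \ball_{\tau_n} \subset \bigl\{ \tfrac{1}{2} h^{\theta_n} \leq \nx \leq 3h^{\theta_{n-1}}\bigr\}. \]
In particular the enlarged zone stays bounded away from the origin, so $V$ is genuinely $C^1$ on its closure and Lemma~\ref{lem:nconvest} applies.

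\textbf{Step 2: Bound $\|\nabla V\,\chrf_{\osc n+\ball_{\tau_n}}\|_\infty$.} Using (\ref{eq:conditions}), $|\nabla V(\mbf x)|$ is bounded by $C\nx[-s-1]$ for $\nx\leq 1$ and by $C\nx[-S-1]$ for $\nx\geq 1$. For $n\geq 1$ (inner zones), the enlarged zone contains only $\nx \leq 3h^{\theta_{n-1}}$, which for small $h$ may straddle $\nx=1$ only when $n=1$; in any case the maximum of $|\nabla V|$ on this set is attained near the lower radius $\tfrac{1}{2}h^{\theta_n}$, giving the bound $C\,h^{-\theta_n(1+s)}$. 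For $n\leq 0$ (outer zones) $\nx\geq \tfrac{1}{2}h^{\theta_n}\geq \tfrac{1}{2}$, so we use the $S$-bound and obtain $C\,h^{-\theta_n(1+S)}$. In both cases,
\[ \|\nabla V\,\chrf_{\osc n+\ball_{\tau_n}}\|_\infty \leq C\,h^{-\theta_n(1+s_n)}, \]
with the universal constant $C$ independent of $n$.

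\textbf{Step 3: Assemble and track the $\epsilon$-loss.} Plugging into Lemma~\ref{lem:nconvest} yields
\[ \|(V_- \star g_{\tau_n}^2 - V_-)\,\chrf_{\osc n}\|_\infty \leq C\,\tau_n\,h^{-\theta_n(1+s_n)} = C\,h^{\beta_n - \theta_n(1+s_n)}. \]
The same argument, but applied to $V$ directly (the $C^1$ fundamental-theorem-of-calculus identity makes no use of the negative-part truncation when $V\in C^1$), gives the same bound for $\|(V\star g_{\tau_n}^2 - V)\chrf_{\osc n}\|_\infty$. The allowance for a factor $h^{-A\epsilon}$ (some fixed $A>0$ depending only on $s,S,d$) absorbs the cheap loss from comparing $h^{\theta_n}$ with $h^{\theta_{n-1}} = h^{\theta_n}h^{-\epsilon}$ at the upper radius of the zone, and from the finite boundary layer near $\nx=1$ where the two gradient regimes meet, so that the bound holds uniformly across all zones $N\geq n>-\infty$.

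\textbf{Expected difficulty.} There is no conceptual difficulty here; the only mildly delicate point is to keep the bound uniform in $n$ (since one eventually sums or optimizes over the infinite family of semiclassical zones), which is why the explicit $A\epsilon$ safety factor is included rather than an $n$-dependent constant.
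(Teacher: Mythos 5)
Your proof is correct and takes exactly the approach the paper intends: the text immediately before the corollary says ``We now apply this lemma to the context $\Omega = \osc n$,'' and your Steps 1--3 are precisely that application of Lemma~\ref{lem:nconvest} with the pointwise gradient bound from (\ref{eq:conditions}) plugged in, together with the observation that the $V$-version uses the same fundamental-theorem identity without the negative-part truncation. You are also right that the stated $h^{-A\depsilon}$ is a safety margin absorbing zone-boundary constants (and non-uniformities in $n$ such as the factor $(1-h^{\beta_n-\theta_n})^{-1-s_n}$ from the shift by $\ball_{\tau_n}$), rather than reflecting an unavoidable loss.
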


Furthermore, we shall be in the need for a bound on the convolution approximation of $V_-^{d/2}$. Note that
\[ V_-^{\;\frac d 2 } = - \tfrac d 2 V_-^{\;\frac d 2 - 1}\,\nabla V . \]
Therefore, the gradient of $V_-^{d/2}$ is bounded by
\[
\vert \nabla V_-^{\;\frac d 2} \vert \leq C \nx[-\frac d 2 s_n - 1],
\]
This provides us with the following bound.
\begin{cor}
\label{cor:bpow}
In all semiclassical zones,
\begin{align*}
\big\| \big( V_-^{\;\frac d 2} \star g_{\tau_n}^2 - V_-^{\;\frac d 2} \big)\,\chrf_{\osc n} \big\|_{\infty} &= \bigo (h^{\beta_n - \theta_n - \frac d 2 s_n \theta_n }) .
\end{align*}
\end{cor}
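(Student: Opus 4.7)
The plan is to invoke Lemma~\ref{lem:nconvest} with the non-positive choice $f := -V_-^{\,\frac d 2}$, so that $f_- = V_-^{\,\frac d 2}$. The conclusion of that lemma then reduces the corollary to showing
\[
\bigl\| \nabla V_-^{\,\frac d 2}\,\chrf_{\osc{n} + \ball_{\tau_n}} \bigr\|_{\infty} = \bigo\bigl(h^{-\theta_n(\frac d 2 s_n + 1)}\bigr),
\]
since multiplication by $\tau_n = h^{\beta_n}$ then produces exactly the advertised exponent $\beta_n - \theta_n - \frac d 2 s_n \theta_n$.

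For the pointwise gradient estimate, I would use the identity
\[
\nabla V_-^{\,\frac d 2} = -\tfrac d 2\,V_-^{\,\frac d 2 - 1}\,\nabla V\,\chrf_{\lbrace V < 0 \rbrace},
\]
already noted just before the corollary, together with the bound $[V]_- \leq \tilde V \leq C\,\nx[-s_n]$ from the first lemma of this section (and the definition of $s_n$ to handle the outer zones) and the gradient decay $\md{\nabla V} \leq C\,\nx[-s_n - 1]$ from (\ref{eq:conditions}). Combining these yields the pointwise bound $\md{\nabla V_-^{\,\frac d 2}(\mbf x)} \leq C\,\nx[-\frac d 2 s_n - 1]$. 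The a priori requirement $\beta_n > \theta_n$ guarantees that, for $h$ sufficiently small, $\osc{n} + \ball_{\tau_n}$ is still contained in $\lbrace \tfrac 12 h^{\theta_n} \leq \nx \leq 3\,h^{\theta_{n-1}} \rbrace$, and on this enlarged zone the pointwise bound is majorized by $C\,h^{-\theta_n (\frac d 2 s_n + 1)}$, exactly as needed.

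The essentially only subtlety is verifying that Lemma~\ref{lem:nconvest} genuinely applies to $V_-^{\,\frac d 2}$: even though $V \in C^1(\mathbb R^d \setminus \lbrace 0 \rbrace)$, the composition $V_-^{\,\frac d 2}$ is only locally Lipschitz across the set $\lbrace V = 0 \rbrace$ and never classically $C^1$ there. Fortunately, the absolute continuity argument at the end of the proof of Lemma~\ref{lem:nconvest}, inserted precisely to cope with the kink introduced by $f \mapsto f_-$, transfers verbatim: one replaces $f_- \circ \mbf{\gamma}$ in the definition of $\varphi$ by $(V \circ \mbf{\gamma})_-^{\,\frac d 2}$, which is still absolutely continuous on $[0,1]$ with classical derivative outside a Lebesgue null set of parameters (vanishing on the preimage of $\lbrace V \geq 0 \rbrace$). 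Beyond this observation, the computation is entirely routine.
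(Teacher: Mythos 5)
Your proposal is correct and coincides with the paper's own (very brief) argument: the paper likewise reads off the pointwise bound $\md{\nabla V_-^{\,\frac d 2}} \leq C \nx[-\frac d 2 s_n - 1]$ from the chain-rule identity $\nabla V_-^{\,\frac d 2} = -\tfrac d 2 V_-^{\,\frac d 2 - 1}\,\nabla V\,\chrf_{\{V<0\}}$ (the paper's display drops the $\nabla$ symbol and the indicator, evidently typographical) together with $[V]_- \leq C\nx[-s_n]$ and $\md{\nabla V} \leq C\nx[-s_n-1]$, and then feeds this into Lemma~\ref{lem:nconvest} with $\Omega = \osc n$, $\tau = \tau_n = h^{\beta_n}$, using $\beta_n > \theta_n$ to keep $\osc n + \ball_{\tau_n}$ inside a slight enlargement of $\osc n$. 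Your exponent bookkeeping $\beta_n - \theta_n(\tfrac d 2 s_n + 1) = \beta_n - \theta_n - \tfrac d 2 s_n\theta_n$ is exactly right.

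One small remark on the closing caveat: the regularity worry is only genuine in dimension two, and even there it can be sidestepped without invoking the absolute-continuity extension. For $d = 3$ one has $\nabla V_-^{\,3/2} = -\tfrac 3 2 V_-^{\,1/2}\nabla V\,\chrf_{\{V<0\}}$, which is continuous and vanishes on $\{V = 0\}$, so $-V_-^{\,3/2}$ is honestly $C^1$ and Lemma~\ref{lem:nconvest} applies verbatim with $f := -V_-^{\,3/2}$. For $d = 2$ one has $V_-^{\,d/2} = V_-$, so one may instead take $f := V$ (which is $C^1$ away from the origin) and apply the lemma exactly as stated, with no need to worry about $f$ itself being nonsmooth. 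The absolute-continuity argument you cite is thus a valid but redundant safety net in both cases covered by the paper; it would only become necessary if one pushed this to general $d$ with $\frac d 2 - 1 < 0$.
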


For the localization functions, we apply the stronger $C^2$ estimate.

\begin{cor}\label{cor:local} There is a constant $C>0$ such that, for every $n$,
\[ \md{(\psc n)^2 \star g_\tau^2 - (\psc n)^2 } (\mbf x) \leq C h^{2(\beta_n - \theta_n)}\,\chrf_{\osc n + \ball_{\tau_n}} (\mbf x) . \]
\end{cor}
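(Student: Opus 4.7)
The plan is to apply Lemma~\ref{lem:convest} ($C^2$ convolution approximation) to the function $f := (\psc n)^2$ with convolution kernel $g_{\tau_n}^2$. Since $\varphi \in C_{\mathrm{c}}^2$, the function $\psc n$ is $C^2$, hence so is its square, and we may invoke the lemma. It remains to bound $\|\hess (\psc n)^2 \,\chrf_{\osc n + \ball_{\tau_n}}\|_\infty$ and to identify the support of the difference.

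For the Hessian bound, I would use the product rule
\[ \partial_i \partial_j (\psc n)^2 = 2\,(\partial_i \psc n)(\partial_j \psc n) + 2\,\psc n\,(\partial_i \partial_j \psc n), \]
together with the multiindex derivative estimates already established for the localization functions, namely
\[ \md{\partial^{\mbf \beta} \psc n (\mbf x)} \leq C\,h^{-\md{\mbf \beta}\,\theta_n}\,\chrf_{h^{\theta_n} \leq \nx \leq 2 h^{\theta_{n-1}}} (\mbf x), \]
applied for $\md{\mbf\beta} \leq 2$, while $0 \leq \psc n \leq 1$. Combined, this gives $\md{\hess (\psc n)^2 (\mbf x)} \leq C\,h^{-2 \theta_n}$ uniformly. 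Plugging into Lemma~\ref{lem:convest} with $\tau = \tau_n = h^{\beta_n}$ yields
\[ \| (\psc n)^2 \star g_{\tau_n}^2 - (\psc n)^2 \|_\infty \leq C\,\tau_n^2\,h^{-2\theta_n} = C\,h^{2 (\beta_n - \theta_n)}. \]

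For the support statement, I observe that $(\psc n)^2$ is supported in $\osc n$, and convolution with $g_{\tau_n}^2$ (whose support lies in $\ball_{\tau_n}$) expands the support by at most $\ball_{\tau_n}$. Hence both $(\psc n)^2$ and $(\psc n)^2 \star g_{\tau_n}^2$ vanish outside $\osc n + \ball_{\tau_n}$, and therefore so does their difference. Multiplying the uniform bound by $\chrf_{\osc n + \ball_{\tau_n}}$ gives exactly the stated pointwise estimate.

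I do not foresee any real obstacle; the claim is essentially a corollary of Lemma~\ref{lem:convest} once one has the uniform $\bigo(h^{-2 \theta_n})$ bound on $\hess (\psc n)^2$, which in turn follows mechanically from the scaling of $\varphi$ inherent in the definition of $\psc n$. The only mild care needed is noting that the second-derivative bounds on $\psc n$ do not pick up any additional powers of the small parameter $\depsilon$ — they depend only on $\theta_n$, not on the difference $\theta_{n-1} - \theta_n$ — since the two cutoff functions entering $\psc n$ live at scales $h^{\theta_n}$ and $h^{\theta_{n-1}}$, and the finer scale $h^{\theta_n}$ dominates the derivative bounds.
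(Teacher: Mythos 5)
Your proposal is correct and takes exactly the approach the paper intends (the corollary is stated without proof as a direct application of Lemma~\ref{lem:convest}): apply the $C^2$ approximation lemma to $f = (\psc n)^2$, use the uniform derivative bound $\md{\partial^{\mbf\beta}\psc n} \leq C\,h^{-\md{\mbf\beta}\,\theta_n}$ to control $\hess (\psc n)^2$ by $C h^{-2\theta_n}$, and restrict to $\osc n + \ball_{\tau_n}$ by the support of the convolution. Your remark that the bounds scale with $\theta_n$ alone (the inner, finer scale) and not with $\theta_{n-1}-\theta_n$ is also the right observation explaining why no extra $\depsilon$-loss appears here.
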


\subsubsection{Semiclassical analysis in the intermediate regions}
We apply the coherent states technique to the regions $\osc n$.

\begin{lem}[Semiclassical error without optimization]
Let $d = 2, 3$. In all inner semiclassical zones $\osc n$,
\begin{align*} &\tr [\psc n (-h^2 \Delta + V ) \psc n ]_- = \lcl h^{-d} \int \dd\mbf u\,V_-^{1+\frac d 2} (\psc n )^2 + R ,
 \end{align*}
where the semiclassical error $R$ is:
\begin{center}
 \begin{tabular}{lll}
\toprule
           &   \multicolumn{1}{c}{$d = 2$}                                           &  \multicolumn{1}{c}{$d = 3$}   \\
   \midrule
$\theta_{n-1} \geq 0$ &   $\bigo\big(h^{-2 + \beta_n + \theta_n (1 - 2 s) - A \epsilon}$        &  $\bigo\big(h^{-3 + \beta_n + \theta_n (2 - \frac 5 2 s) - A \epsilon}$   \\
           &   $+\;h^{-2 + 2 ( 1 - \beta_n) + \theta_n (2 - s) - A \epsilon}$        &  $+\;h^{-3 + 2 ( 1 - \beta_n) + \theta_n (3 - \frac 3 2 s) - A \epsilon}$   \\
           &   $+\;h^{-2 + 2 \beta_n - \theta_n (1 + s) - A \epsilon} \big)$         &  $+\;h^{-3 + 2 \beta_n - \theta_n (1 + \frac 5 2 s) - A \epsilon} \big)$   \\[.5em]
   
$\theta_{n-1} < 0$ &   $\bigo\big(h^{-2 + \beta_n + \theta_n (1 - S) - A \epsilon}$          &  $\bigo\big(h^{-3 + \beta_n + \theta_n (2 - \frac 3 2 S) - A \epsilon}$   \\
           &   $+\;h^{-2 + 2 ( 1 - \beta_n) + \theta_n (2 - S) - A \epsilon}$        &  $+\;h^{-3 + 2 ( 1 - \beta_n) + \theta_n (3 - \frac 3 2 S) - A \epsilon}$   \\
           &   $+\;h^{-2 + 2 \beta_n - \theta_n S - A \epsilon} \big)$               &  $+\;h^{-3 + 2 \beta_n - \theta_n (1 + \frac 5 2 S) - A \epsilon} \big)$    \\
\bottomrule
\end{tabular}
\end{center}
\end{lem}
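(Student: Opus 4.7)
The plan is to obtain matching upper and lower bounds for $\tr[\psc n(-h^2\Delta + V)\psc n]_-$ via the Lieb--Loss coherent-state technique at scale $\tau_n = h^{\beta_n}$, and then read off the three error entries from three natural sources of approximation: convolution of the potential $V \to V\star g_{\tau_n}^2$, convolution of the localization square $(\psc n)^2 \to (\psc n)^2\star g_{\tau_n}^2$, and kinetic-energy spreading $h^2\tau_n^{-2}\|\nabla g\|_2^2$.

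For the upper bound, I would take the overcomplete trial density matrix
\[ \Gamma_n^{\mathrm{up}} := \ootp d \iint_{\mathfrak M_{\mbf u}} \prf{\tau_n}{u}{p}\,\dd\mbf u\,\dd\mbf p, \qquad 0 \leq \Gamma_n^{\mathrm{up}} \leq I, \]
and apply the variational bound $-\tr[\psc n H \psc n]_- \leq \tr[H\,\psc n \Gamma_n^{\mathrm{up}}\psc n]$. Inserting the three coherent-state identities from Section~3 into the sandwiched state and evaluating the $\mbf p$-integrals over $\mathfrak M_{\mbf u}$ via the $\lkin,\lpot,\lcl$ formulas yields the main term $\lcl h^{-d}\int [V(\mbf u)]_-^{1+d/2}((\psc n)^2\star g_{\tau_n}^2)(\mbf u)\,\dd\mbf u$ plus the three kinds of corrections. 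For the lower bound, I would use the operator-level decomposition
\begin{align*}
\tr[\psc n H \psc n\gamma] &= \ootp d \iint [h^2\md{\mbf p}^2 + V(\mbf u)]\,\tr[\prf{\tau_n}{u}{p}\,\psc n\gamma\psc n]\,\dd\mbf u\,\dd\mbf p \\*
&\quad - h^2\tau_n^{-2}\|\nabla g\|_2^2\,\tr[\psc n\gamma\psc n] + \tr[(V - V\star g_{\tau_n}^2)\,\psc n\gamma\psc n],
\end{align*}
valid for any $0 \leq \gamma \leq I$, bound the square bracket below by minus its negative part, and use the pointwise estimate $\tr[\prf{\tau_n}{u}{p}\,\psc n\gamma\psc n] \leq \tr[\prf{\tau_n}{u}{p}\,(\psc n)^2] = ((\psc n)^2\star g_{\tau_n}^2)(\mbf u)$ together with Corollary~\ref{cor:cone} to absorb the residual $\tr[(V - V\star g_{\tau_n}^2)\,\psc n\gamma\psc n]$.

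For the error bookkeeping I would use the direct coherent-state computation $\tr[\psc n\Gamma_n^{\mathrm{up}}\psc n] = \lpot h^{-d}\int [V(\mbf u)]_-^{d/2}((\psc n)^2\star g_{\tau_n}^2)(\mbf u)\,\dd\mbf u$, which with $\md{\osc n} \lesssim h^{d\theta_{n-1}}$ and $\|[V]_-\chrf_{\osc n}\|_\infty \lesssim h^{-s_n\theta_n}$ gives $\tr[\psc n\Gamma_n^{\mathrm{up}}\psc n] \lesssim h^{-d + d\theta_{n-1} - d s_n\theta_n/2}$. Multiplying this trace by the kinetic spread $h^{2(1-\beta_n)}$ yields the second entry of the table, and multiplying it by the potential-convolution bound $h^{\beta_n - \theta_n(1+s_n)}$ from Corollary~\ref{cor:cone} yields the first entry. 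The third entry arises from the localization-convolution bound $h^{2(\beta_n-\theta_n)}$ of Corollary~\ref{cor:local}, combined (via Corollary~\ref{cor:bpow} applied to $[V]_-^{d/2}$) with the $L^1$-mass $\|[V]_-^{1+d/2}\chrf_{\osc n}\|_1 \lesssim h^{d\theta_{n-1}-(d+2)s_n\theta_n/2}$ integrated against $h^{-d}$. The $h^{-A\epsilon}$ factor absorbs $\theta_{n-1}-\theta_n = -\epsilon$ and related minor compressions. In the outer zones ($\theta_{n-1} < 0$) the same argument applies with $s$ replaced by $S$, giving the second row of the table.

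The main technical obstacle is the bookkeeping: one must carefully track which convolution-error bound (Corollary~\ref{cor:cone}, \ref{cor:bpow}, or \ref{cor:local}) applies to which factor, and with what power, while ensuring that the convolution enlargement $\osc n \to \osc n + \ball_{\tau_n}$ does not leak uncontrolled mass into neighbouring zones. The a priori requirement $\beta_n > \theta_n$ imposed just before the lemma guarantees precisely this: since $\tau_n = h^{\beta_n} \ll h^{\theta_n}$, the values of $V$ and $\psc n$ on $\osc n + \ball_{\tau_n}$ differ from their values on $\osc n$ only by the $h^{-A\epsilon}$ factor already present, so the enlargement is harmless.
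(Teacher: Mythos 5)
Your overall strategy coincides with the paper's: lower bound via $\gamma := \chrf_{(-\infty,0)}[\psc n(-h^2\Delta+V)\psc n]$ and the coherent-state trace identities; upper bound via the trial state $\Gamma_n^{\mathrm{up}} = (2\pi)^{-d}\iint_{\mathfrak M}\prf{\tau_n}{u}{p}$; three error sources. Your lower-bound decomposition is exactly the one the paper uses, and your bookkeeping for the first two table entries is consistent with the paper's.

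However, your sketch of the upper bound papers over its technical core. You write that one can ``insert the three coherent-state identities from Section~3 into the sandwiched state,'' but those identities compute $\tr(-h^2\Delta\gamma)$ and $\tr[(V\star g_\tau^2)\gamma]$ for a density matrix $\gamma$; they do not apply directly to $\tr[H\,\psc n\Gamma_n^{\mathrm{up}}\psc n]$, because the $\psc n$ factors intertwine with $\prf{\tau_n}{u}{p}$ (and $\Gamma_n^{\mathrm{up}}$ is not trace class). The paper instead computes the sandwiched matrix element $\braf{\tau_n}{u}{p}\psc n(-h^2\Delta+V)\psc n\ketf{\tau_n}{u}{p}$ directly, via the algebraic rewriting $\phi(-\Delta)\phi = -\tfrac12(\phi^2\Delta+\Delta\phi^2)+|\nabla\phi|^2$ with $\phi := g_{\tau_n}\psc n$, followed by plane-wave evaluation. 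This produces a cross term $h^2\int|\nabla(g_{\tau_n}\psc n)|^2$ (the paper's $I_3$) that merges the kinetic spread with $|\nabla\psc n|^2$. Moreover, to convert the resulting $L^{\mathrm{kin}}$ and $L^{\mathrm{pot}}$ pieces into the claimed $\lcl h^{-d}\int V_-^{1+d/2}(\psc n)^2$ main term, the paper needs the rearrangement
\[
\int V_-^{\frac d 2}\bigl[V(\psc n)^2 - ((\psc n)^2V)\star g_{\tau_n}^2\bigr]
= \int V(\psc n)^2\bigl[V_-^{\frac d 2}\star g_{\tau_n}^2 - V_-^{\frac d 2}\bigr],
\]
which is where Corollary~\ref{cor:bpow} actually enters; this feeds the paper's $I_2$ term, whose order matches your \emph{first} table entry, not the third. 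Your attribution of Corollary~\ref{cor:bpow} to the third entry is therefore misplaced: the third entry comes from $I_1$ (the $(\psc n)^2$-convolution, Corollary~\ref{cor:local}) alone. You also omit the observation that the first-order Taylor term vanishes by the radial symmetry of $g$, without which the $C^2$ (rather than $C^1$) estimate for $(\psc n)^2$ would not be available. None of this changes the final exponents, but a complete proof must spell out the sandwich computation and the $V_-^{d/2}$ rearrangement rather than appealing to identities that do not directly apply.
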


With the optimal choice for $\beta_n$, this implies:

\begin{cor}[Semiclassical error]
\label{cor:scerror}
Let $d = 2, 3$. In all inner semiclassical zones $\osc n$ with
$ \theta_n \leq \frac 2 {2 - s} $,
\begin{align*} &\tr [\psc n (-h^2 \Delta + V ) \psc n ]_- = \lcl h^{-d} \int \dd\mbf u\,V_-^{1+\frac d 2} (\psc n )^2 + R ,
 \end{align*}
where the semiclassical error $R$ is
\begin{center}
 \begin{tabular}{lll}
\toprule
           &   \multicolumn{1}{c}{$d = 2$}                                   &  \multicolumn{1}{c}{$d = 3$}   \\
   \midrule
$\theta_n \geq \frac{2}{8 - s}$  &       & $\bigo\big( h^{-2 + \theta_n (1 - 2 s) - A \epsilon }  \big)$ \\[.5em]
$\theta_{n-1} \geq 0$ &   $\bigo\big(h^{-\frac 4 3 + \theta_n (\frac 4 3 - \frac 5 3 s) - A \epsilon} \big)$        &  $\bigo\big(h^{-\frac 7 3 + \theta_n (\frac 7 3  - \frac {13} 6 s) - A \epsilon}\big)$   \\[.5em]
$\theta_{n-1} < 0$ &   $\bigo\big(h^{-\frac 4 3 + \theta_n (\frac 4 3 - S) - A \epsilon}\big)$          &  $\bigo\big(h^{-\frac 7 3 + \theta_n (\frac 7 3 - \frac 3 2 S) - A \epsilon} \big)$   \\
\bottomrule
\end{tabular}
\end{center}
\end{cor}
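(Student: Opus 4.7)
The plan is to obtain the corollary as a small linear-programming optimization of the free parameter $\beta_n$ in the unoptimized bound of the preceding lemma. In each inner semiclassical zone, the error $R$ is a sum of three contributions whose exponents of $h$ are affine in $\beta_n$: the first grows as $+\beta_n$ (the $C^1$ convolution approximation of $V_-$, Corollary~\ref{cor:cone}), the second decreases as $-2\beta_n$ (the localization error $h^{2(1-\beta_n)}$ from Corollary~\ref{cor:local}), and the third grows as $+2\beta_n$ (the convolution approximation of the power $V_-^{1+d/2}$, in the spirit of Corollary~\ref{cor:bpow}). Maximizing $\min_i E_i$ over $\beta_n$ is a one-dimensional minimax; the optimum is attained at a breakpoint where two of the three exponents coincide and the third is subdominant. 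Each of the four cases (two choices of $d$ times the sign of $\theta_{n-1}$) then reduces to solving two linear equations and comparing.

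\textbf{Worked case $d = 3$, $\theta_{n-1} \geq 0$.} The three exponents (modulo the common $-A\epsilon$ slack) are
\[ E_1 = -3 + \beta_n + \theta_n\bigl(2 - \tfrac{5 s}{2}\bigr), \quad E_2 = -1 - 2 \beta_n + \theta_n\bigl(3 - \tfrac{3 s}{2}\bigr), \quad E_3 = -3 + 2\beta_n - \theta_n\bigl(1 + \tfrac{5 s}{2}\bigr). \]
Balancing $E_1 = E_2$ yields $\beta_n = \tfrac{2}{3} + \tfrac{1}{3}\theta_n(1 + s)$ with common value $-\tfrac{7}{3} + \theta_n\bigl(\tfrac{7}{3} - \tfrac{13 s}{6}\bigr)$, reproducing the second row of the $d=3$ column. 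Balancing $E_2 = E_3$ yields $\beta_n = \tfrac{1}{2} + \theta_n\bigl(1 + \tfrac{s}{4}\bigr)$ with common value $-2 + \theta_n(1 - 2s)$, reproducing the first row. A short computation shows the dropped exponent ($E_3$ in the first balance, $E_1$ in the second) is dominated exactly when $\theta_n \leq \tfrac{2}{8 - s}$ or $\theta_n \geq \tfrac{2}{8-s}$, respectively; these two regimes account for the first two rows of the $d=3$ table. In both balances, the admissibility constraint $\beta_n > \theta_n$ reduces precisely to the stated hypothesis $\theta_n < \tfrac{2}{2-s}$.

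\textbf{Remaining cases and main obstacle.} The analogous linear programs for the outer zones $\theta_{n-1} < 0$ (with $s$ replaced by $S$ in all three exponents) and for $d = 2$ follow the same recipe. The single-row structure of the $d=2$ entries reflects the fact that under the assumption $s < \tfrac{5}{4}$ the breakpoint analogous to $\tfrac{2}{8-s}$ falls outside the admissible range $\theta_n < \tfrac{2}{2-s}$, so only the $E_1 = E_2$ balance is relevant. Since the argument is algorithmic, the only genuine obstacle is bookkeeping: correctly extracting the three $h$-exponents from the previous lemma, verifying in each regime which balance binds and that the dropped exponent is indeed subdominant throughout the stated interval of $\theta_n$, and confirming $\beta_n > \theta_n$ using the hypothesis $\theta_n \leq \tfrac{2}{2-s}$. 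No new analytic input beyond the previous lemma is required.
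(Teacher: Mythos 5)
Your proposal is correct and follows essentially the same route as the paper: the corollary is just the optimization over $\beta_n$ of the three error exponents appearing in the preceding (unoptimized) lemma, with the optimal $\beta_n$ chosen at the breakpoint where two exponents balance while the third is subdominant; the paper's proof consists precisely of the table of optimal $\beta_n$ values that your linear program produces, and your worked $d=3$, $\theta_{n-1}\ge 0$ case reproduces both rows correctly. Two small imprecisions in the surrounding prose, neither of which creates a gap: first, for the $E_2=E_3$ balance (first $d=3$ row) the admissibility constraint $\beta_n>\theta_n$ reads $\tfrac12 > -\theta_n s/4$, which is automatic for $\theta_n\ge 0$ and does \emph{not} reduce to $\theta_n<\tfrac{2}{2-s}$ — that constraint is only needed for the $E_1=E_2$ balance in the other rows; second, the reason the $d=2$ column has a single row is that the $E_1$/$E_3$ crossover $\theta_n=\tfrac{2}{5-4s}$ lies at or beyond $\tfrac{2}{2-s}$ precisely because $s\ge 1$ (with $s<\tfrac54$ merely keeping $5-4s>0$), so the hypothesis $s<\tfrac54$ alone is not the operative assumption you cite.
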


\newcommand{\gam}{\psc n \gamma \psc n}
\newcommand{\Tr}{\tr}
\newcommand{\taun}{{\tau_n}}
\newcommand{\sn}{s_n}
\newcommand{\tn}{{\nu_n + s_n}}
\newcommand{\Tn}{{\nu_n + s_n}}
\begin{proof}[Proof of the lemma] The proof is valid for all dimensions $d \geq 2$. However, for each of the three error terms in the proof of the upper bound there is a comparable\footnote{By `comparable' we mean that one error term dominates over the other independently of the value of $\beta_n$.} error term in proof of the lower bound, and also at one point there is a distinction between inner and outer zones, and which one dominates depends on $d$.

Throughout the proof, we use the notation
\begin{align*}
 \mathfrak M &:= \lbrace (\mbf u, \mbf p ) : h^2 \vert\mbf{p}\vert^2 + V(\mbf u) <0 \rbrace \\
 \mathfrak M_{\mbf u} &:= \lbrace \mbf p : h^2 \vert\mbf{p}\vert^2 + V(\mbf u) <0 \rbrace .
\end{align*}
Also, we put w.l.o.g. $V(\mbf x) := 0$ outside $\osc n + \ball_{\tau_n}$.

We prove the lower bound first. Putting
\[ \gamma := \chrf_{(-\infty, 0)} [\psc n (-h^2 \Delta + V  ) \psc n ] , \]
we get
\begin{align*}
&- \tr [\psc n (-h^2 \Delta + V ) \psc n ]_- = \tr [(-h^2 \Delta + V  )\,\psc n \gamma \psc n ] \\
&\qquad= \ootp d \ipu p u\,[h^2 \vert\mbf{p}\vert^2 + V(\mbf u)  ]\,\tr (\prf {\tau_n} u p \psc n \gamma \psc n )\;+ \\
&\qquad\qquad\qquad + \tr [(V - V\star g_{\tau_n}^2 - C h^{2(1-\beta_n )} )\,\psc n \gamma \psc n ] , \tag{$*$}
\end{align*}
where $C = \nor{\nabla g}_{2}^2 $.

Lemma~\ref{lem:traceest} yields
\begin{align*}
\tr (\gam ) &= \bigo \big(  C h^{-d + \theta_n (- \frac d 2 s_n + d ) - A \depsilon } \big) .
\end{align*}
This bound, along with the bound from Corollary~\ref{cor:cone}, yield
\begin{align*}
\tr [(V - V\star g_{\tau_n}^2 )\,\gam ] &= \bigo (h^{-d + \beta_n + \theta_n (- \frac d 2 s_n + d -1 - s_n) - A \depsilon }) ,      \\
\tr [ C h^{2 (1- \beta_n )}\,\gam ] &= \bigo (h^{- d + 2 (1 - \beta_n ) + \theta_n (- \frac d 2 s_n + d) - A \depsilon }) .
\end{align*}
To bound the first term in eqn.~($*$), we use
\[  0 \leq \tr (\prf \taun u p \gam ) \leq \tr (\prf \taun u p (\psc n)^2 ) = ( (\psc n)^2 \star g_\taun^2 )(\mbf u) ,  \]
hence
\begin{align*}
&\ootp d \ipu p u\,[h^2 \vert\mbf{p}\vert^2 + V(\mbf u) ]\,\tr (\prf \taun u p \gam ) \\
&\qquad\qquad \geq - \ootp d \ipu[\mathfrak M] p u\,[h^2 \vert\mbf{p}\vert^2 + V(\mbf u)  ]_-\,((\psc n)^2 \star g_\taun^2 )(\mbf u)    \\
&\qquad\qquad \geq - \lcl h^{-d} \int\dd\mbf u\,[V(\mbf u) ]_-^{1+\frac d 2} (\psc n)^2 (\mbf u)\;- \\
&\qquad\qquad\hphantom{\;\geq\;}{- \lcl} h^{-d}  \int\dd\mbf u\,[V(\mbf u)  ]_-^{1+\frac d 2} \md{(\psc n)^2 - (\psc n)^2 \star g_\taun^2 } (\mbf u).
\end{align*}
By virtue of Corollary~\ref{cor:local} and Hölder's inequality, the last term of the above inequality can be bounded by
\[
\lcl h^{-d}   \int\dd\mbf u\,[V(\mbf u)  ]_-^{1+\frac d 2} \md{(\psc n)^2 - (\psc n)^2 \star g_\taun^2 } (\mbf u) \leq C\,h^{- d  + 2(\beta_n - \theta_n )} \nor{V_-}_{\infty}^{1+\frac d 2 }\,\vert \osc n \vert .
\]
We compute
\[  \nor{V _-}_{\infty}^{1+\frac d 2 }\,\vert \osc n \vert = \bigo (h^{\theta_n (d - \sn (1 + \frac d 2 )) - A \depsilon}) .
 \]
Therefore, the above term becomes
\[
\lcl h^{-d}   \int\dd\mbf u\,[V(\mbf u) ]_-^{1+\frac d 2} \md{(\psc n)^2 - (\psc n)^2 \star g_\taun^2 } (\mbf u) = \\
= \bigo (h^{- d + 2 \beta_n - \theta_n (1 + \sn (d - 2 + \tfrac d 2 )) - A \depsilon}) .
\]
For the upper bound, we choose
\[ \gamma := \ootp d \ipu[\mathfrak M] p u\,\prf \taun u p .  \]
Then,
\begin{align*}
&-\tr [\psc n (-h^2 \Delta + V  ) \psc n]_- \leq \tr [ (-h^2\Delta + V )\,\gam ] \\
&\qquad\qquad= \ootp d \ipu[\mathfrak M] p u\,\braf \taun u p \psc n (-h^2 \Delta + V  ) \psc n \ketf \taun u p \\
&\qquad\qquad= \ootp d \ipu[\mathfrak M] p u\,\brawv p  -(g_\taun \psc n)^2 \tfrac{h^2}{2}\Delta - \tfrac{h^2}{2}\Delta (g_\taun \psc n)^2\;+ \\[-1em]
&\qquad\qquad\qquad\qquad\qquad +\; h^2 \md{\nabla (g_\taun \psc n)}^2 + (g_\taun \psc n)^2 V \ketwv p   \\
&\qquad\qquad= \ootp d \ipu[\mathfrak M] p u\,\Big(  h^2 \vert\mbf{p}\vert^2 ((\psc n)^2 \star g_\taun^2)(\mbf u)\;+ \\[-1em]
&\qquad\qquad\qquad\qquad\qquad +\;h^2 \int \dd\mbf x\,\md{\nabla (g_\taun \psc n )(\mbf x)}^2\;+ \\
&\qquad\qquad\qquad\qquad\qquad +\;\big( (\psc n)^2 V  \star g_\taun^2 \big)(\mbf u) \Big) \\
&\qquad\qquad= \lkin h^{-d} \int \dd\mbf u\,[V(\mbf u) ]_-^{1+\frac d 2} ((\psc n)^2 \star g_\taun^2 )(\mbf u)\;+ \\
&\qquad\qquad\qquad +\; \lpot h^{2-d} \ipu u x\,[V(\mbf x) ]_-^{\;\frac d 2}\,\md{\nabla (g_\taun \psc n )(\mbf x )}^2\; -\\
&\qquad\qquad\qquad -\;\lpot h^{-d} \int\dd\mbf u\,[V(\mbf u) ]_-^{\;\frac d 2} \big( (\psc n)^2 V  \star g_\taun^2 \big)(\mbf u)  \\
&\qquad\qquad= - \lcl h^{-d} \int \dd\mbf u [V(\mbf u) ]_-^{1+\frac d 2} (\psc n(\mbf u))^2\;+ \\
&\qquad\qquad\qquad +\;\lkin h^{-d} \int \dd\mbf u\,[V(\mbf u)  ]_-^{1+\frac d 2}\;\times \\*
&\qquad\qquad\qquad\qquad\qquad\qquad\times\; \big[ (\psc n)^2 \star g_\taun^2 - (\psc n )^2  \big] (\mbf u)\; + \qquad &\smash{\raisebox{1em}{$\Bigg\rbrace \;I_1$}} \\
&\qquad\qquad\qquad +\;\lpot h^{-d} \int\dd\mbf u\,V(\mbf u)\,(\psc n (\mbf u))^2 \;\times \\*
&\qquad\qquad\qquad\qquad\qquad\qquad\times\;\big(V _-^{\;\frac d 2} \star g_\taun^2  - V_-^{\;\frac d 2} \big) (\mbf u)\; + &\smash{\raisebox{1em}{$\Bigg\rbrace \;I_2$}} \\
&\qquad\qquad\qquad +\;\lpot h^{2-d} \ipu u x\,[V(\mbf u)]_-^{\;\frac d 2}\,\md{\nabla (g_\taun \psc n )(\mbf x)}^2 \; , \qquad  &\Big\rbrace \;I_3
\end{align*}
where we used that
\begin{multline*}
 \int\dd\mbf u\,[V(\mbf u)]_-^{\;\frac d 2} \big[ V(\mbf u)\,(\psc n)^2 (\mbf u) - ((\psc n)^2 V) \star g_\taun^2 (\mbf u) \big] = \\
= \int\dd\mbf u\,V(\mbf u)\,(\psc n (\mbf u))^2 \big[ V_-^{\;\frac d 2} \star g_\tau^2 (\mbf u) - [V (\mbf u) ]_-^{\;\frac d 2}\big] .
\end{multline*}
$I_1, I_2$ and $I_3$ are error terms. The term
\[
I_1 = \bigo (h^{- d + 2 \beta_n - \theta_n (d-2 + \sn (1 + \frac d 2 )) - A \depsilon}) \]
has already been treated. The term
$I_2$
we bound by
\begin{align*}
& \lpot h^{-d} \int\dd\mbf u\,V(\mbf u)\,(\psc n (\mbf u))^2\,\big[ V_-^{\;\frac d 2} \star g_\taun^2  - V_-^{\;\frac d 2} \big] (\mbf u) \leq \\
&\qquad\qquad \leq C\,h^{-d}\; \vert \osc n + \ball_\taun \vert  \; \nor{V }_{\infty}  \big\|  V_-^{\;\frac d 2} \star g_\taun^2  - V_-^{\;\frac d 2}  \big\|_{\infty}   \\
&\qquad\qquad = \LEFTRIGHT\lbrace. {\begin{array}{ll}
   \bigo \big( h^{- d + \theta_n ( d - (1+\frac d 2) s_n ) +  \beta_n - \theta_n  - A \depsilon} \big) : & \text{if}\;n \geq 1    \\
   \bigo \big( h^{- d + \theta_n ( d - \frac d 2 s_n )\hphantom{(1+)} +  \beta_n - \theta_n  - A \depsilon} \big) : & \text{if}\;n \leq 0
   \end{array} }.
\end{align*}
The distinction $n \geq 1$ vs. $n \leq 0$ comes from the term $\| V \|_\infty$. For $n \leq 0$, we are using the boundedness of $\vert V \vert$ on $\lbrace \nx \geq 1 \rbrace$.

Finally, the double integral $I_3$ can be bounded as follows:
\begin{align*} \int \dd\mbf x\,\md{\nabla (g_\taun (\psc n)^2 )(\mbf x)}^2 &\leq \md{\ball_\taun}\,C \big( h^{- \beta_n \frac d 2} h^{-\theta_n} + h^{-\beta_n \left( 1 + \frac d 2 \right)} \big)^2 = \bigo ( h^{-2\beta_n } ) , \\
\smashoperator{\int_{\osc n + \ball_\taun}} \dd\mbf u\,[V(\mbf u)  ]_-^{\;\frac d 2} &\leq \md{\osc n + \ball_\taun}\;\big\| V_- \big\|_\infty^{\frac d 2}  \\[-1em]
&= \bigo (h^{\theta_n ( d - s_n \frac d 2 ) - A \depsilon}) ,  \end{align*}
therefore
\[ \lpot h^{2-d} \ipu u x\,[V(\mbf u)]_-^{\;\frac d 2}\,\md{\nabla (g_\taun \psc n )(\mbf x)}^2 = \bigo (h^{- d + 2 + \theta_n ( d - s_n \frac d 2 ) - 2 \beta_n  - A \depsilon}) . \qedhere \]
\end{proof}

\begin{proof}[Proof of the corollary]
The optimal choices for $\beta_n$ are:
\begin{center}
 \begin{tabular}{lll}
\toprule
           &   \multicolumn{1}{c}{$d = 2$}                                   &  \multicolumn{1}{c}{$d = 3$}   \\
   \midrule
$\theta_n \geq \frac{2}{8 - s}$  &       & $\frac 1 2 + \theta_n \frac{4+s}{4} $ \\[.5em]
$\theta_{n-1} \geq 0$ &   $\frac{2}{3}+\theta_n \frac{1+s}{3}$        &  $\frac 2 3 + \theta_n \frac{1+s}{3}$   \\[.5em]
$\theta_{n-1} < 0$ &   $\frac 2 3 + \frac{1} 3 \theta_n$          &  $\frac 2 3 + \frac{1} 3 \theta_n$   \\
\bottomrule
\end{tabular}
\end{center}
\end{proof}

\subsection{Optimization of the quantum zone}

It remains to fix $\alpha$. Three error sources have to be considered: The localization error, the quantum error and the semiclassical error. With a higher value of $\alpha$ (i.e. a smaller quantum zone), we get a better quantum error, but worse localization and semiclassical errors.

The subsequent lemmas already capture the essence of the main theorems. However, note that the $\eta^*$ to be defined in the following is not guaranteed to be positive, i.e. the error terms are not guaranteed to be of order $\lio (h^{-d})$.

\begin{lem}[Two-dimensional case]
\label{lem:m2c1}
Let $d = 2$ and assume $1 \leq s < 2$, $r < 2 - s$. Let $\omega > 0$. Then,
\begin{align*}
&\tr [- h^2 \Delta + V_1 ]_- - \tr [- h^2 \Delta + V_2  ]_- \\*
&\quad= \frac{1}{8 \pi h^2} \,\int \dd\mbf x\;\big( [ V_1(\mbf x) ]_-^{2} - [ V_2 (\mbf x ) ]_-^{2 } \big) \;- \\*
&\quad\quad- \frac{1}{8 \pi h^2} \,\int \dd\mbf x\;\big( [ V_1(\mbf x) ]_-^{2} - [ V_2 (\mbf x ) ]_-^{2 } \big)\,\Big(\sum_{n < - \omega / \epsilon } (\psc n)^2 (\mbf x)\Big)\;+ \\*
&\quad\quad+ \sum_{n < -\omega / \epsilon } \big(\tr [\psc n (- h^2 \Delta + V_1 ) \psc n]_- - \tr [\psc n (-h^2 \Delta+ V_2 ) \psc n ]_- \big)\;+ \\*
&\quad\quad+  \bigo (h^{-2 + \eta^* - A \epsilon }),
\end{align*}
where $\eta^* = \min \lbrace \eta_{\mathrm{sc}}, \eta_{\mathrm{loc}}, \eta_{\mathrm{cutoff}} \rbrace$, with
\begin{align*}
\eta_{\mathrm{sc}} &= -\frac{2 (s - 1)}{2 - s} + \frac{2 (1 - r) (2 s - 1)}{(2 - s)(5 s - 3 r - 1)}  \\*
\eta_{\mathrm{loc}} &= - \frac{2 (s - 1)}{2 - s} + \frac{2 (1 - r) (2 s - 3)}{(2 - s)(2 - r)} \\*
\eta_{\mathrm{cutoff}} &= \frac 2 3 - \Big(\frac 4 3 - S \Big)\,\omega .
\end{align*}
\end{lem}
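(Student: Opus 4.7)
The plan is to stitch together the three localized estimates developed in the preceding subsections and then tune the quantum-zone exponent $\alpha$ optimally. First, I apply Lemma~\ref{lem:localization} to each of $\tr[-h^2\Delta+V_1]_-$ and $\tr[-h^2\Delta+V_2]_-$ and subtract, obtaining a quantum-zone difference $\tr[\Phi^{\mathrm{q}}(-h^2\Delta+V_1)\Phi^{\mathrm{q}}]_- - \tr[\Phi^{\mathrm{q}}(-h^2\Delta+V_2)\Phi^{\mathrm{q}}]_-$, plus a sum over all semiclassical zones of the analogous local trace differences $\tr[\psc{n}(-h^2\Delta+V_1)\psc{n}]_- - \tr[\psc{n}(-h^2\Delta+V_2)\psc{n}]_-$, plus a localization error whose size is controlled by Lemma~\ref{lem:localization} and depends on $\alpha$ and $\depsilon$.

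For the quantum-zone difference I apply Lemma~\ref{lem:compone} directly: since the right-hand side of that estimate is a pure error term, the quantum zone contributes no main term. To match the integral on the right-hand side of the claim, I also add and subtract $\lcl h^{-2}\int([V_1]_-^{2}-[V_2]_-^{2})(\Phi^{\mathrm{q}})^{2}$ and bound this piece via Corollary~\ref{cor:quantum-int} (whose applicability uses the hypothesis $r<2-s$). For each inner semiclassical zone $-\omega/\depsilon\le n\le N$, I apply Corollary~\ref{cor:scerror} separately to the two potentials and subtract, yielding the main term $\lcl h^{-2}\int([V_1]_-^{2}-[V_2]_-^{2})(\psc{n})^{2}$ plus a single-zone semiclassical error. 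The outer-zone traces $n<-\omega/\depsilon$ are left untouched and appear as the third summand on the right-hand side of the claim. The partition-of-unity identity $(\Phi^{\mathrm{q}})^{2}+\sum_{n}(\psc{n})^{2}=1$ then reassembles the collected integrals into $\int([V_1]_-^{2}-[V_2]_-^{2})$ minus its restriction to the outer mask $\sum_{n<-\omega/\depsilon}(\psc{n})^{2}$, which is precisely the asserted form.

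It remains to bound the accumulated error by $\bigo(h^{-2+\eta^*-A\depsilon})$. The single-zone semiclassical errors from Corollary~\ref{cor:scerror} form a geometric-type series in $\theta_n=n\depsilon$, so, up to an $h^{-A\depsilon}$-prefactor, the sum is dominated by its two endpoints: the innermost inner zone $\theta_n\sim\alpha$ and the outermost inner zone $\theta_n\sim -\omega$. Balancing the $\theta_n\sim\alpha$ contribution against the quantum-comparison error from Lemma~\ref{lem:compone} determines the optimal $\alpha$ and produces $\eta_{\mathrm{sc}}$; the $\theta_n\sim -\omega$ endpoint, whose exponent involves $S$ rather than $s$, produces $\eta_{\mathrm{cutoff}}$; and substituting the chosen $\alpha$ into the localization error from Lemma~\ref{lem:localization} produces $\eta_{\mathrm{loc}}$. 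Under $r<2-s$, the integral error from Corollary~\ref{cor:quantum-int} is dominated by these three. Taking the worst yields $\eta^{*}=\min\{\eta_{\mathrm{sc}},\eta_{\mathrm{loc}},\eta_{\mathrm{cutoff}}\}$.

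The main obstacle is the bookkeeping: several error exponents depend simultaneously on $\alpha$ and on $\theta_n$, and one must verify that the summation over $n$ really does cost only an $h^{-A\depsilon}$-factor, so that $\depsilon\downarrow 0$ remains admissible for the final theorem. A secondary technical point is that Corollary~\ref{cor:scerror} requires $\theta_n\le 2/(2-s)$ throughout each inner zone; one must verify that the optimal $\alpha$ satisfies this, and that the $n\le 0$ inner zones (in which the gradient exponent is $S$ rather than $s$) contribute errors at least as good as the $\theta_n\sim-\omega$ endpoint that defines $\eta_{\mathrm{cutoff}}$.
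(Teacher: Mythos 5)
Your overall decomposition matches the paper exactly: IMS localization via Lemma~\ref{lem:localization}, the quantum-zone comparison via Lemma~\ref{lem:compone} combined with the integral error from Corollary~\ref{cor:quantum-int}, coherent-state estimates in the inner semiclassical zones via Corollary~\ref{cor:scerror}, and the outer zones left as-is. The partition-of-unity reassembly, the geometric-series argument over $n$ absorbed into the $h^{-A\depsilon}$ factor, and the identification of $\eta_{\mathrm{cutoff}}$ from the $\theta_n\sim-\omega$ endpoint of Corollary~\ref{cor:scerror} (where $S$ replaces $s$) are all correct and coincide with the paper.

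There is, however, a genuine gap in the final parameter tuning. You write that after determining $\alpha$ by balancing the innermost semiclassical error against the quantum error, you obtain $\eta_{\mathrm{loc}}$ by ``substituting the chosen $\alpha$ into the localization error.'' That is not how the paper derives $\eta_{\mathrm{loc}}$, and it does not yield the formula in the statement. The three $\alpha$-dependent error exponents have opposite monotonicities: the localization and semiclassical exponents deteriorate as $\alpha$ grows (smaller quantum zone), while the quantum-comparison exponent improves. The paper therefore introduces \emph{two} candidate values, $\alpha_{\mathrm{sc}}=\frac{2(2s-1)}{(2-s)(5s-3r-1)}$ balancing semiclassical against quantum, and $\alpha_{\mathrm{loc}}=\frac{2(3-2s)}{(2-s)(2-r)}$ balancing localization against quantum, and then selects whichever of the two gives the better final rate; the resulting rate is $\min\{\eta_{\mathrm{sc}},\eta_{\mathrm{loc}}\}$ with $\eta_{\mathrm{loc}}$ being the balanced value at $\alpha_{\mathrm{loc}}$, not the localization exponent evaluated at $\alpha_{\mathrm{sc}}$. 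Concretely, in the regime where $\alpha_{\mathrm{loc}}<\alpha_{\mathrm{sc}}$ (localization is the binding constraint), evaluating the localization exponent at $\alpha_{\mathrm{sc}}$ yields a strictly worse quantity than $\eta_{\mathrm{loc}}$; your scheme would then prove a weaker error rate than the one claimed, and you would not recover $\eta_{\mathrm{loc}}=-\frac{2(s-1)}{2-s}+\frac{2(1-r)(3-2s)}{(2-s)(2-r)}=\frac{2(5-4r-4s+3rs)}{(2-s)(2-r)}$. Relatedly, your verification that $\alpha\leq 2/(2-s)$ needs to be carried out for both candidates, not only for $\alpha_{\mathrm{sc}}$, since the argument may need to use $\alpha_{\mathrm{loc}}$.
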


\begin{proof}
The term $\eta_{\mathrm{cutoff}}$ comes from the semiclassical error in $\osc {\lceil -\omega / \epsilon \rceil}$. The term $\eta_{\mathrm{sc}}$ is obtained by choosing $\alpha$ such that the errors of the innermost semiclassical zone and the quantum zone are of the same order, which yields
\[ \alpha_{\mathrm{sc}} = \frac{2 (2 s - 1)}{(2- s)(5 s - 3 r - 1)} , \]
and the term $\eta_{\mathrm{loc}}$ is obtained from the $\alpha$ that puts localization error and quantum error in equilibrium, namely
\[ \alpha_{\mathrm{loc}} = \frac{2 (3 - 2 s)}{(2 - s)(2 - r)} . \]
Both fulfill
\[ \alpha_{\mathrm{loc}} , \alpha_{\mathrm{sc}} \leq \frac{2}{2 - s} . \]
If for $\alpha := \alpha_{\mathrm{sc}}$ the semiclassical/quantum error $\bigo (h^{- \eta_{\mathrm{sc}} - A \epsilon})$ is of higher order than the localization error, choose this $\alpha$, otherwise choose $\alpha := \alpha_{\mathrm{loc}}$; in the latter case, the semiclassical error is of lower order than the quantum/localization error, as follows elementarily.

Lemma~\ref{lem:localization}, Lemma~\ref{lem:compone} and Corollary~\ref{cor:scerror} imply
\begin{align*}
&\tr [- h^2 \Delta + V_1  ]_- - \tr [- h^2 \Delta + V_2   ]_- \\*
&\quad = \frac{1}{8 \pi h^2} \sum_{n \geq - \omega / \epsilon } \int \dd\mbf x\,\big( [V_1(\mbf x)]_-^2 - [V_2(\mbf x) ]_-^2 \big)\,(\psc n)^2(\mbf x)\;+\\*
&\quad\quad+\sum_{n < -\omega / \epsilon } \big(\tr [\psc n (- h^2 \Delta + V_1 ) \psc n]_- - \tr [\psc n (-h^2 \Delta+ V_2 ) \psc n ]_- \big)\; +\\*
&\quad\quad+\;\bigo(h^{-2 + \eta^* - A \epsilon}) .
\end{align*}
Furthermore, by virtue of Corollary~\ref{cor:quantum-int}, we write
\begin{align*}
& \frac{1}{8 \pi h^2} \int \dd\mbf x\,\big( [V_1(\mbf x) ]_-^2 - [V_2(\mbf x) ]_-^2 \big)\,\Big( \sum_{n > - \omega / \epsilon }(\psc n)^2(\mbf x) \Big) \\*
&\quad = \frac{1}{8 \pi h^2} \int \dd\mbf x\,\big( [V_1(\mbf x) ]_-^2 - [V_2(\mbf x)]_-^2 \big) + \bigo (h^{-2+\eta - A \epsilon})\; - \\*
&\quad\quad - \frac{1}{8 \pi h^2} \,\int \dd\mbf x\;\big( [ V_1(\mbf x) ]_-^{2} - [ V_2 (\mbf x )]_-^{2 } \big)\,\Big(\sum_{n < - \omega / \epsilon } (\psc n)^2 (\mbf x) \Big) .
\end{align*}
This proves the lemma.
\end{proof}

\begin{lem}[Three-dimensional case]
\label{lem:m3}
Let $d = 3$ and assume $1 \leq s < \frac{43 - \sqrt{769}}{10} \simeq 1.5269$\footnote{$1.5269$ is a truncated value.} and $r < \min \lbrace s, \frac 3 2 (2 - s) \rbrace$.
Let $\omega > 0$. Then,
\begin{align*}
&\tr [- h^2 \Delta + V_1 ]_- - \tr [- h^2 \Delta + V_2  ]_- \\*
&\quad= \frac{1}{15 \pi^2 h^3} \,\int \dd\mbf x\;\big( [ V_1(\mbf x) ]_-^{\;\frac 5 2} - [ V_2 (\mbf x ) ]_-^{\;\frac 5 2 } \big) \;- \\*
&\quad\quad- \frac{1}{15 \pi^2 h^3} \,\int \dd\mbf x\;\big( [ V_1(\mbf x) ]_-^{\;\frac 5 2} - [ V_2 (\mbf x ) ]_-^{\;\frac 5 2  } \big)\,\Big( \sum_{n < - \omega / \epsilon } (\psc n)^2 (\mbf x) \Big) \;+ \\*
&\quad\quad+ \sum_{n < -\omega / \epsilon } \big(\tr [\psc n (- h^2 \Delta + V_1 ) \psc n]_- - \tr [\psc n (-h^2 \Delta+ V_2 ) \psc n ]_- \big)\;+ \\*
&\quad\quad+  \bigo (h^{-3+\eta^* - A \epsilon }) ,
\end{align*}
If $1 \leq s \leq \frac 6 5$, $\eta^* = \min \lbrace \eta_{\mathrm{loc}}, \eta_{\mathrm{cutoff}} \rbrace$ with
\begin{align*}
\eta_{\mathrm{loc}} &= 2 - \frac{8}{5 (2-r)} > 0 \\
\eta_{\mathrm{cutoff}} &= \frac 2 3 - \Big(\frac 7 3 - \frac 3 2 S \Big)\,\omega .
\end{align*}
If $s > \frac 6 5$, $\eta^* = \min \lbrace \eta_{\mathrm{sc}}, \eta_{\mathrm{loc}}, \eta_{\mathrm{cutoff}} \rbrace$ with
\begin{align*}
\eta_{\mathrm{sc}} &= 2 -\frac{5 (s - \frac 6 5)}{2-s} + \frac{8 (5 s - 4)}{5 (2 - s) (10 s - 5 r + 1)} \\*
\eta_{\mathrm{loc}} &= 2 - \frac{5 (s - \frac 6 5) }{2-s} + \frac{16 (8 - 5 s)}{25 (2 - s) (2 - r)} \\*
\eta_{\mathrm{cutoff}} &= \frac 2 3 - \Big(\frac 7 3 - \frac 3 2 S \Big)\,\omega .
\end{align*}
\end{lem}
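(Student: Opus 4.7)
The plan is to follow exactly the strategy already executed in Lemma~\ref{lem:m2c1}, only adjusting the parameter choices for $d=3$. First I apply the localization Lemma~\ref{lem:localization} to both $\tr[-h^2\Delta + V_i]_-$ $(i=1,2)$ and take the difference, which replaces the full trace difference by a quantum-zone difference plus a sum of semiclassical-zone differences, modulo the stated localization error. In the quantum zone I invoke Lemma~\ref{lem:compone} to absorb the difference of traces into a quantum error of the stated order. In each inner semiclassical zone $\osc n$ with $n \geq \lceil -\omega/\epsilon\rceil$ I apply Corollary~\ref{cor:scerror} to both potentials and take the difference, producing the $L^{\mathrm{cl}}_3 h^{-3}$-integral of $[V_1]_-^{5/2}-[V_2]_-^{5/2}$ on that zone plus the semiclassical error. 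For the outer zones $n < -\omega/\epsilon$ I keep the exact trace difference, since the integrand of the coherent-state estimate would degrade the bound beyond the cutoff threshold. Finally, Corollary~\ref{cor:quantum-int} converts the quantum-zone integral into the full $\int\dd\mbf x\,\bigl([V_1]_-^{5/2}-[V_2]_-^{5/2}\bigr)$ minus the tail $\sum_{n<-\omega/\epsilon}(\psc n)^2$, giving precisely the decomposition in the statement.

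The core parameter computation is the optimization of $\alpha$. Write each of the three error exponents as an affine function of $\alpha$: from Lemma~\ref{lem:localization} one gets $-3 + 2(1-\alpha) + \frac{6}{5}\alpha$ (with an extra subtrahend $\frac{5(s-6/5)}{2-s}$ if $s>\scr$); from Lemma~\ref{lem:compone} one gets $-3 + \alpha(\tfrac{6}{5}-r)$ (with an extra subtrahend $\frac{3(s-6/5)}{2-s}$ if $s>\scr$); and from Corollary~\ref{cor:scerror} in the innermost semiclassical zone one gets the $d=3$, $\theta_{n-1}\ge 0$ exponent evaluated at $\theta_n = \alpha$. I compute the balance $\alpha_{\mathrm{loc}}$ of quantum vs.\ localization and $\alpha_{\mathrm{sc}}$ of quantum vs.\ semiclassical, check that both satisfy $\alpha \leq \frac{2}{2-s}$ (so that Corollary~\ref{cor:scerror} applies and the integral error from Corollary~\ref{cor:quantum-int} is subcritical), and take whichever choice makes the overall error smaller. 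Plugging in yields the $\eta_{\mathrm{sc}}$ and $\eta_{\mathrm{loc}}$ of the statement.

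There are two regimes to treat. When $1 \leq s \leq \frac{6}{5}$ the potential $V_-$ is locally in $L^{5/2}$, so the third cases in Lemmas~\ref{lem:landaus}, \ref{lem:traceest}(ii), \ref{lem:localization}, and~\ref{lem:compone} do not apply; a direct check shows the innermost semiclassical error is automatically of lower order than both the quantum and localization errors, so only $\eta_{\mathrm{loc}}$ and $\eta_{\mathrm{cutoff}}$ survive. When $\frac{6}{5}<s<\frac{43-\sqrt{769}}{10}$ all three bottlenecks are active and $\eta^*=\min\{\eta_{\mathrm{sc}},\eta_{\mathrm{loc}},\eta_{\mathrm{cutoff}}\}$; the upper bound on $s$ is precisely what ensures the computed $\alpha_{\mathrm{loc}}=\frac{4(8-5s)}{5(2-r)(2-s)}$ remains positive and the two $\eta$'s are positive for some admissible $r$. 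The cutoff exponent $\eta_{\mathrm{cutoff}}$ is read off by evaluating the outer-zone semiclassical error from Corollary~\ref{cor:scerror} (the $\theta_{n-1}<0$ row, $d=3$) at $\theta_n=-\omega$, which gives $\bigo(h^{-3+2/3 - (\frac{7}{3}-\frac{3}{2}S)\omega})$.

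The main obstacle is the bookkeeping of the outer semiclassical tail: I must keep the exact trace differences $\sum_{n<-\omega/\epsilon}\bigl(\tr[\psc n(-h^2\Delta+V_1)\psc n]_- - \tr[\psc n(-h^2\Delta+V_2)\psc n]_-\bigr)$ as a standing unresolved term (they will be handled in a subsequent section of the paper), while still matching what is subtracted from the full $h^{-3}$-integral, because otherwise an uncontrolled tail of the integral sneaks into the error. Corollary~\ref{cor:quantum-int}, applied to the full tail $\sum_{n<-\omega/\epsilon}(\psc n)^2$, ensures these two tails appear symmetrically in the expansion and so no new error is introduced by their presence. All remaining steps are routine geometric-series bounds on the semiclassical errors across the inner zones (monotone decay in $\theta_n$) and direct algebra to identify the three $\eta$ exponents.
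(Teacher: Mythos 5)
Your overall architecture matches the paper: decompose via Lemma~\ref{lem:localization}, compare in the quantum zone via Lemma~\ref{lem:compone}, use Corollary~\ref{cor:scerror} in the inner semiclassical zones, keep the outer-zone trace differences as a standing term, and convert the integral via Corollary~\ref{cor:quantum-int}. Your computed balance points $\alpha_{\mathrm{loc}}$ and $\alpha_{\mathrm{sc}}$ and the resulting $\eta$-exponents are also correct, and the extra subtrahends for $s>\scr$ in the quantum and localization exponents are right.

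However, you misidentify the role of the constraint $s < \frac{43-\sqrt{769}}{10}$. You claim it ensures $\alpha_{\mathrm{loc}}>0$ and positivity of the $\eta$'s. Neither is the point: positivity of $\alpha_{\mathrm{loc}}=\frac{4(8-5s)}{5(2-s)(2-r)}$ already holds for all $s<\frac 8 5$, which is weaker; and the lemma explicitly does \emph{not} assert positivity of $\eta^*$ (that is deferred to the main theorems, whose stronger hypothesis $s<\frac{62}{45}$ is what yields $\eta^*>0$). The actual role of the constraint is to guarantee, for all positive $r$, that $\alpha_{\mathrm{sc}},\alpha_{\mathrm{loc}} \geq \frac{2}{8-s}$, which is needed so that in the innermost semiclassical zone $\theta_N = \alpha$ falls into the first row of Corollary~\ref{cor:scerror} for $d=3$ (the row with condition $\theta_n \geq \frac{2}{8-s}$ and error exponent $-2+\theta_n(1-2s)-A\epsilon$). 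Concretely, demanding $\alpha_{\mathrm{loc}} \geq \frac{2}{8-s}$ as $r\to 0^+$ is the inequality $5s^2-43s+54\geq 0$, whose relevant root is $s=\frac{43-\sqrt{769}}{10}$. You instead only checked the companion condition $\alpha\leq\frac{2}{2-s}$; the lower bound $\alpha\geq\frac{2}{8-s}$ is the binding constraint that produces this numerical threshold. Without that check, you have not justified applying the intended error estimate in the zone that dominates, and the claimed exponent $\eta_{\mathrm{sc}}$ would not be available. The rest of the argument (the tail bookkeeping, the geometric-series control, the $s\leq\frac 6 5$ regime with $\eta_{\mathrm{loc}}\leq\eta_{\mathrm{sc}}$, and the cutoff exponent) is consistent with the paper.
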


\begin{proof}
The proof is analogous, with
\begin{align*}
\alpha_{\mathrm{sc}} &= \frac{5}{10 s - 5 r + 1} \\
\alpha_{\mathrm{loc}} &= \frac{2}{2 - r}
\end{align*}
for $s \leq \frac 6 5$, in which case it is easily verified that $0< \eta_{\mathrm{loc}} \leq \eta_{\mathrm{sc}}$ for all admissible values of $r$, and with
\begin{align*}
\alpha_{\mathrm{sc}} &= \frac{2 (5 s - 4)}{(2 - s) (10 s -5 r + 1)} \\
\alpha_{\mathrm{loc}} &= \frac{4 (8 - 5 s)}{5 (2 - s)(2 - r)}
\end{align*}
for $s > \frac 6 5$. The constraint $s < \frac{43 - \sqrt{769}}{10}$ ensures that, for arbitrary positive $r$,
\[ \alpha_{\mathrm{sc}} , \alpha_{\mathrm{loc}} \geq \frac{2}{8 - s} . \]
\end{proof}

\subsection{The outer zones}
\label{sect:outerzones}
We saw that, while the localization error becomes less significant with decreasing zone index $n$ under mild conditions, the semiclassical error grows (unless $S$ is very large). Therefore, we have to cut it off and apply the coherent state analysis only to zones with exponents $\theta_n \geq - \omega$. We have to account for the error terms
\begin{multline*}
h^{-d}\,\Bigg\vert \int \dd\mbf x\; \big( [ V_1(\mbf x) ]_-^{1 + \frac d 2} - [ V_2 (\mbf x ) ]_-^{1 + \frac d 2 } \big)  \sum_{n < - \omega / \epsilon } (\psc n)^2 (\mbf x) \Bigg\vert \\*
\leq h^{-d} \smashoperator{\int\limits_{\nx \geq h^{- \omega }}}\dd\mbf x\; \big\vert [ V_1(\mbf x) ]_-^{1 + \frac d 2} - [ V_2 (\mbf x ) ]_-^{1 + \frac d 2 } \big\vert
\end{multline*}
and
\[ \sum_{n < -\omega / \epsilon } \big(\tr [\psc n (- h^2 \Delta + V_1  ) \psc n]_- - \tr [ \psc n (-h^2 \Delta+ V_2  ) \psc n ]_- \big) , \]
as well as for the outermost semiclassical error
\[ \bigo (h^{-d + \eta_{\mathrm{cutoff}}-A \epsilon})\quad\text{with}\quad \eta_{\mathrm{cutoff}} = \LEFTRIGHT\lbrace. { \begin{array}{ll}
\frac 2 3 - \big(\frac 4 3 - S \big)\,\omega : & \text{for}\;d = 2 \\
\frac 2 3 - \big(\frac 7 3 - \frac 3 2 S \big)\,\omega :& \text{for}\; d = 3 .
\end{array}} \]
We bound
\[ h^{-d} \smashoperator{\int\limits_{\nx \geq h^{-\omega}}}\dd\mbf x\,[V(\mbf x)]_-^{1+\frac d 2} = \bigo (h^{- d + \omega d \frac{S - \scr}{\scr }}) \]
and, by the Lieb--Thirring inequality,
\[ \sum_{n < - \omega / \epsilon } \Tr [\psc n (-h^2 \Delta + V  ) \psc n]_- \leq C h^{-d}\smashoperator{\int\limits_{\nx \geq h^{-\omega}}}\dd\mbf x\,[V(\mbf x)]_-^{1+\frac d 2} = \bigo (h^{- d + \omega d \frac{S - \scr}{\scr }}) .  \]
We distinguish two cases. If
\[  S > \LEFTRIGHT\lbrace.{ \begin{array}{ll} \frac 4 3 : & \text{for}\;d = 2  \\ \frac{14} 9 :&\text{for}\;d = 3 ,  \end{array} } \]
then $\omega$ can be chosen arbitrarily large, hence also $\eta_{\mathrm{cutoff}}$ can be made arbitrarily large, and all the errors in this section are of lower order than the other errors. On the other hand, if
\[  S \in \LEFTRIGHT\lbrace. {\begin{array}{ll} (1, \frac 4 3 ] : & \text{for}\;d = 2  \\ ( \frac 6 5 , \frac{14} 9 ] :&\text{for}\;d = 3 ,  \end{array} } \]
$\omega$ must be chosen so as to balance the errors,
\[ \omega d \frac{S - \scr }{\scr } = \eta_{\mathrm{cutoff}} , \]
which is solved for
\[ \omega = \frac{\frac 2 3}{S - \frac 2 3} , \]
and with that choice the error exponent becomes
\[ \eta_{\mathrm{cutoff}} = \LEFTRIGHT\lbrace.{ \begin{array}{ll} \displaystyle \frac 4 3 \frac{S - 1}{S - \frac 2 3}:&\text{for}\; d = 2 \\[1em]
\displaystyle \frac 5 3 \frac{ S - \frac 6 5}{ S - \frac 2 3 }:&\text{for}\; d= 3 . \end{array}   } \]

\subsection{Proofs of the main theorems}
As anticipated earlier, for $\eta < \eta^*$, put
\[  \epsilon := \frac{\eta^* - \eta }{A} .  \]
The main theorems then follow essentially from Lemmas~\ref{lem:m2c1} -- \ref{lem:m3} and the choice of $\omega$ delinated in the previous section. Note that the conditions on $r$ in the main theorems are all stronger than the condition in those lemmas. It only remains to prove the guarantees on the positivity of $\eta^*$. We start with the three-dimensional case, where only the case $s > \frac 6 5$ is non-trivial. Since $\eta_{\mathrm{cutoff}}$ is manifestly positive (and involves only $S$), it is left to ensure that $\eta_{\mathrm{sc}}$ and $\eta_{\mathrm{loc}}$ are positive. By factorization, we obtain the expressions
\begin{align*} \eta_{\mathrm{sc}} &= \frac{175 r s-250 r-350 s^2+425 s+82}{5 (2-s) (10 s - 5 r + 1)} \\
\eta_{\mathrm{loc}} &= \frac{175 r s-250 r-270 s+372}{25 (2 - s) (2 - r)} .  \end{align*}
We get $\eta_{\mathrm{sc}} > 0$ for
\[ r < \frac{-350 s^2+425 s+82}{250-175 s} \qquad \big(\!= 0\;\text{for}\;s = \frac{85 + 3 \sqrt{1313}}{140} \simeq 1.3836 \big), \]
and $\eta_{\mathrm{loc}} > 0$ for
\[ r < \frac{6 (62 - 45 s)}{25 (10 - 7 s)} \qquad \big(\!= 0\;\text{for}\;s = \frac{62}{45} = 1.377\dots \big) . \]
It turns out that the maximal $r$ from the latter condition is, for all $s < \frac{62}{45}$, smaller than that from the former condition. 

For the two-dimensional case,
\begin{align*}
\eta_{\mathrm{sc}} &= \frac{2 (6 - 2 r - 10 s + r s + 5 s^2)}{(2 - s) (5s - 1 - 3r )}  \\
\eta_{\mathrm{loc}} &= \frac{2 (5 - 4 r - 4 s + 3 r s)}{(2-s)(2-r)} .
\end{align*}
We have $\eta_{\mathrm{sc}} > 0$ for
\[ r < \frac{-5 s^2+8 s-2}{2-s} \qquad \big(\!= 0\;\text{for}\;s = \frac{4 + \sqrt 6}{5} \simeq 1.2899  \big) , \]
$\eta_{\mathrm{loc}} > 0$ for
\[ r < \frac{5 - 4 s}{4 - 3 s} \qquad \big(\!= 0\;\text{for}\;s = \frac{5}{4} = 1.25 \big) . \]

\end{document}